\newtheorem{theorem}{Theorem}
\newtheorem{lemma}[theorem]{Lemma}
\newtheorem{corollary}[theorem]{Corollary}
\newtheorem{example}[theorem]{Example}
\newtheorem{remark}[theorem]{Remark}
\newtheorem{definition}[theorem]{Definition}
\renewcommand{\qed}{\hfill\ding{113}}
\newproof{proof}{Proof}
\newcommand{\nb}[1]{}
\newcommand{\nz}[1]{}
\newcommand{\LTL}{\textsl{LTL}}
\newcommand{\DL}{\textsl{DL-Lite}}
\newcommand{\EL}{\mathcal{EL}}
\newcommand{\OWLQL}{\textsl{OWL\,2\,QL}}
\newcommand{\LTLpr}{\textsl{Prior-LTL}}
\newcommand{\LTLkr}{\textsl{Krom-LTL}}
\newcommand{\LTLho}{\textsl{Horn-LTL}}
\newcommand{\Xallop}{^{\smash{\Box\raisebox{1pt}{$\scriptscriptstyle\bigcirc$}}}}
\newcommand{\Xnext}{^{\smash{\raisebox{1pt}{$\scriptscriptstyle\bigcirc$}}}}
\newcommand{\Xbox}{^{\smash{\Box}}}
\newcommand{\NLogSpace}{\textsc{NLogSpace}}
\newcommand{\LogSpace}{\textsc{LogSpace}}
\newcommand{\NP}{\textsc{NP}}
\newcommand{\NL}{\textsc{NL}}
\newcommand{\coNP}{\textsc{coNP}}
\newcommand{\PSpace}{\textsc{PSpace}}
\newcommand{\ExpSpace}{\textsc{ExpSpace}}
\newcommand{\NCo}{{{\ensuremath{\textsc{NC}^1}}}}
\newcommand{\ACz}{{\ensuremath{\textsc{AC}^0}}}
\newcommand{\LogTime}{\textsc{LogTime}}
\newcommand{\ExpTime}{\textsc{ExpTime}}
\newcommand{\PTime}{\textsc{P}}
\newcommand{\FO}{\textup{FO}}
\newcommand{\MSO}{\textup{MSO}}
\newcommand{\MFO}{\textup{MFO}}
\newcommand{\FOE}{\textup{FO}(<,\equiv)}
\newcommand{\FOEZ}{\smash{\textup{FO}^{\mathbb Z}}(<,\equiv)}
\newcommand{\bool}{\textit{bool}}
\newcommand{\horn}{\textit{horn}}
\newcommand{\krom}{\textit{krom}}
\newcommand{\core}{\textit{core}}
\newcommand{\FF}{{\scriptscriptstyle F}}
\newcommand{\PP}{{\scriptscriptstyle P}}
\newcommand{\nm}[1]{\textit{#1}}
\newcommand{\nxt}{{\ensuremath\raisebox{0.25ex}{\text{\scriptsize$\bigcirc$}}}}
\newcommand{\Rnext}{\nxt_{\!\scriptscriptstyle F}}
\newcommand{\Lnext}{\nxt_{\!\scriptscriptstyle P}}
\newcommand{\Rdiamond}{\Diamond_{\!\scriptscriptstyle F}}
\newcommand{\Ldiamond}{\Diamond_{\!\scriptscriptstyle P}}
\newcommand{\Rbox}{\rule{0pt}{1.4ex}\Box_{\!\scriptscriptstyle F}}
\newcommand{\Lbox}{\rule{0pt}{1.4ex}\Box_{\!\scriptscriptstyle P}}
\newcommand{\Si}{\mathbin{\mathcal{S}}}
\newcommand{\U}{\mathbin{\mathcal{U}}}
\newcommand{\Z}{\mathbb{Z}}
\newcommand{\N}{\mathbb{N}}
\newcommand{\Mmf}{\mathfrak{M}}
\newcommand{\A}{\mathcal{A}}
\newcommand{\TO}{\mathcal{O}}
\newcommand{\C}{\mathfrak{C}}
\newcommand{\K}{\mathcal{K}}
\newcommand{\can}{{\C_{\TO,\A}}}
\newcommand{\q}{\boldsymbol q}
\renewcommand{\rq}{\boldsymbol Q}
\newcommand{\len}{\boldsymbol l}
\newcommand{\tem}{\mathsf{tem}}
\newcommand{\ans}{\mathsf{ans}}
\newcommand{\cl}{\mathsf{cl}}
\newcommand{\subq}{\mathsf{sub}_{\q}}
\newcommand{\subo}{\mathsf{sub}_{\TO}}
\newcommand{\nfao}{\mathfrak{A}_{\TO}}
\newcommand{\frag}{{\boldsymbol{c}}}
\newcommand{\op}{{\boldsymbol{o}}}
\newcommand{\OMIQ}{OMIQ}
\newcommand{\OMPIQ}{OMPIQ}
\newcommand{\OMQ}{OMQ}
\newcommand{\OMAQ}{OMAQ}
\newcommand{\avec}[1]{\boldsymbol{#1}}
\newcommand{\SA}{\mathfrak S_\A}
\newcommand{\GOA}{\mathfrak G_{\TO,\A}}
\newcommand{\type}{\tau}
\newcommand{\parity}{\textsc{Parity}}
\newcommand{\plus}{\textsc{plus}}
\newcommand{\TIMES}{\textsc{times}}
\newcommand{\kpar}{k}
\newcommand{\lang}{\mathcal{L}}
\newcommand{\tp}{\tau}
\newcommand{\tpset}{\mathfrak{T}}
\newcommand{\pathf}{\textsf{path}}
\newcommand{\typef}{\textsf{type}}
\newcommand{\suc}{\textsf{suc}}
\newcommand{\entailsf}{\textsf{entails}}
\newcommand{\intervalf}{\textsf{interval}}
\newcommand{\RPR}{\textup{RPR}}
\begin{document}

\begin{frontmatter}



\title{First-Order Rewritability of Ontology-Mediated Queries in\\ Linear Temporal Logic}


\author[bz]{Alessandro Artale}
\ead{artale@inf.unibz.it}

\author[bbk]{Roman Kontchakov}
\ead{roman@dcs.bbk.ac.uk}

\author[dr]{Alisa Kovtunova}
\ead{alisa.kovtunova@tu-dresden.de}

\author[bbk]{Vladislav Ryzhikov}
\ead{vlad@dcs.bbk.ac.uk}

\author[liv]{Frank Wolter}
\ead{wolter@liverpool.ac.uk}

\author[bbk,hse]{Michael Zakharyaschev}
\ead{michael@dcs.bbk.ac.uk}

\address[bz]{KRDB Research Centre, Free University of Bozen-Bolzano, Italy}
\address[bbk]{Department of Computer Science and Information Systems, Birkbeck, University of London, UK}
\address[dr]{Chair for Automata Theory, Technische Universit\"at Dresden, Germany}
\address[liv]{Department of Computer Science, University of Liverpool, UK}
\address[hse]{HSE University, Moscow, Russia}


\begin{abstract}
We investigate ontology-based data access to temporal data.
We consider temporal ontologies given in linear temporal logic \LTL{} interpreted
over discrete time $(\Z,<)$. Queries are given in \LTL{} or $\MFO(<)$, monadic first-order logic with a built-in linear order. Our concern is first-order rewritability of ontology-mediated queries (OMQs) consisting of a temporal ontology and a query.
By taking account of the temporal operators used in the ontology
and distinguishing between ontologies given in full \LTL{} and its
core, Krom and Horn fragments, we identify a hierarchy of
OMQs with atomic queries by proving rewritability into either
$\FO(<)$, first-order logic with the built-in linear order, or
$\FOE$, which extends $\FO(<)$ with the standard arithmetic predicates $x \equiv 0\, (\text{mod}\ n)$, for any fixed $n > 1$, or
$\FO(\RPR)$, which extends $\FO(<)$ with relational primitive recursion.
In terms of circuit complexity, $\FOE$- and $\FO(\RPR)$-rewritability guarantee OMQ answering in uniform \ACz{} and, respectively, \NCo.

We
obtain similar hierarchies for more expressive types of queries: positive \LTL{}-formulas, monotone \mbox{$\MFO(<)$-} and arbitrary $\MFO(<)$-formulas. Our results are directly applicable if the temporal data to be accessed is one-dimensional; moreover, they lay  foundations for investigating ontology-based access using combinations of temporal and description logics over
two-dimensional temporal data.
\end{abstract}


\begin{keyword}
Linear temporal logic \sep description logic \sep ontology-based data access \sep first-order rewritability \sep data complexity.
\end{keyword}

\end{frontmatter}



\section{Introduction}
\label{intro}

Ontology-mediated query answering has recently become one of the most successful applications of description logics (DLs) and semantic technologies.
Its main aim is to facilitate user-friendly access to possibly heterogeneous, distributed and incomplete data.
To this end, an ontology is employed to provide (\emph{i}) a convenient and uniform vocabulary for formulating queries and (\emph{ii}) a conceptual model of the domain for capturing background knowledge and obtaining more complete answers. Thus, instead of querying data directly by means of convoluted database queries, one can use ontology-mediated queries (OMQs, for short) of the form $\q=(\TO,\varphi)$, where $\TO$ is an ontology and $\varphi$ a query formulated in the vocabulary of $\TO$. Under the standard certain answer semantics for OMQs, the answers to~$\q$ over a data instance~$\A$ are exactly those tuples of individual names from $\A$ that satisfy $\varphi$ in every model of $\TO$ and~$\A$. Because of this open-world semantics, answering OMQs can be computationally much harder than evaluating standard database queries. For example, answering an atomic query $A(x)$ using an ontology in the standard description logic $\mathcal{ALC}$ can be \coNP-hard for data complexity---the complexity measure (adopted in this paper) that regards the OMQ as fixed and the data instance as the only input to the OMQ answering problem. For this reason, weaker description logics (DLs) have been developed to enable not only tractable OMQ answering but even a reduction of OMQ answering to evaluation of standard relational database queries directly over the data, which is in \ACz{} for data complexity. In fact, the popular and very successful \DL{} family of DLs was designed so as to ensure rewritability of OMQs with conjunctive queries into first-order logic (FO) queries, and so to SQL. \DL{} underpins the W3C standard ontology language \OWLQL~\cite{DBLP:journals/jar/CalvaneseGLLR07,DBLP:journals/jair/ArtaleCKZ09}. For applications of OMQ answering with \OWLQL{}, the reader is referred to~\cite{DBLP:conf/fois/AntonioliCCGLLPVC14,FishMarkPaper,DBLP:journals/computer/GieseSVWHJLRXOR15,DBLP:journals/semweb/CalvaneseGLLPRRRS11,DBLP:journals/eaai/CalvaneseLMRRR16,DBLP:conf/semweb/Rodriguez-MuroKZ13,DBLP:journals/internet/SequedaM17,DBLP:conf/semweb/HovlandKSWZ17}; for a recent survey, consult~\cite{IJCAI-18}.

\DL{} and \OWLQL{} were designed to represent knowledge about \emph{static} domains and are not suitable when the data and the vocabulary the user is interested in are essentially temporal. To extend OMQ answering to temporal domains, the ontology language needs to be extended by various temporal constructs studied in the context of temporal representation and reasoning~\cite{Gabbayetal94,gkwz,DBLP:books/cu/Demri2016}. In fact, combinations of DLs with
temporal formalisms have been widely investigated since the pioneering work of Schmiedel~\cite{DBLP:conf/aaai/Schmiedel90} and Schild~\cite{DBLP:conf/epia/Schild93} in the early 1990s; we refer the reader to~\cite{gkwz,Baader:2003:EDL:885746.885753,DBLP:reference/fai/ArtaleF05,DBLP:conf/time/LutzWZ08} for surveys and \cite{DBLP:conf/dlog/PagliarecciST13,DBLP:journals/tocl/ArtaleKRZ14,DBLP:conf/kr/Gutierrez-BasultoJ014,DBLP:conf/ijcai/Gutierrez-Basulto15,DBLP:conf/ecai/Gutierrez-Basulto16,DBLP:conf/dlog/BaaderBKOT17} for more recent developments. However, the main reasoning task
targeted in this line of research has been knowledge base satisfiability rather than OMQ answering, with the general aim of probing various combinations of temporal and DL constructs that ensure decidability of satisfiability with acceptable combined complexity (which is the complexity measure that regards both the ontology and data instance as input).

Motivated by the success of \DL{} and the paradigm of FO-rewritability
in OMQ answering over static domains, our \emph{ultimate aim} is the study of FO-rewritability of OMQs with temporal constructs in both ontologies and queries
over temporal databases. To lay the foundations for this project, in this article
we consider the basic scenario of querying timestamped `propositional' data in a synchronous system with a centralised clock. We thus do not yet consider general
temporal relational data but focus on `non-relational' pure temporal data.
We use the standard discrete time model with the (positive and negative)
integers~$\Z$ and the order $<$ as precedence relation. The most basic
and fundamental temporal language for the discrete time model is the linear
temporal logic \LTL{} with the temporal operators
$\Rnext$ (at the next moment of time), $\Rdiamond$ (eventually), $\Rbox$ (always in the future),
$\mathcal{U}$ (until), and their past-time counterparts $\Lnext$ (at the previous moment),
$\Ldiamond$ (some time in the past), $\Lbox$ (always in the past) and $\mathcal{S}$ (since); see~\cite{DBLP:books/daglib/0077033,Gabbayetal94,DBLP:books/cu/Demri2016} and references therein. \LTL{} and its fragments are particularly
natural for our study of FO-rewritability as, by the celebrated Kamp's Theorem, \LTL{} is expressively complete in the sense that anything that can be said in $\MFO(<)$, monadic first-order logic with the precedence relation $<$ over the discrete (in fact, any Dedekind complete) model of time, and with reference to a single time point can also be  expressed in \LTL{}~\cite{phd-kamp,DBLP:journals/corr/Rabinovich14}.

Thus, in this article, we conduct an in-depth study of FO-rewritability and data complexity of OMQs with  ontologies formulated in fragments of (propositional) \LTL{} and queries given as \LTL- or $\MFO(<)$-formulas, assuming that (\emph{i})~ontology axioms hold at all times and (\emph{ii}) data instances are finite sets of facts
of the form $A(a,\ell)$ saying that $A$ is true of the individual $a$
at the time instant $\ell \in \Z$.
To illustrate, suppose that data instances contain facts about the status of a research article submitted to a journal using predicates for the events $\nm{Submission}$, $\nm{Notification}$,
$\nm{Accept}$, $\nm{Reject}$, $\nm{Revise}$ and $\nm{Publication}$, and that temporal domain knowledge about these events is formulated in an ontology~$\TO$ as follows:
\begin{equation}\label{pub1}
\nm{Notification} \leftrightarrow \nm{Reject} \vee \nm{Accept} \vee \nm{Revise}
\end{equation}
states that, at any moment of time, every notification is either a reject, accept or revision notification, and that it can only be one of them:
\begin{equation}\label{pub2}
\nm{Reject}\land \nm{Accept} \to \bot, \qquad
\nm{Revise}\land \nm{Accept} \to \bot, \qquad
\nm{Reject}\land \nm{Revise} \to \bot.
\end{equation}
The ontology $\TO$ says that any event $P$ above, except $\nm{Notification}$
and $\nm{Revise}$, can happen only once for any article:
\begin{equation}\label{pub3}
P \to \neg\Ldiamond P \land \neg\Rdiamond P.
\end{equation}
It contains obvious necessary preconditions for publication and notification:
\begin{equation}\label{pub4}
\nm{Publication} \to \Ldiamond \nm{Accept}, \qquad
\nm{Notification} \to \Ldiamond \nm{Submission},
\end{equation}
and also the post-conditions (eventual consequences) of acceptance, submission and a revision notification (for simplicity, we assume that after a revision notification the authors always eventually receive a notification regarding a revised version):
\begin{equation}\label{pub5}
\nm{Accept} \to \Rdiamond \nm{Publication}, \qquad
\nm{Submission} \to \Rdiamond \nm{Notification}, \qquad
\nm{Revise} \to \Rdiamond \nm{Notification}.
\end{equation}
Finally, the ontology $\TO$ states that acceptance and rejection notifications are final:
\begin{equation}\label{pub6}
	\nm{Accept} \lor  \nm{Reject} \to \neg\Rdiamond \nm{Notification}.
\end{equation}
Consider now the following set $\A$ of timestamped facts:
\begin{equation*}
\nm{Notification}(a,\,\text{Oct2017}), \qquad \nm{Revise}(a,\,\text{Oct2019}), \qquad \nm{Publication}(a,\,\text{Dec2019}).
\end{equation*}
Thus, according to $\A$, the authors received a notification about their article
$a$ in October 2017,
they received a revise notification about $a$ in October 2019, and
article $a$ was published in December 2019.
In the context of this example, it is natural to identify months in consecutive years with moments of time in the discrete time model $(\Z,<)$: for example, October 2017 is 0, October~2019 is 24,  etc.

To illustrate OMQ answering, consider first the (atomic) \LTL-formula~$\varphi_1 = \nm{Revise}$. By \eqref{pub1} and \eqref{pub6}, the pairs $(a,\,\text{Oct2017})$ and $(a,\,\text{Oct2019})$ are the certain answers to the OMQ $\q_1=(\TO,\varphi_1)$; see Fig.~\ref{fig:submission-cycle}. Now, consider the \LTL-formula~$\varphi_2=\Ldiamond \nm{Submission}$, which we understand as a query asking for months when the article had been previously submitted. Despite the fact that $\A$ contains no facts about  predicate $\nm{Submission}$, the axioms~\eqref{pub4} in $\TO$ imply that $\varphi_2$ is true for $a$ at all points of the infinite interval~\mbox{$[\text{Oct2017},+\infty)$} in every model of $\TO$ and~$\A$, and so all pairs $(a,n)$ with $n\in [\text{Oct2017},+\infty)$ might be regarded as certain answers to the OMQ~\mbox{$\q_2=(\TO,\varphi_2)$}. However, as usual in database theory, we are only interested in finitely many answers from the \emph{active domain}, which can be defined as the smallest convex subset of $\Z$ containing all the timestamps from the data instance. Thus, the certain answers to $\q_2$ over~$\A$ are the pairs $(a,n)$ with $n$ in the closed interval $[\text{Oct2017}, \text{Dec2019}]$, which is shown in Fig.~\ref{fig:submission-cycle} by shading.

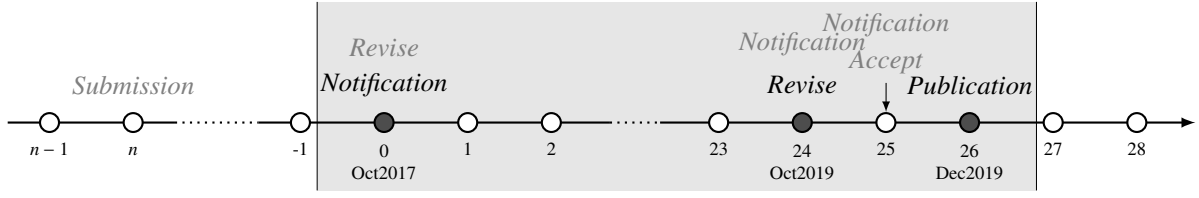
\begin{figure}
\centering%
\begin{tikzpicture}[nd/.style={draw,thick,circle,inner sep=0pt,minimum size=2.5mm,fill=white},xscale=1.1,>=latex]
\fill[gray!20] (4.2,-0.9) rectangle +(8.6,2.5);
\draw[ultra thin] (4.2,-0.9) -- ++(0,2.5);
\draw[ultra thin] (12.8,-0.9) -- ++(0,2.5);
\draw[thick] (0.5,0) -- (2.5,0);
\draw[thick,dotted] (2.5,0) -- (3.5,0);
\draw[thick] (3.5,0) -- (7.7,0);
\draw[thick,dotted] (7.7,0) -- (8.3,0);
\draw[thick,->] (8.3,0) -- (14.7,0);
\node[nd,label=below:{\scriptsize $n-1$}] at (1,0) {};
\node[nd,label=below:{\scriptsize \phantom{$1$}$n$\phantom{$1$}}] at (2,0) {};
\node[nd,label=below:{\scriptsize -1}] at (4,0) {};
\node at (2,0.5) {\textcolor{gray}{$\nm{Submission}$}};
\node[nd,fill=black!70,label=below:{\scriptsize \begin{tabular}{c}0\\Oct2017\end{tabular}}] at (5,0) {};
\node[nd,label=below:{\scriptsize 1}] at (6,0) {};
\node[nd,label=below:{\scriptsize 2}] at (7,0) {};
\node[nd,label=below:{\scriptsize 23}] at (9,0) {};
\node[nd,fill=black!70,label=below:{\scriptsize \begin{tabular}{c}24\\Oct2019\end{tabular}}] at (10,0) {};
\node[nd,label=below:{\scriptsize 25}] (a0) at (11,0) {};
\node at (5,0.5) {$\nm{Notification}$};
\node at (5,1) {\textcolor{gray}{$\nm{Revise}$}};
\node[nd,fill=black!70,label=below:{\scriptsize \begin{tabular}{c}26\\Dec2019\end{tabular}}] at (12,0) {};
\node[nd,label=below:{\scriptsize 27}] at (13,0) {};
\node[nd,label=below:{\scriptsize 28}] at (14,0) {};
\node at (10,0.5) {$\nm{Revise}$};
\node at (10,1.05) {\textcolor{gray}{$\nm{Notification}$}};
\node at (12,0.5) {$\nm{Publication}$};
\node (a) at (11,0.8) {\textcolor{gray}{$\nm{Accept}$}};
\node at (11,1.3) {\textcolor{gray}{$\nm{Notification}$}};
\draw[->,thin] (a) -- (a0);
\end{tikzpicture}
\caption{An example illustrating models of $\A$ and $\TO$ (implied information is shown in grey).}\label{fig:submission-cycle}
\end{figure}

First-order rewritability makes it possible to find certain answers to such OMQs without
ontology reasoning, simply by evaluating their FO-rewritings directly over the data instance (with built-in predicates such as $<$ interpreted over the active domain). For the two OMQs above, FO-rewritings are as follows:
the FO-formula
\begin{equation*}
\rq_1(x,t) ~=~ \nm{Revise}(x,t) \ \ \vee \ \ \bigl(\nm{Notification}(x,t) \wedge \exists s\,\bigl((t<s) \wedge N(x,s)\bigr)\bigr),
\end{equation*}
where
\begin{equation}\label{auxP}
N(x,s) = \nm{Notification}(x,s) \vee \nm{Accept}(x,s) \vee \nm{Reject}(x,s) \vee \nm{Revise}(x,s),
\end{equation}
is an FO-rewriting of $\q_{1}$ in the sense that, for any given data instance $\A$, the pair $(a,n)$ with an  article $a$ in $\A$ and a moment of time $n$ from the active domain of~$\A$  is a certain answer to $\q_1$ over $\A$ iff $\rq_1(a,n)$ holds true in the finite FO-structure given by $\A$. Similarly,
\begin{equation*}
\rq_2(x,t) ~=~ \exists s \,\bigl((s<t) \wedge \nm{Submission}(x,s)\bigr) \  \vee  \ \exists s\, \bigl((s \leq t) \land N(x,s)\bigr) \  \vee \ \exists s\,\bigl([(s \leq t) \lor (s = t + 1)] \land \nm{Publication}(x,s)\bigr),
\end{equation*}
where $s \leq t$ abbreviates $\neg(t < s)$, $s = (t + 1)$ stands for $(t < s) \land \neg \exists s' \,\bigl((t < s') \land (s' < s)\bigr)$ and  $N(x,s)$ is given again by~\eqref{auxP},
is an FO-rewriting of $\q_{2}$. (Note that we use the `strict' semantics for $\Ldiamond$ and other temporal operators, which does not include the current point.)
Observe that, in addition to the symbols from the OMQs, these rewritings use the precedence relation $<$, but no arithmetic operations. Also note that we always
use a single variable, $x$, as the answer variable for individuals, and
that we do not quantify over individuals. Thus, we can (and will) drop $x$
from all  rewritings and identify this target language for rewritings with $\FO(<)$, first-order logic with a built-in linear order.

We identify expressive fragments of \LTL{} together with query languages for which \emph{every} OMQ is rewritable into $\FO(<)$. To illustrate one of our main results,
let $\LTLpr$
denote the fragment of $\LTL{}$ containing all formulas constructed using arbitrary Boolean connectives and temporal operators $\Lbox$, $\Ldiamond$, $\Rbox$ and $\Rdiamond$. This language was one of the first temporal logics developed and studied in philosophical logic, going back to Prior~\cite{prior:1956b,ono1980on,Burgess84,DBLP:conf/spin/Vardi08}.
The ontology $\TO$ above is given in $\LTLpr$. We obtain the following rewritability result:

\medskip
\noindent
\textbf{Theorem A.}
\emph{All \OMQ{}s with ontologies and queries in
$\LTLpr$ are $\FO(<)$-rewritable.}

\medskip

It follows that evaluating any such OMQ is in \ACz{} for data complexity, and that
 evaluating them can be delegated to standard relational database management systems.
 Not all OMQs are $\FO(<)$-rewritable. In particular, even very simple Horn ontologies with operators $\Rnext$ and $\Lnext$ give rise to OMQs that fall outside the scope of $\FO(<)$. Consider, for example, the ontology $\TO_2$ with two axioms
\begin{equation*}
\nm{Even} \to \Rnext\nm{Odd}, \qquad \nm{Odd}\to \Rnext\nm{Even}
\end{equation*}
saying, that every `even' time point is always followed by an `odd' one, and the other way round. Using the fact that the set of even numbers cannot be defined in $\FO(<)$~\cite{Straubing94,Libkin},
one can show that the OMQ $(\TO_{2},\nm{Even})$ has no rewriting in $\FO(<)$.  The even numbers can, however, be defined in the extension $\FOE$ of $\FO(<)$ with the standard numeric predicates $t \equiv 0 \pmod n$, for any fixed~$n > 1$. We identify fragments of \LTL{} and query languages for which every OMQ is rewritable in $\FOE$. As a main example, let $\LTLkr$ be the fragment of \LTL{} with the unary temporal operators $\Rnext$, $\Lnext$, $\Rbox$, $\Lbox$, $\Rdiamond$ and $\Ldiamond$, in which all axioms are \emph{binary}, that is, contain at most two predicates (as in $\nm{Even} \to \Rnext\nm{Odd}$ or $\Rnext A \lor \Lnext B$). We show the following:

\medskip
\noindent
\textbf{Theorem B.}
\emph{All OMQs with $\LTLkr$ ontologies and atomic queries
are $\FOE$-rewritable.}

\medskip

Note that any such OMQ is still in \ACz{} for data complexity and that,
 because of the standard support for basic arithmetic, evaluating them can be delegated to relational database management systems.
 Not all OMQs are \mbox{$\FOE$}-rewritable, however. Non-binary Horn ontologies with axioms such as $\Rnext B \land A \to C$ can express parity (the number of time points $n$ with $A(n)$ in the active domain is even), which cannot be defined in $\FO(<)$ extended with arbitrary arithmetic predicates~\cite{DBLP:journals/mst/FurstSS84}.

Thus, our fine-grained classification of rewritability of \LTL{} OMQs distinguishes between the different temporal operators that can occur in ontology axioms and also takes account of the Boolean (non-temporal) structure of the axioms by distinguishing between the core, Krom, Horn and full Boolean fragments\footnote{Such fragments have also proved to be useful for studying FO-rewritability and data complexity of OMQs in the \DL{} family of description logics~\cite{DBLP:journals/jar/CalvaneseGLLR07,DBLP:journals/jair/ArtaleCKZ09}.} of \LTL. To provide a systematic analysis, it is useful to work with fragments of \LTL{} given in the clausal normal form
\begin{equation}\label{normal-dl}
C_1 \land \dots \land C_k ~\to~ C_{k+1} \lor \dots \lor C_{k+m},
\end{equation}
where the $C_i$ are predicate names, possibly prefixed with operators $\Rnext$, $\Lnext$, $\Rbox$ and $\Lbox$. Suppose that $\op \in \{\Box,\nxt,\Box\nxt\}$ and $\frag \in \{\bool, \horn, \krom,\core\}$. We denote by $\LTL_{\frag}^{\op}$ the temporal logic with clauses of the form~\eqref{normal-dl}, where the $C_i$ can only use the (future and past) operators indicated in $\op$, and $m\leq 1$ if $\frag = \horn$; $k + m\leq 2$ if $\frag = \krom$; $k + m\leq 2$ and $m \leq 1$ if $\frag = \core$; and arbitrary $k$, $m$ if $\frag = \bool$.
It follows from~\cite{DBLP:journals/logcom/Fisher97} that every \LTL-ontology can be converted (possibly with a linear blowup and by introducing fresh predicates)
in a canonical way into clausal normal form giving the same answers to queries as the original one. Observe that $\LTLpr$ and $\LTLkr$ ontologies as introduced above can be converted into $\LTL_{\bool}\Xbox$ and, respectively, $\LTL_{\krom}\Xallop$ ontologies.

We consider the following hierarchy of queries in our \OMQ{}s. \emph{Atomic} \OMQ{}s
(\OMAQ{}s, for short) use queries of the form $A(t)$ with $A$ a predicate.\!\footnote{ Recall that we do not give the variable $x$ ranging over the individuals in the database.}
\OMQ{}s using arbitrary \LTL{} formulas as queries are called ontology-mediated \emph{instance} queries (or OMIQs), and \OMQ{}s using only \emph{positive} \LTL{}
formulas as queries are called ontology-mediated  \emph{positive} instance
queries (or OMPIQs). The queries introduced so far have exactly one (implicit)
answer variable. To generalise our results to queries of arbitrary arity, we
introduce the language of \OMQ{}s with any $\MFO(<)$-formulas playing the role of  queries, for example   
\begin{equation*}
\psi(t,t') ~=~ \nm{Revise}(t) \wedge \nm{Accept}(t') \wedge \forall s \, \big( (t < s < t') \rightarrow \neg \nm{Revise}(s)\big),
\end{equation*}
asking for all pairs $(t,t')$ such that $t$ is the last revision date before the acceptance date $t'$. We also consider \OMQ{}s with \emph{quasi-positive} $\MFO(<)$-formulas as
queries that are constructed using
$\wedge$, $\vee$, $\forall$, $\exists$, as well as \emph{guarded} $\forall$
such as, for example, in $\forall s\,((t < s < t') \to \varphi)$.  
We show that quasi-positive $\MFO(<)$-formulas capture exactly the \emph{monotone} $\MFO(<)$-formulas (that are  preserved under adding time points to the extension of predicates). We also show that \OMQ{}s with quasi-positive $\MFO(<)$-formulas as queries behave in exactly the same way as OMPIQs.
Our main result about these expressive queries is as follows:

\medskip
\noindent
\textbf{Theorem C.}
All OMQs with arbitrary $\MFO(<)$-queries are $\FO(\RPR)$-rewritable. All OMQ{}s with $\LTL_{\horn}\Xbox$ or $\LTL_{\core}\Xallop$ ontologies and quasi-positive $\MFO(<)$-queries are $\FO(<)$-rewritable or $\FOE$-rewritable, respectively. 

\medskip

We summarise our rewritability results in Table~\ref{LTL-table}.
It is to be noted that the $\FO(<)$-rewritability
result for \OMQ{}s with $\LTLpr$-ontologies and queries stated
in Theorem~A follows from the $\FO(<)$-rewritability
of all OMAQs using $\LTL_{\bool}\Xbox$ ontologies as answering a $\LTLpr$ query $C$ can be reduced to answering an atomic query $A$ by
adding the axiom $C \to A$ to the ontology. Using similar reductions,
one can also extend other query languages given in Table~\ref{LTL-table} in the obvious way.

As Table~\ref{LTL-table} shows, all of our OMQs are rewritable into the extension $\FO(\RPR)$  of $\FO(<)$ with relational primitive recursion~\cite{DBLP:journals/iandc/ComptonL90}.
This implies that answering them is in $\NCo \subseteq \LogSpace$ for data complexity (and thus can be performed by an efficient parallel algorithm~\cite{Arora&Barak09}); it also means that answering OMQs can be done using finite automata~\cite{Libkin}. 
In terms of circuit complexity, both $\FO(<)$- and $\FOE$-rewritability of a given OMQ mean that answering this OMQ is in \LogTime{}-uniform $\ACz$ for data complexity~\cite{Immerman99}.
Note that the SQL:1999 ISO standard contains a \textsc{with recursive} construct that can represent various FO-queries with relational primitive recursion such as the query in Example~\ref{exampleBvNB:2} below, which cannot be expressed in FO without recursion.

\begin{table}[t]
\centering%
\tabcolsep=8pt%
\begin{tabular}{ccccc}\toprule
     \rule[-3pt]{0pt}{12pt}
     &\multicolumn{2}{c}{\OMAQ{}s}
     & \multicolumn{2}{c}{OMPIQs / quasi-positive $\MFO(<)$-queries}
     \\
     \rule[-3pt]{0pt}{13pt}$\frag$
     & $\LTL_\frag\Xbox$ & $\LTL_\frag\Xnext$ {\footnotesize and} $\LTL_\frag\Xallop$ & $\LTL_\frag\Xbox$ &  $\LTL_\frag\Xnext$ {\footnotesize and} $\LTL_\frag\Xallop$
     \\\midrule%
     \bool & \multirow{4}{*}{$\FO(<)$ {\scriptsize [Th.~\ref{thm:box}]}} &   $\FO(\RPR)$, {\small \NCo-hard}   & \multirow{2}{*}{$\FO(\RPR)$, {\small \NCo-hard} {\scriptsize [Th.~\ref{ex:NFA}]}}  & \multirow{3}{*}{FO(RPR) {\scriptsize [Th.~\ref{fullLTL},~\ref{fullLTL-OMQ}]}, {\small \NCo-hard}}  \\\cmidrule[0pt](lr){3-3}
     \krom &  & $\FOE$ {\scriptsize [Ths.~\ref{thm:krom-next},~\ref{thm:krom-nextbox}]}   &&     \\\cmidrule(lr){4-4}
     \horn & & $\FO(\RPR)$, {\small \NCo-hard} {\scriptsize [Th.~\ref{ex:NFA}]}  & \multirow{2}{*}{$\FO(<)$ {\scriptsize [Th.~\ref{thm:LTLrewritability},~\ref{thm:rewritability-monFO}]}}  & \\\cmidrule(lr){5-5}
     \core & & $\FOE$ & & $\FOE$ {\scriptsize [Th.~\ref{thm:LTLrewritability},~\ref{thm:rewritability-monFO}]}\\\bottomrule
   \end{tabular}
\caption{Rewritability and data complexity of \LTL{} \OMQ{}s.}
\label{LTL-table}
\end{table}

 It is known that under the open-world semantics of OMQs, answers to queries
 containing negation are often rather uninformative. For example, if one uses the query $\nm{Revise} \wedge \Lbox \neg \nm{Notification}$ mediated by the publication ontology~$\TO$ above to retrieve
 the date of the \emph{first} revise notification for an article, then one will only
 receive an answer if the submission date is just one time instant before the first revision notification. In classical OMQ answering, a way to obtain more meaningful answers to FO-queries is to interpret negation under the epistemic semantics as proposed by~\citeauthor{CalvaneseGLLR07}~\cite{CalvaneseGLLR07}. Under this semantics, we regard $\neg A(t)$ as true if $A(t)$ is not entailed. (The same semantics is used in the standard query language SPARQL for RDF datasets when interpreted with OWL ontologies under the entailment regimes~\cite{GlimmOgbuji13}.)
 To illustrate, the query~$\varphi_3= \nm{Revise}(t) \wedge \Lbox \neg \nm{Notification}(t)$ will now return the time instant of the first revise notification in the database.
It is well known that extending the expressive power of a query language in this way typically does not lead to an increase in data complexity. We confirm that this is the case for the query languages considered in this article too.

The plan of the article is as follows. In Section~\ref{sec:tdl}, we introduce the syntax and semantics of our ontology and query languages and define the basic
notions that are required in the sequel.
In Section~\ref{sec:LTL}, we show that \OMQ{}s with arbitrary \LTL{} ontologies and queries   are $\FO(\RPR)$-rewritable (and so also rewritable to monadic second-order logic $\MSO(<)$) and in $\NCo{}$ for  data complexity. We also establish $\NCo{}$-hardness results. The next five sections are devoted to the proofs
of Theorems~A, B and C formulated above. In Section~\ref{sec:ltl-tomaq:stutter}, we use partially ordered automata~\cite{DBLP:conf/dlt/SchwentickTV01} to prove Theorem~A.
In Sections~\ref{sec:ltl-tomaq:unary} and \ref{sec:ltl-tomaq:kromboxnext},
we use unary automata~\cite{chrobak-ufa} to prove Theorem~B.
In Sections~\ref{sec:ltl-tomiq} and \ref{sec:ltl-tomiq:2}, we use canonical models
to prove Theorem~C for OMPIQs. We lift our results for \LTL{}-queries
to MFO($<$)-queries in Section~\ref{FO-OMQs}. This completes the proof of Theorem~C. We also show an analogue of Kamp's Theorem for monotone formulas. Finally, in Section~\ref{epistecic}, we briefly discuss the epistemic semantics
for temporal queries. We conclude with a summary of the obtained results and a discussion of future work.


\subsection{Related Work}

As mentioned above, our approach to ontology-mediated query answering over temporal
data is motivated by the success of the ontology-based data management paradigm for
atemporal data using description logics or rule-based languages~\cite{DBLP:journals/jar/CalvaneseGLLR07,CaliGL12,DBLP:journals/ai/BagetLMS11,DBLP:conf/rweb/BienvenuO15}.
We first discuss the relationship between our results for `propositional' temporal data and the results obtained over the past 15 years for the rewritability and data complexity of ontology-mediated querying using `atemporal' description logics.

For standard DLs such as $\mathcal{ALCHI}$, one can prove the following dichotomy for the data complexity of answering an OMQ with an atomic query: evaluating such an OMQ is either in $\ACz$ or \LogSpace-hard~\cite{DBLP:journals/tods/BienvenuCLW14}. Thus, research has focused on either
the combined or parameterised complexity of OMQs that are in $\ACz$
in data complexity~\cite{DBLP:conf/dlog/BienvenuKKRZ17,DBLP:conf/pods/BienvenuKKPRZ17,DBLP:journals/jacm/BienvenuKKPZ18} or
on classifying further the data complexity of OMQs that are known to be \LogSpace-hard~\cite{DBLP:journals/tods/BienvenuCLW14,DBLP:conf/ijcai/LutzS17}.
For propositional temporal data, the situation is rather different. Indeed,
as $\ACz \subsetneq \NCo \subseteq \LogSpace$~\cite{Straubing94,Arora&Barak09}, the complexity class
of many of our OMQs does not play any role in standard atemporal ontology-mediated query answering.

We remind the reader that, in \DL{} and standard extensions such as $\mathcal{ALCHI}$, there is no need to distinguish between different target languages for FO-rewritings.
In fact, it is known that an OMQ with an $\mathcal{ALCHI}$ ontology
and a union of conjunctive queries (UCQ) is FO-rewritable iff it
is UCQ-rewritable~\cite{DBLP:journals/tods/BienvenuCLW14,DBLP:conf/pods/HernichLPW17}. In contrast, for the temporal data considered in this article and for \LTL{} ontologies, we show that there is a difference between rewritability into the first-order languages $\FO(<)$ and $\FOE$.

Finally, we remind the reader that, in \DL{} and other DLs, negation in queries results
in non-tractable (often undecidable) query evaluation~\cite{DBLP:conf/icdt/Rosati07,DBLP:journals/ws/Gutierrez-Basulto15}.
This is in contrast to the temporal case, where even OMQs with
arbitrary $\FO(<)$-queries are in $\NCo$ for data complexity.

Our article is closely related to the large body of work in temporal logic and automata for finite and infinite words, in particular to the B\"uchi--Elgot--Trakhtenbrot Theorem~\cite{Buchi60,Elgot61,Trakh62}, according to which monadic second-order sentences over finite strict linear orders define exactly the class of regular languages. Our data complexity results rely on the investigation of regular languages in terms of circuit and descriptive complexity~\cite{DBLP:journals/jcss/Barrington89,DBLP:journals/iandc/ComptonL90,Straubing94,Immerman99}.
We also use a few
more recent results on partially ordered and unary finite automata~\cite{chrobak-ufa,DBLP:conf/dlt/SchwentickTV01,to-ufa}.
To study OMQs with MFO($<$)-queries, we employ Kamp's Theorem~\cite{phd-kamp,DBLP:journals/corr/Rabinovich14}, according to which $\FO(<)$-formulas with one free variable have the same expressive power as \LTL{} formulas over the integers (or any other Dedekind-complete linear order).

As discussed above,
combinations of ontology languages with temporal formalisms
have been widely investigated since the beginning of the 1990s.
We are not aware, however, of any previous work that focuses on the
temporal dimension only without assuming the presence of a non-propositional, relational, domain as well.
Indeed, the focus of existing work has been on
\emph{adding} a temporal dimension to an existing ontology language
rather than on investigating an existing temporal logic from the
viewpoint of ontology-mediated querying. It is the latter what we do
in this article. We believe that this is worthwhile because $(i)$ the single-dimensional temporal languages are of interest by themselves and $(ii)$ investigating them first  allows one to investigate combined languages based on a good understanding of the computational complexity and rewritability properties
of their temporal components.

As the work on combining ontology and temporal languages is closely related to our research project, we give a brief overview.  Until the early 2010s, the main reasoning tasks investigated for the resulting logics were concept subsumption (is an inclusion between concepts entailed by a temporal ontology?) and knowledge base satisfiability (does a knowledge base consisting of a temporal data instance and a temporal ontology have a model?). Query answering and its complexity were not on the research agenda yet. Discussing this work is beyond the scope of this article, and so in the following we concentrate on briefly summarising recent research on combining ontology languages and temporal formalisms with the aim of ontology-mediated query answering over temporal data.  We focus on the discrete-time point-based approach as this is the approach we consider in this article and as it is fundamental for any other temporal data models.  One can distinguish between formalisms in which temporal constructs are added to \emph{both} ontology and query languages (as in this article) and formalisms in which only the query language is temporalised while the ontology language is a standard atemporal language. We call the latter approach \emph{query-centric} as no temporal connectives are added to the ontology language. The advantage of keeping the ontology language atemporal is that increases in the complexity of query answering compared to the atemporal case can only be caused by the new temporal constructs in the queries. OMQ answering in this framework has been investigated in depth, in particular, for the query language consisting of all \emph{LTL-CQs} that are obtained from \LTL{} formulas by replacing occurrences of propositional variables by arbitrary conjunctive queries (CQs).  \citeauthor{DBLP:conf/cade/BaaderBL13}~\cite{DBLP:conf/cade/BaaderBL13,DBLP:journals/ws/BaaderBL15} analyse the data and combined complexity of answering \LTL-CQs with respect to $\mathcal{ALC}$ and $\mathcal{SHQ}$ ontologies with and without rigid concept and role names (whose interpretation does not change over time). \citeauthor{DBLP:conf/gcai/BorgwardtT15}~\cite{DBLP:conf/gcai/BorgwardtT15,DBLP:conf/ijcai/BorgwardtT15} and \citeauthor{DBLP:conf/ausai/BaaderBL15}~\cite{DBLP:conf/ausai/BaaderBL15} investigate the complexity of answering \LTL-CQs with respect to weaker ontology languages such as $\EL$ and members of the \DL{} family. In this context, \citeauthor{DBLP:conf/frocos/BorgwardtLT13}~\cite{DBLP:conf/frocos/BorgwardtLT13,DBLP:journals/ws/BorgwardtLT15} study the rewritability properties of \LTL-CQs. \citeauthor{DBLP:journals/semweb/BourgauxKT19}~\cite{DBLP:journals/semweb/BourgauxKT19} investigate the problem of querying inconsistent data, and \citeauthor{DBLP:conf/aaai/Koopmann19}~\cite{DBLP:conf/aaai/Koopmann19} proposes an extension to probabilistic data.


As far as OMQ answering with temporal
ontologies is concerned, related work has been done
on querying temporal data with respect to temporalised
$\EL$ ontologies. In this case, since OMQ
answering with atemporal $\EL$ ontologies is
already \PTime-complete, a more expressive target language
than FO$(<)$ is required. \citeauthor{DBLP:conf/ijcai/Gutierrez-Basulto16}~\cite{DBLP:conf/ijcai/Gutierrez-Basulto16} consider a
temporal extension $\mathcal{TEL}$ of $\EL$ and investigate
the complexity and rewritability of atomic queries.
It is not known whether query answering in the full
language with rigid roles is decidable. However, it is
\PTime-complete for data and \PSpace-complete for combined
complexity in its fragment without rigid roles, and
\PSpace-complete in data and in \ExpTime{} for combined
complexity in the fragment where rigid roles can only
occur on the left-hand side of concept inclusions.
It is also shown that,
for acyclic ontologies, one obtains rewritability into
the extension of $\FO(<)$ with $+$, and that query answering is in \PTime{} for
combined complexity. Recent work of~\citeauthor{DBLP:conf/ruleml/BorgwardtFK19}~\cite{DBLP:conf/ruleml/BorgwardtFK19} investigates temporal ontology-mediated querying over sparse temporal data. A temporal extension of $\mathcal{ELH}_\bot$ is able to express different types of rigid concepts, with OMQ answering for rooted CQs with guarded negation and metric temporal operators under the minimal-world semantics being \PTime-complete for data and \ExpSpace-complete for combined complexity.

Extensions of datalog by constraints over an ordered domain representing time provide an alternative and well investigated approach to querying temporal data~\cite{DBLP:journals/tcs/Revesz93,DBLP:journals/jcss/KanellakisKR95,DBLP:journals/jlp/TomanC98}. In this approach, (possibly infinite) database relations are represented using constraints, and datalog programs with constraints play the role of both the ontology and the database query. 
A fundamental difference between datalog with constraints and our formalism is the arity of the relation symbols: our formalism is essentially monadic in the sense that the temporal precedence relation is the only non-unary relation symbol used, whereas datalog alone admits already arbitrary many relation symbols of arbitrary arity. 
A systematic comparison of the expressive power of the respective datalog and \LTL-based formalisms is beyond the scope of this article, but would be of great interest. It would also be of interest to see in how far datalog with constraints can be used as a target language for rewriting ontology-mediated temporal queries.

Our main target languages of query rewriting in this article are $\FO(<)$ and its extensions $\FOE$ and $\FO(\RPR)$. An alternative approach that could be of interest when studying the succinctness of rewritings is to consider as target languages \LTL{} and its second-order extensions such as \textsl{ETL} (\LTL{} with regular expressions)~\cite{DBLP:journals/iandc/Wolper83} and $\mu$\LTL{} (\LTL{} with fixpoints) \cite{DBLP:conf/tls/BanieqbalB87,DBLP:conf/popl/Vardi88}.


\section{Ontologies and Ontology-Mediated Queries in Linear Temporal Logic \LTL}
\label{sec:tdl}

We begin by defining our temporal ontology and query languages as fragments of the classical linear temporal logic \LTL{} (aka propositional temporal logic \textsl{PTL} or \textsl{PLTL}); see~\cite{Gabbayetal94,gkwz,DBLP:books/cu/Demri2016} and references therein.

\subsection{\LTL{} Knowledge Bases}\label{sec:KB}

Having in mind the specific application area for these languages, which has been described in the introduction, we somewhat modify the standard \LTL{} terminology. For instance, instead of propositional variables, we prefer to speak about atomic concepts that (similarly to concept names in Description Logic) are interpreted, in the temporal context, as sets of time points. Data instances are then membership assertions stating that a moment of time $\ell \in \Z$ is an instance of an atomic concept $A$. We construct  complex temporal concepts by applying temporal and Boolean operators to atomic ones. Finally, we define terminologies and capture background knowledge by means of ontology axioms, which are clauses representing inclusions (implications) between concepts that are supposed to hold at every moment of time.

Thus, in this article, we think of the alphabet of \LTL{} as a countably infinite set of \emph{atomic concepts} $A_i$, for $i < \omega$. \emph{Basic temporal concepts}, $C$, are defined by the grammar
\begin{equation}\label{tem-conc}
C \ \ ::=\ \ A_i  \ \ \mid\ \ \Rbox C \ \ \mid \ \ \Lbox C \ \ \mid\ \ \Rnext C \ \ \mid \ \ \Lnext C
\end{equation}
with the \emph{temporal operators} $\Rbox$ (always in the future), $\Lbox$ (always in the past), $\Rnext$ (at the next moment) and $\Lnext$ (at the previous moment). A \emph{temporal ontology}, $\TO$, is a finite set of \emph{clauses} of the form
\begin{equation}\label{axiom1}
C_1 \land \dots \land C_k ~\to~ C_{k+1} \lor \dots \lor C_{k+m},
\end{equation}
where $k,m \ge 0$ and the $C_i$ are basic temporal concepts. We often refer to the clauses in $\TO$ as (\emph{ontology}) \emph{axioms}. As usual, we denote the empty $\land$ by $\top$ and the empty $\lor$ by $\bot$.
We classify ontologies by the shape of their axioms and the temporal operators that occur in them. Let $\frag \in \{\bool, \horn, \krom,\core\}$ and $\op \in \{\Box, \nxt,\Box\nxt\}$. By an $\LTL_\frag^{\op}$-\emph{ontology} we mean any temporal ontology whose clauses satisfy the following restrictions on $k$ and $m$ in~\eqref{axiom1}  indicated by $\frag$:
\begin{description}\itemsep=0pt
\item[\textit{horn}:] $m\leq 1$,

\item[\textit{krom}:] $k + m\leq 2$,

\item[\textit{core}:] $k + m\leq 2$ and $m \leq 1$,

\item[\textit{bool}:] any $k,m \ge 0$,
\end{description}
and may only contain occurrences of the (future and past) temporal operators indicated in $\op$ (for example, $\op = \Box$ means that only $\Rbox$ and $\Lbox$ may occur in the temporal concepts). Note that any $\LTL_\frag^{\op}$-ontology may contain \emph{disjointness  axioms} of the form $C_1 \land C_2 \to \bot$. Although both $\LTL_\krom^{\op}$- and $\LTL_\core^{\op}$-ontologies may only have \emph{binary} clauses as axioms (with at most two concepts), only the former are allowed to contain \emph{universal covering axioms} such as $\top \to C_1 \lor C_2$; in other words, $\core = \krom \cap \horn$.

The definition above identifies a rather restricted set of \LTL-formulas as possible ontology axioms. For example, it completely disallows the use of the standard temporal operators $\Rdiamond$ (sometime in the future), $\Ldiamond$ (sometime in the past), $\U$ (until) and $\Si$ (since). Whether or not these operators can be expressed in a fragment $\LTL_\frag^{\op}$ (in the context of answering ontology-mediated queries) depends on $\frag$ and $\op$. We discuss this issue in Section~\ref{sec:expressivity}.

A  \emph{data instance}, $\A$, is a finite set of \emph{ground atoms} of the form $A_i(\ell)$, where  $\ell \in \Z$. We denote by $\min \A$ and $\max \A$ the minimal and maximal integer numbers occurring in~$\A$. The \emph{active domain} of a data instance~$\A$ is the set $\tem(\A) = \bigl\{\,n \in \Z\mid \min \A \leq n \leq \max \A\,\bigr\}$. To simplify constructions and without much loss of generality, we assume that $0 = \min \A$ and $1 \le \max \A$, implicitly adding `dummies' such as $D(0)$ and $D(1)$ if necessary, where $D$ is a fresh atomic concept (which will never be used in queries).  An $\LTL^{\op}_\frag$ \emph{knowledge base} (KB, for short) is a pair $(\TO,\A)$, where $\TO$ is an $\LTL^{\op}_\frag$-ontology and $\A$ a data instance. The \emph{size}~$|\TO|$ of an ontology $\TO$ is the number of occurrences of symbols in $\TO$.

We use the standard semantics for \LTL{} over $(\Z,<)$ with the \emph{strict} interpretation of temporal operators. A (\emph{temporal}) \emph{interpretation} $\Mmf$ associates with every atomic concept $A$ a subset $A^\Mmf\subseteq \Z$.
The \emph{extension} $C^\Mmf$ of a basic temporal concept $C$ in $\Mmf$ is defined inductively as follows:
\begin{align}
\label{eq:semantics:box}
(\Rbox C)^\Mmf & =  \bigl\{\, n\in\Z \mid k\in C^\Mmf, \text{  for all } k>n \,\bigr\}, &
(\Lbox C)^\Mmf & =  \bigl\{\, n\in\Z \mid k\in C^\Mmf, \text{  for all } k<n \,\bigr\},\\
\label{eq:semantics:next}
(\Rnext C)^\Mmf & =  \bigl\{\, n\in\Z \mid n + 1\in C^\Mmf \,\bigr\}, &
(\Lnext C)^\Mmf & =  \bigl\{\, n\in\Z \mid n - 1\in C^\Mmf \,\bigr\}.
\end{align}
A clause of the form~\eqref{axiom1} is interpreted in $\Mmf$ \emph{globally} in the sense that it is regarded to be \emph{true} in $\Mmf$ if
\begin{equation*}
C_1^{\Mmf} \cap \dots \cap C_k^{\Mmf} ~\subseteq~ C_{k+1}^{\Mmf} \cup \dots \cup C_{k+m}^{\Mmf},
\end{equation*}
where the empty $\cap$ is $\Z$ and the empty $\cup$ is $\emptyset$. Given a clause $\alpha$, we write $\Mmf \models \alpha$ if $\alpha$ is true in $\Mmf$.
We call $\Mmf$ a \emph{model} of $(\TO,\A)$ and write
$\Mmf\models (\TO,\A)$ if
\begin{equation*}
 \Mmf \models \alpha \text{ for all } \alpha \in \TO \qquad\text{ and }\qquad
 \ell\in A^{\smash{\Mmf}} \text{ for all } A(\ell)\in \A.
\end{equation*}
We say that $\TO$ is \emph{consistent} (or \emph{satisfiable}) if there is an interpretation $\Mmf$, called a \emph{model of $\TO$}, such that $\Mmf\models\alpha$,  for all~$\alpha \in \TO$; we also say that $\A$ is \emph{consistent with} $\TO$ (or that the KB $(\TO,\A)$ is \emph{satisfiable}) if there is a model of $(\TO,\A)$.  A basic temporal concept $C$ is \emph{consistent with} $\TO$ if there is a model $\Mmf$ of $\TO$ such that $C^{\Mmf}\ne\emptyset$. For a clause $\alpha$, we write~$\TO\models \alpha$ if $\Mmf \models \alpha$ for every model $\Mmf$ of $\TO$.

The combined complexity of the satisfiability problem for $\LTL_\frag^\op$ KBs $\K$ is shown in Table~\ref{LTL-comb-table}. In those results, we assume that the size of $\K$ is $|\TO|$ plus the size of the encoding of $\A$, and the numbers $\ell$ in $\A$ are assumed to be encoded in unary.
(It is to be noted that the $\LTL$-based languages studied in~\cite{AKRZ:LPAR13,DBLP:journals/tocl/ArtaleKRZ14} are somewhat different from those introduced in this article: in particular, the negative atoms in data instances~\cite{AKRZ:LPAR13} do not affect complexity, whereas implications in data instances~\cite{DBLP:journals/tocl/ArtaleKRZ14} change \NL-hardness to \NP-hardness.)

\begin{table}[t]
\centering%
\tabcolsep=20pt%
\begin{tabular}{cccc}\toprule
     \rule[-3pt]{0pt}{13pt}
     $\frag$ & $\LTL_\frag\Xbox$ & $\LTL_\frag\Xnext$ & $\LTL_\frag\Xallop$    \\\midrule%
     \bool & \NP~{\scriptsize\cite{ono1980on}} & \multicolumn{2}{c}{\PSpace~{\scriptsize\cite{DBLP:journals/jacm/SistlaC85}}} \\%
     \krom &  \NP & \NL$^*$ & \NP~{\scriptsize\cite{AKRZ:LPAR13}} \\
     \horn & \PTime~{\scriptsize\cite{AKRZ:LPAR13}} & \multicolumn{2}{c}{\PSpace~{\scriptsize\cite{Chen199495}}} \\
     \core & \NL~{\scriptsize\cite{AKRZ:LPAR13}} & \NL & \NP~{\scriptsize\cite{AKRZ:LPAR13}} \\\bottomrule
   \end{tabular}
\caption{Combined complexity of $\LTL$ KB satisfiability ($^*$the result follows from the proof of Theorem 1 in~\cite{AKRZ:LPAR13}).}
\label{LTL-comb-table}
\end{table}

\subsection{Ontology-Mediated Queries}

We next define languages for querying temporal knowledge bases. In classical atemporal OMQ answering with description logic ontologies, the standard language for retrieving data from KBs consists of \emph{conjunctive queries} (CQs, for short) or \emph{unions} thereof (UCQs)~\cite{DBLP:journals/jar/CalvaneseGLLR07}. In our present temporal setting, we consider significantly more expressive queries: we start by investigating queries that are arbitrary \LTL{}-formulas
or, equivalently, by Kamp's Theorem~\cite{phd-kamp,DBLP:journals/corr/Rabinovich14}, arbitrary monadic FO-formulas with a single
free variable and a built-in linear order relation. In Section~\ref{FO-OMQs},
we lift our results to formulas in the monadic FO with multiple answer variables.

A \emph{temporal concept}, $\varkappa$, is an arbitrary \LTL-formula defined by the grammar
\begin{align*}
& \varkappa  \ \ ::= \ \  \top \quad \mid\quad A_i \quad \mid\quad  \neg \varkappa \quad \mid\quad \varkappa_1 \land \varkappa_2 \quad \mid \quad \varkappa_1 \lor \varkappa_2 \quad \mid \quad \Rnext \varkappa \quad \mid \quad  \Rdiamond \varkappa \quad \mid \quad  \Rbox \varkappa \quad \mid \quad \varkappa_1 \U\varkappa_2\\
&\mbox{}\hspace*{8.8cm} \mid \quad \Lnext \varkappa \quad \mid \quad  \Ldiamond \varkappa \quad \mid \quad  \Lbox \varkappa \quad \mid \quad \varkappa_1 \Si\varkappa_2.
\end{align*}
A \emph{positive temporal concept} is a temporal concept without occurrences of $\neg$; note that positive temporal concepts~$\varkappa$ include all basic temporal concepts of the form~\eqref{tem-conc}. Let $\Mmf$ be a temporal interpretation. The \emph{extension} $\varkappa^{\Mmf}$ of a temporal concept~$\varkappa$ in~$\Mmf$ is given using~\eqref{eq:semantics:box}--\eqref{eq:semantics:next} and the following:
\begin{align*}
\top^\Mmf & = \Z,  &
(\neg\varkappa)^\Mmf & = \Z \setminus \varkappa^\Mmf,\\
(\varkappa_1 \land \varkappa_2)^\Mmf & = \varkappa_1^\Mmf \cap \varkappa_2^\Mmf, &
(\varkappa_1 \lor \varkappa_2)^\Mmf & = \varkappa_1^\Mmf \cup \varkappa_2^\Mmf,\\
(\Rdiamond \varkappa)^\Mmf & =  \bigl\{\, n\in\Z \mid \text{there is } k>n \text{ with } k\in\varkappa^\Mmf\,\bigr\}, &
(\Ldiamond \varkappa)^\Mmf & = \bigl\{\, n\in\Z \mid \text{there is } k<n \text{ with } k\in\varkappa^\Mmf\,\bigr\},\\
(\varkappa_1 \U \varkappa_2)^\Mmf & = \bigl\{\, n\in\Z \mid \text{there is } k>n \text{ with } k\in\varkappa_2^\Mmf \text{ and }m\in\varkappa_1^\Mmf \text{ for  } n < m < k \,\bigr\}, \hspace*{-20em}\\
(\varkappa_1 \Si \varkappa_2)^\Mmf & =\bigl\{\, n\in\Z \mid \text{there is } k<n \text{ with } k\in\varkappa_2^\Mmf \text{ and }m\in\varkappa_1^\Mmf \text{ for  } n > m > k \,\bigr\}. \hspace*{-20em}
\end{align*}
An $\LTL^{\op}_\frag$ \emph{ontology-mediated instance query} (\OMIQ, for short) is a pair of the form $\q = (\TO, \varkappa)$, where $\TO$ is an $\LTL^{\op}_\frag$ ontology and $\varkappa$ a temporal concept (which may contain arbitrary temporal operators, not only those indicated in $\op$). If $\varkappa$ is a positive temporal concept, then we refer to $\q$ as an \emph{ontology-mediated positive instance query} (OMPIQ). Finally, if $\varkappa$ is an atomic concept, we call $\q$ an \emph{ontology-mediated atomic query} (\OMAQ).

A \emph{certain answer} to an \OMIQ{} $\q = (\TO,\varkappa)$ over a data instance $\A$ is any number $\ell\in \tem(\A)$ such that  $\ell  \in \varkappa^{\smash{\Mmf}}$ for every model $\Mmf$ of~$(\TO, \A)$.
The set of all certain answers to $\q$ over $\A$ is denoted by $\ans(\q,\A)$.
As a technical tool in our constructions, we also require `certain answers' $\ell$ that range over the whole~$\Z$ rather than only the \emph{active temporal domain} $\tem(\A)$; we denote the set of such certain answers \emph{over $\A$ and $\Z$} by $\ans^\Z(\q,\A)$.

\begin{example}\label{ex1}\em
Suppose $\TO = \{\,\Lnext A \to B, \ \Lnext B \to A\,\}$ and $\A = \{\,A(0)\,\}$.
Then  $2n+1\in B^\Mmf$, for any $n \geq 0$ and any model~$\Mmf$ of $(\TO,\A)$. It follows that, for $\q = (\TO,\Rnext\Rnext B)$, we have $\ans^\Z(\q,\A) = \{\,2n-1 \mid  n\geq 0\,\}$, while $\ans(\q,\A) = \{1\}$ because $\tem(\A)=\{0,1\}$, as agreed in Section~\ref{sec:KB}.
\end{example}

By the \emph{\OMIQ{} answering problem for} $\LTL^{\op}_\frag$ we understand the decision problem for the set $\ans(\q,\A)$, where $\q$ is an $\LTL^{\op}_\frag$ \OMIQ{} and $\A$ a data instance, that is, given any $\ell \in \tem(\A)$, decide whether $\ell \in \ans(\q,\A)$. By restricting OMIQs to OMPIQs or OMAQs, we obtain the \emph{OMPIQ} or, respectively, \emph{OMAQ answering problem} for $\LTL^{\op}_\frag$.

The success of the classical ontology-based data management paradigm~\cite{IJCAI-18} has been largely underpinned by the fact that, for suitable ontology and query languages such as \DL{} and CQs, answering OMQs can be uniformly reduced to evaluating first-order queries (or standard SQL queries) directly over the data instance. Such `FO-rewritability' of OMQs implies that answering each of them can be done in $\ACz$ for \emph{data complexity}, that is, under the assumption that the ontology and query are fixed and the data instance is the only input.

Our main aim in this article is to investigate rewritability of $\LTL^{\op}_\frag$ OMIQs, OMPIQs and OMAQs into various types of first-order queries.
With this in mind, we think of any data instance $\A$ as a finite first-order structure $\SA$ with domain $\tem(\A)$ ordered by $<$, in which
\begin{equation*}
\SA \models A(\ell) \quad \text{iff} \quad  A(\ell)\in \A,
\end{equation*}
for any atomic concept $A$ and any $\ell\in\tem(\A)$.
The structure $\SA$ represents a temporal database over which we can evaluate various types of first-order formulas (queries). The smallest target language for rewritings comprises $\FO(<)$-formulas, that is, arbitrary first-order formulas with one built-in binary predicate $<$. A more expressive target language $\FOE$ extends $\FO(<)$ with the standard unary numeric predicates $t \equiv 0 \pmod n$, for $n > 1$, defined by taking $\SA \models \ell \equiv 0 \pmod n$ iff $\ell$ is divisible by $n$.
Evaluation of both $\FO(<)$- and $\FOE$-formulas can be done in \LogTime-uniform $\smash{\ACz{}}$ for data complexity~\cite{Immerman99}, one of the smallest complexity classes. It is to be noted that even though $\FO(<)$ and $\FOE$ lie in the same complexity class, their expressive power  differs substantially; see Example~\ref{exampleBvNB} and Remark~\ref{re:exp} below.

Our most expressive target language for rewritings is $\FO(\RPR)$ that extends $\FO$ with the successor relation and
\emph{relational primitive recursion} ($\RPR$, for short). (Note that we do not require the predicate $\textsc{bit}$ or, equivalently, the predicates $\plus$ and $\TIMES$ in this language; cf.~\cite{DBLP:journals/iandc/ComptonL90}.) Evaluation of $\FO(\RPR)$-formulas is known to be \NCo-complete for data complexity~\cite{DBLP:journals/iandc/ComptonL90}, with
$\ACz \subsetneq \NCo \subseteq \LogSpace$. We remind the reader that, using RPR, we can construct formulas such as
\begin{equation*}
\Phi(\avec{z},\avec{z}_1,\dots,\avec{z}_n) ~=~  \left[ \begin{array}{l}
Q_{1}(\avec{z}_1,t) \equiv \varTheta_1\big(\avec{z}_1,t,Q_1(\avec{z}_1,t-1),\dots,Q_n(\avec{z}_n,t-1)\big)\\
\dots\\
Q_{n}(\avec{z}_n,t) \equiv \varTheta_n\big(\avec{z}_n,t,Q_1(\avec{z}_1,t-1),\dots,Q_n(\avec{z}_n,t-1)\big)
\end{array}\right] \ \Psi(\avec{z},\avec{z}_1,\dots,\avec{z}_n),
\end{equation*}
where the part of $\Phi$ within $[\dots]$ defines recursively, via the $\FO(\RPR)$-formulas $\varTheta_i$, the interpretations of the predicates~$Q_i$ in the $\FO(\RPR)$-formula $\Psi$ (see Example~\ref{exampleBvNB:2} for an illustration).
Note that the recursion starts at $t=0$ and assumes that $Q_{i}(\avec{z}_i,-1)$ is false for all $Q_{i}$ and all $\avec{z}_i$, with $1 \leq i\leq n$. Thus, the truth value of $Q_{i}(\avec{z}_i,0)$ is computed by substituting falsehood $\bot$ for all $Q_{i}(\avec{z}_i,-1)$. For $t=1,2,\dots$, the recursion is then applied in the obvious way.
We assume that the relation variables $Q_i$ can only occur in one recursive definition $[\dots]$, so it makes sense to write $\SA\models Q_i(\avec{n}_i,k)$, for any tuple $\avec{n}_i$ in $\tem(\A)$ and $k \in \tem(\A)$, if the computed value is `true'.
Using thus defined truth-values, we compute inductively the truth-relation $\SA\models \Psi(\avec{n},\avec{n}_1,\dots,\avec{n}_n)$, and so $\SA\models \Phi(\avec{n},\avec{n}_1,\dots,\avec{n}_n)$, as usual in first-order logic.

We are now in a position to introduce the central notion of the paper that reduces answering \OMIQ{}s over data instances $\A$ to evaluation of first-order queries over $\SA$, which can be carried out by standard temporal databases.

\begin{definition}\label{rewriting}\em
Let $\lang$ be one of the three classes of FO-formulas introduced above: $\FO(<)$, $\FOE$ or $\FO(\RPR)$. Let $\q=(\TO, \varkappa)$ be an \OMIQ{} and $\rq(t)$ a constant-free $\lang$-formula with a single free variable $t$. We call $\rq(t)$ an $\lang$-\emph{rewriting of} $\q$ if, for any data instance~$\A$, we have $\ans(\q,\A) = \{\, \ell \in\tem(\A) \mid \SA \models \rq(\ell)\,\}$. We say that $\q$ is $\lang$-\emph{rewritable} if it has an $\lang$-rewriting.
\end{definition}

Answering both $\FO(<)$- and $\FOE$-rewritable \OMIQ{}s is clearly in $\ACz$ for data complexity, while answering $\FO(\RPR)$-rewritable \OMIQ{}s is in \NCo{} for data complexity.

\begin{remark}\label{rem:minmax}\em
$(i)$ In the definition above, we allowed no constants (numbers) from $\Z$ in rewritings. Note, however, that the \mbox{$\FO(<)$}-formulas $\neg \exists t'\, (t'< t)$ and $\neg \exists t'\, (t'>t)$ define the minimal and maximal numbers that occur in any given data instance. In view of this, we can use  the constants $0$ (or, $\min$) and $\max$ in $\FO(<)$-rewritings as syntactic sugar. We also use the following $\FO(<)$ abbreviations defined by induction on $a > 0$:
\begin{equation*}
(t = t' + a) \quad = \quad \begin{cases}
(t > t') \land \neg \exists s\,(t' < s < t), &  \text{if } a = 1,\\
\exists t''\,\bigl((t = t'' + (a - 1)) \land (t'' = t' + 1)\bigr),  & \text{if } a > 1,
\end{cases}
\end{equation*}
where $t' < s < t$ stands for $(t' < s) \land (s < t)$; formula $(t = t' + a)$ is also a shortcut for $(t' = t + (-a))$ if $a < 0$ and for~$(t = t')$ if~$a = 0$.

\smallskip

(\emph{ii}) In $\FOE$-rewritings, we require formulas that express membership in arithmetic progressions, that is, sets of the form
\begin{equation*}
a + b\N = \bigl\{\, a + bk \mid k \geq 0\,\bigr\},\qquad \text{ for } a, b \geq 0,
\end{equation*}
where $\N$ denotes the set of natural numbers (including $0$). So, for individual variables  $t$ and $t'$, we write $t - t' \in a + b\N$  to abbreviate $(t = t' + a)$ if $b = 0$, and $\exists t''\,\bigl[(t'' = t' + a) \land (t'' \leq t)]$ if $b = 1$; otherwise, that is, for $b > 1$, the formula $t - t' \in a + b\N$ stands for
\begin{equation*}
\exists t''\,\bigl[(t'' = t' + a) \land (t \geq t'') \land \bigvee_{0 \leq c < b} \bigl((t \equiv c\!\!\! \pmod b) \land (t'' \equiv c\!\!\! \pmod b)\bigr)\bigr],
\end{equation*}
where $t\equiv c \pmod b$ with $0 \leq c < b$ is an abbreviation for $\exists s \,\bigl[\bigl((s = t + (- c)) \lor (s = t + (b - c))\bigr)\land \bigl(s \equiv 0 \pmod b\bigr)\bigr]$.

\smallskip

(\emph{iii}) Observe that $\FO(\RPR)$ does not explicitly have the predicate $<$ because it can easily be expressed using $\RPR$; see~\cite[Proposition 4.1]{DBLP:journals/iandc/ComptonL90}. Therefore, every $\FO(<)$-formula is expressible in $\FO(\RPR)$. Similarly, every formula in $\FOE$ is expressible in $\FO(\RPR)$: indeed,
\begin{equation*}
\left[ \begin{array}{rcl}
Q_{0}(t) &\equiv& ((t=0) \lor Q_{b-1}(t-1))\\
Q_{b-1}(t) &\equiv& Q_{b-2}(t-1),\\
&\!\dots\!&\\
Q_{1}(t) &\equiv& Q_{0}(t-1)
\end{array}\right] \ Q_0(s),
\end{equation*}
expresses $s \equiv 0 \pmod b$, for $b \geq 1$.
\end{remark}

We illustrate the given definitions by a few examples.

\begin{example}\label{exampleBvNB}\em
Consider the \OMAQ{} $\q = (\TO,A)$, where $\TO$ is the same as in Example~\ref{ex1}. It is not hard to see that
\begin{equation*}
\rq(t) ~=~ \exists s\,\bigl(A(s) \land (t - s \in 0 + 2\N)\bigr)\  \lor\ \exists s\,\bigl(B(s) \land (t - s \in 1 + 2\N)\bigr)
\end{equation*}
is an $\FOE$-rewriting of $\q$.
Note, however, that $\q$ is \emph{not} $\FO(<)$-rewritable since properties such as `$t$ is even' are not definable by $\FO(<)$-formulas, which can be established using a standard Ehrenfeucht-Fra\"iss\'e argument~\cite{Straubing94,Libkin}.
\end{example}

\begin{example}\label{exampleBvNB:2}\em
Next, consider the \OMAQ{} $\q = (\TO,B_0)$, where $\TO$ consists of the axioms
\begin{equation*}
\Lnext B_k \land A_0 \to B_k \ \ \text{ and } \ \ \Lnext B_{1-k} \land A_1 \to B_k,  \ \ \ \text{for} \ \ k = 0, 1.
\end{equation*}
For any binary word  $\avec{e} = (e_1,\dots,e_{n}) \in \{0,1\}^n$, we take the data instance
$\A_{\avec{e}}  =   \{\,B_0(0)\,\} \cup  \{\, A_{e_i}(i) \mid 0 < i \leq n\,\}$.
It is not hard to check that $n$ is a certain answer to $\q$  over $\A_{\avec{e}}$ iff the number of 1s in~$\avec{e}$ is even (\parity): intuitively, the word is processed starting from the minimal timestamp and moving towards the maximal one, and the first axiom preserves $B_i$ if the current symbol is~0, whereas the second axiom toggles $B_i$ if the current symbol is 1. As \parity{} is not in $\ACz$~\cite{DBLP:journals/mst/FurstSS84}, it follows that $\q$ is not FO-rewritable even if \emph{arbitrary} numeric predicates are allowed in rewritings. However, it can be rewritten to the following $\FO(\text{RPR})$-formula:
\begin{equation*}
\rq(t) ~=~  \left[ \begin{array}{l}
Q_{0}(t) \equiv \varTheta_0\\
Q_{1}(t) \equiv \varTheta_1
\end{array}\right] \ Q_0(t),
\end{equation*}
where
\begin{equation*}
\varTheta_k(t, Q_{0}(t-1), Q_{1}(t-1))  \quad =\quad B_k(t) \ \ \lor \ \ \bigl(Q_k(t-1)\land A_0(t)\bigr) \ \ \lor \ \ \bigl(Q_{1-k}(t-1)\land A_1(t)\bigr),   \ \ \ \text{for} \ \ k = 0, 1.
\end{equation*}
As noted above, the recursion starts from the minimal timestamp 0 in the data instance (with $Q_i(-1)$ regarded  false) and proceeds to the maximal one.
\end{example}

As a technical tool in our constructions of $\FO(<)$- and $\FOE$-rewritings, we also use infinite first-order structures~$\smash{\SA^\Z}$ with domain $\Z$ that are defined in the same way as $\SA$ but over the whole $\Z$.
If in Definition~\ref{rewriting} we replace $\SA$ with $\smash{\SA^\Z}$, then we can speak of $\smash{\FO^\Z}(<)$ -or  $\FOEZ$-\emph{rewritings $\rq(t)$ of} $\q$.

\begin{example}\label{ex:2-3}\em
Suppose $\q = (\TO,\varkappa)$, where $\TO =\{\,A \to \Rnext^2 A,\ B \to \Rnext^3 B\,\}$, $\varkappa = \Rdiamond (A\land B)$ and $\Rnext^k$ is a sequence of $k$-many operators $\Rnext$. Then
\begin{equation*}
\exists s \, \bigl[(t < s) \land \exists u\,\bigl(A(u) \land (s - u \in 0 + 2\N)\bigr) \land  \exists v\,\bigl(B(v) \land (s - v \in 0+3\N)\bigr)\bigr]
\end{equation*}
is an $\FOEZ$-rewriting of the \OMPIQ{} $\q$, but not an $\FOE$-rewriting because although $u$ and $v$ always belong to  the active temporal domain $\tem(\A)$ of $\A$, $s$ can be outside $\tem(\A)$. Interestingly, $\exists u,v\, \bigl[A(u) \land B(v)\bigr]$ is both an $\FO(<)$- and $\FO^{\smash{\Z}}(<)$-rewriting of $\q$.
\end{example}

\color{black}

We conclude this section by discussing the expressive power of the most important languages $\LTL_{\frag}^{\op}$ in comparison with full \LTL.


\subsection{Remarks on Expressivity}\label{sec:expressivity}

We are interested in expressive power modulo the introduction of fresh symbols (atomic concepts). By the \emph{signature} of an ontology we mean the set of atomic concepts that occur in it. An ontology $\TO'$ is called a \emph{model conservative extension} of an ontology $\TO$ if $\TO'\models \TO$, the signature of $\TO$ is contained in the signature of $\TO'$, and every model of $\TO$ can be expanded to a model of $\TO'$ by providing an interpretation of the fresh symbols of $\TO'$ but leaving the domain and the interpretation of the symbols in $\TO$ unchanged. Observe that if $\q = (\TO, \varkappa)$ is an OMIQ and $\TO'$ a model conservative extension of $\TO$, then the certain answers to $\q$ over a data instance $\A$ in the signature of $\TO$ coincide with the certain answers to $\q' = (\TO',\varkappa)$ over $\A$. Thus, any rewriting of $\q'$ is also a rewriting of $\q$.

Observe first that, if arbitrary \LTL-formulas are used as axioms of an ontology $\TO$, then one can construct an $\LTL_{\bool}\Xallop$ ontology $\TO'$ that is a model conservative extension of $\TO$~\cite{FisherDP01,AKRZ:LPAR13}. We do not repeat the proof here but indicate a few important steps. First, we note that $A \to \Rnext B$ is equivalent to $\Lnext A \to B$ and $\Ldiamond A \to B$ is equivalent to $A \to \Rbox B$.
Then the implication $A \to \Rdiamond B$ can be simulated by two $\krom$ clauses with $\Rbox$ and a fresh atomic concept~$C$: namely, $A \land \Rbox C \to \bot$ and $\top \to C \lor B$. To simulate $A \to B \U C$, we use the well-known fixed-point unfolding of $B\U C$ as $\Rnext C\lor(\Rnext B \land \Rnext (B\U C))$, which gives rise to four clauses $A \to U$,  $U \to \Rnext C \lor \Rnext B$,  $U \to \Rnext C \lor \Rnext U$ and $A \to \Rdiamond C$ with a fresh $U$ (the $\Rdiamond$ in the last clause can be replaced with $\Rbox$ as described above). The implication $B \U C \to A$ can be replaced with $\Rnext C \to U$, $\Rnext U \land \Rnext B \to U$ and $U \to A$, for  a fresh~$U$.

We now discuss in more detail the clausal form fragments corresponding to the languages $\textsl{Prior-}\LTL$ and $\textsl{Krom-}\LTL$ used in Theorems~A and~B.


Recall that $\LTLpr$ denotes the set of \LTL{}-formulas constructed using arbitrary Boolean connectives, $\Lbox$, $\Rbox$,~$\Ldiamond$ and $\Rdiamond$. 
By employing the equivalent rewritings of formulas with $\Ldiamond$ or $\Rdiamond$ from the previous paragraph, it is easy to see that, for every
ontology $\TO$ in $\LTLpr$, there exists an $\LTL_{\smash{\bool}}\Xbox$ ontology $\TO'$ that is a model conservative extension of $\TO$. It follows that any FO-rewritability result for OMIQs with ontologies given in $\LTL_{\smash{\bool}}\Xbox$ holds for OMIQs with ontologies in $\LTLpr$.

Also recall that by $\LTLkr$ we denote the set of all \LTL{}-formulas constructed using arbitrary Boolean operators from at most two \LTL{} atomic concepts prefixed with any sequence of unary operators from $\Lbox$, $\Lnext$, $\Ldiamond$, $\Rbox$, $\Rnext$ and~$\Rdiamond$. Then, using  the transformations introduced above, one can construct for every
$\LTLkr$ ontology $\TO$ an $\LTL_{\smash{\krom}}\Xallop$-ontology $\TO'$ that
is a model conservative extension of $\TO$.

Denote by $\LTLho$ the set of implications $\varkappa_1\rightarrow \varkappa_2$, where $\varkappa_1$ is constructed using $\wedge$, $\Lbox$, $\Rbox$, $\Ldiamond$, $\Rdiamond$, and $\varkappa_{2}$ is constructed using
$\wedge$, $\Lbox$ and $\Rbox$. Then, for every
ontology $\TO$ consisting of formulas in $\LTLho$, there exists an $\LTL_{\smash{\horn}}\Xbox$ ontology $\TO'$ that is a model conservative extension of $\TO$.
%
%

The following observation, also based on model conservative extensions, will be required  in Section~\ref{sec:ltl-tomaq:unary}. It is not hard to see that, for any $\LTL_\frag^{\op}$ ontology $\TO$, one can construct an $\LTL_\frag^{\op}$ ontology $\TO'$, possibly using some fresh atomic concepts, such that $\TO'$ contains no nested temporal operators, and $\TO'$ is a model conservative  extension
of $\TO$. For example, $\Rbox\Lnext A \to B$ in $\TO$ can be replaced with two clauses $\Lnext A \to C$ and $\Rbox C \to B$, for  a fresh atomic concept~$C$. Thus, in what follows and where convenient, we can assume without loss of generality that our ontologies do not contain nested temporal operators.

\section{Rewriting $\LTL\Xallop_\bool$ OMIQs into $\FO(\RPR)$}
\label{sec:LTL}

It follows from Example~\ref{exampleBvNB:2} that the languages $\FO(<)$ and $\FOE$ are not sufficiently expressive as target languages for rewritings of arbitrary \OMIQ{}s. In the next theorem, we show, however, that all of them can be rewritten into $\FO(\RPR)$. As follows from~\cite[Proposition~4.3]{DBLP:journals/iandc/ComptonL90}, this means that  we can also rewrite \OMIQ{}s into the language $\MSO(<)$ of \emph{monadic second-order} formulas that are built from atoms of the form $A(t)$ and $t < t'$ using the Booleans, first-order quantifiers $\forall t$ and $\exists t$, and second-order quantifiers $\forall A$ and $\exists A$~\cite{Buchi60}.
(An $\LTL\Xallop_\bool$ \OMIQ{} $\q = (\TO,\varkappa)$ is said to be \emph{$\MSO(<)$-rewritable} if there is an $\MSO(<)$-formula $\rq(t)$ such that $\ans(\q,\A) = \{\,\ell\in\tem(\A) \mid \SA \models \rq(\ell)\,\}$, for any data instance $\A$.)

\begin{remark}\label{re:exp}\em
It is worth reminding the reader (see~\cite{Straubing94,abs-1011-6491,DBLP:books/ws/phaunRS01/ComptonS01} for details) that, by the B\"uchi--Elgot--Trakhtenbrot Theorem~\cite{Buchi60,Elgot61,Trakh62}, $\MSO(<)$-sentences define exactly the class of regular languages, $\FOE$-sentences define exactly the class of regular languages in (non-uniform) \ACz, and $\FO(<)$-sentences define the class of star-free regular languages. $\FO(\RPR)$, extended with the predicates $\plus$ and $\TIMES$ or, equivalently, with one predicate \textsc{bit}~\cite{Immerman99}, captures exactly the languages in \NCo{} (which are not necessarily regular)~\cite{DBLP:journals/iandc/ComptonL90}.
On the other hand, a close connection between \LTL{} and finite automata has been known in formal verification (model checking) since the 1980s~\cite{VardiW86}.
\end{remark}

\begin{theorem}\label{fullLTL}
All $\LTL\Xallop_\bool$ \OMIQ{}s are $\FO(\RPR)$- and $\MSO(<)$-rewritable, and so  answering such \OMIQ{}s is in \NCo{} for data complexity.
\end{theorem}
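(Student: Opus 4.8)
The plan is to show that membership in the certain-answer set is a \emph{regular} property of the (time-stamped, position-marked) data word, and then to read off both rewritings from regularity. Fix $\q=(\TO,\varkappa)$ with $\TO$ an $\LTL\Xallop_\bool$ ontology; by the remark on nested operators we may assume $\TO$ uses no nesting. A point $\ell\in\tem(\A)$ fails to be a certain answer exactly when some model $\Mmf\models(\TO,\A)$ has $\ell\in(\neg\varkappa)^\Mmf$; note this single condition also yields $\ell\in\ans(\q,\A)$ in the vacuous case where $(\TO,\A)$ is inconsistent. So I would work with the language
\[
L_{\mathrm{no}} \ =\ \bigl\{\,(\A,\ell)\ \mid\ \text{some }\Mmf\models(\TO,\A)\text{ has }\ell\in(\neg\varkappa)^\Mmf \,\bigr\},
\]
where $\A$ is encoded as a word over $2^{\Sigma}$ (position $n$ carrying the atoms of $\A$ true at $n$) with one marked position $\ell$, and aim to show that $L_{\mathrm{no}}$ --- hence its complement, the certain-answer language --- is regular.

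First I would set up a standard type construction. Let $\mathsf{cl}$ be the closure of $\TO\cup\{\varkappa\}$ under subformulas and single negation, together with the one-step unfoldings of the temporal operators ($\Rbox C\leftrightarrow \Rnext(C\wedge\Rbox C)$, dually for $\Lbox$, and the fixpoint unfoldings of $\U,\Si$ from the query). A \emph{type} is a maximal $\mathsf{cl}$-subset consistent with the Booleans and with $\TO$; let $\tpset$ be the finite set of types, of size at most $2^{|\mathsf{cl}|}$. Every model $\Mmf$ induces a bi-infinite sequence $(\type_n)_{n\in\Z}$ of types, and conversely such a sequence arises from a model iff it satisfies the $\Rnext/\Lnext$ coherence conditions between adjacent types, the unfolding conditions for $\Rbox,\Lbox$, data-consistency on $\tem(\A)$ (each $\type_n$ contains every $A$ with $A(n)\in\A$), and --- crucially --- that every \emph{eventuality} is discharged by a witness. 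These eventualities are precisely the negated boxes ($\neg\Rbox C=\Rdiamond\neg C$, and dually) and the diamonds and untils of $\varkappa$. The target condition $\ell\notin\ans(\q,\A)$ then becomes $\neg\varkappa\in\type_\ell$.

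The heart of the argument, and the step I expect to be the main obstacle, is to confine the two \emph{infinite tails} $(\type_n)_{n<0}$ and $(\type_n)_{n>\max\A}$ --- where no data constraints apply --- to finitely much information. Here I would precompute two finite sets $\mathsf{GoodLeft},\mathsf{GoodRight}\subseteq\tpset$ of admissible boundary types: $\type\in\mathsf{GoodRight}$ iff there is a right-infinite $\TO$-consistent continuation from $\type$ in which every future eventuality demanded by $\type$ is eventually witnessed, and dually for $\mathsf{GoodLeft}$. Realizability of such one-sided tails is decided by a lasso / ultimately-periodic argument on the finite transition graph over $\tpset$; the delicate point is that query eventualities such as $\Rdiamond D$ or $C\U D$ at a point of $\tem(\A)$ may have their only witness beyond $\max\A$, so the boundary summaries must record exactly which outward-pointing eventualities a tail can discharge, and symmetric care is needed for $\Si,\Ldiamond$ and $\neg\Lbox$ on the left.

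With the tails summarised, I would assemble a nondeterministic finite automaton $\mathfrak{A}$ that reads the finite marked word for $(\A,\ell)$, guesses boundary types in $\mathsf{GoodLeft}/\mathsf{GoodRight}$ and a type sequence on $\tem(\A)$ obeying all local, $\Rnext/\Lnext$ and $\Rbox/\Lbox$ conditions, discharges interior eventualities within the window and outward ones against the boundary summaries, and checks $\neg\varkappa\in\type_\ell$ at the mark. Then $L_{\mathrm{no}}=L(\mathfrak{A})$ is regular, so the certain-answer language $L_{\mathrm{yes}}$ is regular too. Finally I would convert regularity into the two rewritings. For $\MSO(<)$, the B\"uchi--Elgot--Trakhtenbrot Theorem yields an $\MSO(<)$-formula $\rq(t)$ with the mark as free variable $t$ defining $L_{\mathrm{yes}}$. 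For $\FO(\RPR)$, I would determinise $\mathfrak{A}$ (subset construction) and encode its deterministic run directly as relational primitive recursion: one recursive relation $Q_q(t)$ per state, meaning ``the automaton is in state $q$ after reading positions $0,\dots,t$'', with transition formulas $\varTheta_q$ inspecting the atoms asserted at $t$ and the equality $t=\ell$, and acceptance by the first-order test $\bigvee_{q\in F}Q_q(\max)$ --- exactly the schema of Example~\ref{exampleBvNB:2}. Since $\FO(\RPR)$ evaluation is in $\NCo$ for data complexity, answering such an OMIQ is in $\NCo$, completing the proof.
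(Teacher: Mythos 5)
Your proposal is correct in outline, but it takes a genuinely different route from the paper's. You work with syntactic (tableau) types and make the correspondence between models and bi-infinite type sequences explicit, so you must account for eventualities yourself: the infinite tails to the left of $\min\A$ and to the right of $\max\A$ are summarised by sets $\mathsf{GoodLeft},\mathsf{GoodRight}$ of admissible boundary types computed by a lasso argument, an NFA over the marked data word then recognises the complement of the certain-answer language, and both rewritings follow from regularity (B\"uchi--Elgot--Trakhtenbrot for $\MSO(<)$, determinisation plus a run-encoding for $\FO(\RPR)$). The paper sidesteps precisely the step you call the main obstacle by defining types and the successor relation \emph{semantically}: a type is a maximal subset of the closure that is consistent with $\TO$, i.e.\ realised at some instant of some model of $\TO$, and $\suc(\tp,\tp')$ holds iff some model of $\TO$ realises $\tp$ and $\tp'$ at consecutive instants. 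With this choice every boundary type comes equipped with ready-made infinite tails (the model witnessing its consistency provides them), so no eventuality bookkeeping or lasso analysis is needed; what must be proved instead is a gluing lemma (Lemma~\ref{recursion}, via~\eqref{glue}) showing that prefix-realisability and suffix-realisability of one and the same type at $\ell$ compose into a single model of $(\TO,\A)$. Concretely, the paper's $\FO(\RPR)$-rewriting uses one recursive relation \emph{per type}, computing NFA-style reachability forwards over $\A|_{\le\ell}$ and, by a second parameterised recursion, over $\A|_{\ge\ell}$; since $\FO(\RPR)$ relations are functionally determined, the tuple of per-type relations already \emph{is} the subset construction, so your explicit determinisation (exponentially many relations, one per DFA state) is avoidable. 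Likewise the paper obtains the $\MSO(<)$-rewriting directly, by quantifying over valuations $\varkappa^{\ast}$ of the subformulas and asserting the one-step semantics $\Xi(t)$, rather than through B\"uchi--Elgot--Trakhtenbrot. Your route buys a clean intermediate statement (regularity of the certain-answer language of marked words) and reuse of off-the-shelf theorems; the paper's buys shorter rewritings and freedom from the one delicate point in yours: as written, your $\mathsf{GoodRight}$ asks the tail to witness only the eventualities of the boundary type itself, whereas for the assembled sequence to be a model the tail must discharge the eventualities of \emph{every} type occurring along it (a self-fulfilling lasso); the ultimately periodic argument you invoke does deliver this stronger condition, but your definition should state it.
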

\begin{proof}
Let $\q = (\TO,\varkappa_0)$ be an $\LTL\Xallop_\bool$ \OMIQ{}. Denote by $\subq$ the set of subconcepts of temporal concepts in $\TO$ and~$\varkappa_0$ together with their negations. A \emph{type} for $\q$ is a  maximal subset $\tp$ of $\subq$ consistent with $\TO$. Let $\tpset = \{\tp_1, \dots, \tp_n \}$ be the set of all such types for $\q$. Given a model $\Mmf$ of $\TO$ and any $k \in \Z$, we denote by $\tp_\Mmf(k) = \{\, \varkappa\in \subq \mid k \in \varkappa^{\Mmf} \,\}$ the \emph{type of $k$ in} $\Mmf$. We write $\suc(\tp, \tp')$ to say that there exist a model $\Mmf$ of $\TO$ and a number $k \in \Z$ such that $\tp = \tp_\Mmf(k)$ and $\tp' = \tp_\Mmf(k+1)$.
Given a data instance $\A$ and $\ell\in\tem(\A)$, we denote $\A|_{\le \ell} = \{\,A(k) \in \A \mid k \le \ell\,\}$ and $\A|_{\ge \ell} = \{\,A(k) \in \A \mid k \ge \ell\,\}$. Note that $\tem(\A|_{\le \ell}) = \{\,0,\dots,\ell\,\}$, while $\tem(\A|_{\ge \ell}) = \tem(\A) = \{\,0,\dots,\max(\A) \,\}$.

For every type $\tau$ for $\q$, we construct two $\FO(\RPR)$-formulas $\varphi_\tp(t_0)$ and $\psi_\tp(t_0)$ by taking:
\begin{align*}
\varphi_\tp(t_0) \ \ = \ \ \left[ \begin{array}{l}
R_{\tp_1}(t) \equiv \vartheta_{\tp_1}\\
\dots\\
R_{\tp_n}(t) \equiv \vartheta_{\tp_n}
\end{array}\right] \ \ R_{\tp}(t_0),
\hspace*{2cm}
\psi_\tp(t_0) \ \ = \ \ \left[ \begin{array}{l}
Q_{\tp_1}(t_0,t) \equiv \eta^{\tp}_{\tp_1}\\
\dots\\
Q_{\tp_n}(t_0,t) \equiv \eta^{\tp}_{\tp_n}
\end{array}\right] \ \ \bigvee_{\tp_i \in \tpset} Q_{\tp_i}(t_0,\max),
\end{align*}
where $R_{\tp_i}(t)$ and $Q_{\tp_i}(t_0,t)$, for $\tp_i \in \tpset$, are relation variables  and
\begin{align*}
& \vartheta_{\tp_i}(t,R_{\tp_1}(t-1),\dots,R_{\tp_n}(t-1)) ~=~ \typef_{\tp_i}(t) \land \bigl((t=0) \lor \bigvee_{\tp' \in \tpset,\ \mathsf{suc}(\tp', \tp_i)} R_{\tp'}(t-1)\bigr),\\[5pt]
& \eta^{\tp}_{\tp_i}(t_0, t, Q_{\tp_1}(t_0,t-1),\dots,Q_{\tp_n}(t_0,t-1)) ~=~
\begin{cases}\displaystyle
\typef_{\tp_i}(t) \land \bigl((t=t_0) \lor \bigvee_{\tp' \in \tpset,\ \mathsf{suc}(\tp', \tp_i)} Q_{\tp'}(t_0,t-1)\bigr), & \text{ if } \tp_i = \tp;\\ \displaystyle
\typef_{\tp_i}(t) \land  \bigvee_{\tp' \in \tpset,\ \mathsf{suc}(\tp', \tp_i)} Q_{\tp'}(t_0,t-1), & \text{ if } \tp_i \ne \tp,
\end{cases}
\end{align*}
and $\typef_\tp(t)$  is the conjunction of all
$\neg A(t)$ with~$\neg A \in \tp$.\nz{changed} We require the following auxiliary lemma.

\begin{lemma}\label{recursion}
For any data instance $\A$ and any $\ell \in \tem(\A)$,
\begin{itemize}
\item[$(i)$] $\SA \models \varphi_\tau(\ell)$ iff there is a model $\Mmf$ of $\TO$ and $\A|_{\le \ell}$ such that $\tp = \tp_\Mmf(\ell)$;

\item[$(ii)$] $\SA \models \psi_\tau(\ell)$ iff there is a model $\Mmf$ of $\TO$ and $\A|_{\ge \ell}$ such that $\tp = \tp_\Mmf(\ell)$;

\item[$(iii)$] $\SA \models \varphi_\tau(\ell) \land \psi_\tau(\ell)$ iff there is a model $\Mmf$ of $\TO$ and $\A$ such that $\tp = \tp_\Mmf(\ell)$.
\end{itemize}
\end{lemma}
\begin{proof}
$(i)$ We proceed by induction on $\ell$. Suppose $\ell = \min \A = 0$. If $\SA \models R_{\tp} (0)$, then $A(0) \in \A$ implies $A \in \tp$, and so, since $\tp$ is consistent with $\TO$ (which means that $\TO$ is satisfiable in some model $\Mmf$ with $\tp_\Mmf(0) = \tp$), there is a model $\Mmf$ of $\TO$ and $\A|_{\le 0}$ with $\tp = \tp_\Mmf(0)$. Conversely, let $\Mmf$ be a model of $\TO$ and $\A|_{\le 0}$, and  $\tp = \tp_\Mmf(0)$. It follows from the structure of $\vartheta_\tau$ that $\SA \models \typef_{\tp}(t) \land (t=0)$, and so $\SA \models R_\tau(0)$ and $\SA \models \varphi_\tau(0)$. 

Assume now that $(i)$ holds for $\ell - 1 \ge 0$. Suppose $\SA \models R_{\tp} (\ell)$. Then there is $\tp'$ such that $\suc(\tp', \tp)$ and $\SA \models\typef_{\tp}(\ell) \land R_{\tp'}(\ell-1)$. By the induction hypothesis, there is a model $\Mmf'$ of $\TO$ and $\A|_{\le \ell -1}$ such that $\tp' = \tp_{\Mmf'}(\ell-1)$. As $\suc(\tp', \tp)$ holds, there is a model $\Mmf''$ of $\TO$ with $\tp' = \tp_{\Mmf''}(\ell -1)$ and $\tp = \tp_{\Mmf''}(\ell)$. The required model $\Mmf$ of $\TO$ and~$\A|_{\le \ell}$ is obtained by taking, for any atomic concept $A$,
\begin{equation}\label{glue}
A^\Mmf = \bigl\{\, k < \ell \mid k \in A^{\Mmf'}\,\bigr\} \cup \bigl\{\, k \ge \ell \mid k \in A^{\Mmf''}\,\bigr\}.
\end{equation}
The converse implication is straightforward by the induction hypothesis. 

\smallskip

$(ii)$ Suppose $\SA \models \psi_\tau(\ell)$. We claim  that there are types $\tp_{\ell}, \tp_{\ell+1},\dots,\tp_{\max(\A)}$ for $\q$ such that $\tp_{\ell} = \tp$, 
$\mathsf{suc}(\tp_{{j}},\tp_{{j+1}})$, for $\ell \leq j < \max(\A)$, and 
\begin{equation*}
\SA \models \typef_{\tp_\ell}(\ell) \land \typef_{\tp_{\ell+1}} (\ell+1) \land \dots \land \typef_{\tp_{{\max(\A)}}}(\max(\A)).
\end{equation*}
Assuming that this claim holds, the remainder of the proof for $(ii)$ is similar to that for $(i)$.  

To prove the claim, observe first that there is a type $\tau_{\max(\A)}$ with $\SA \models Q_{\tau_{\max(\A)}}(\ell, \max(\A))$. Because of $\eta^{\tp}_{\tp_{\max(\A)}}$, we immediately obtain $\SA \models \typef_{\tp_{{\max(\A)}}}(\max(\A))$. Moreover, if $\max(\A) > \ell$, then there is a type $\tau_{\max(\A)-1}$ such that $\suc(\tau_{\max(\A)-1},\tp_{\max(\A)})$, $\SA \models \typef_{\tau_{\max(\A)-1}}(\max(\A)-1)$ and $\SA \models Q_{\tau_{\max(\A)-1}}(\ell, \max(\A)-1)$. If $\max(\A)-1 > \ell$, we can find $\tp_{\max(\A)-2}$ with the properties required by the claim. We proceed in this way until we find the required sequence. Now, it remains to check that $\tp_{\ell} = \tp$. Observe that $\SA \not \models Q_{\tp'}(\ell, \ell')$ for all $\ell' < \ell$ and $\tp' \in \mathfrak T$. Therefore, $\SA \models Q_{\tp_\ell}(\ell, \ell)$ implies by $\eta^{\tp}_{\tp_{\ell}}$ that $\tp = \tp_{\ell}$. The converse implication is  straightforward. 

\smallskip

$(iii)$ We take the models $\Mmf'$ and $\Mmf''$ provided by $(i)$ and $(ii)$, respectively, and construct the required model $\Mmf$ using~\eqref{glue}. This completes the proof of the lemma.
\qed
\end{proof}

It follows immediately from Lemma~\ref{recursion} that the formula
\begin{align*}
\rq(t_0) \ \ = \ \ \neg \bigvee_{\tp \in \tpset,\ \varkappa_0 \notin \tp} ( \varphi_\tp(t_0) \land \psi_\tp(t_0)).
\end{align*}
is an $\FO(\RPR)$-rewriting of $\q = (\TO, \varkappa_0)$. It can be transformed into an $\MSO(<)$-rewriting  in the same way as in the proof of~\cite[Proposition~4.3]{DBLP:journals/iandc/ComptonL90}. 
It is, however, instructive to compute the $\MSO(<)$-rewriting directly from the OMIQ~$\q$. To this end, for every $\varkappa$ in $\subq$, we take a fresh unary predicate
$\varkappa^*(t)$ with the intuitive meaning `$\varkappa$ is true at $t$'\!.
Let~$\Xi(t)$ be the conjunction of the following FO$(<)$-formulas:
\begin{align*}
& A(t) \to A^\ast(t), && \text{ for every atomic } A\in \subq,\\
& (\neg\varkappa)^\ast(t) \leftrightarrow \neg \varkappa^\ast(t),&&\text{ for every } \neg\varkappa \in \subq,\\
& (\varkappa_1\land \varkappa_2)^{\ast}(t) \leftrightarrow \varkappa_1^{\ast}(t) \land \varkappa_2^{\ast}(t), &&\text{ for every }\varkappa_1\land \varkappa_2 \in \subq,\\
& (t < \max) \to \bigl((\Rnext \varkappa)^\ast(t) \leftrightarrow \varkappa^\ast(t+1)\bigr), && \text{ for every }\Rnext\varkappa \in \subq,\\
& (t < \max) \to \bigl((\Rbox \varkappa)^\ast(t) \leftrightarrow \varkappa^\ast(t+1) \land (\Rbox\varkappa)^\ast(t + 1)\bigr), && \text{ for every }\Rbox\varkappa\in\subq,\\
& (t < \max) \to \bigl((\Rdiamond \varkappa)^\ast(t) \leftrightarrow \varkappa^\ast(t+1) \lor (\Rdiamond\varkappa)^\ast(t + 1)\bigr), && \text{ for every }\Rdiamond\varkappa\in\subq,\\
& (t < \max) \to \bigl((\varkappa_1 \U \varkappa_2)^{\ast}(t) \leftrightarrow  \varkappa_2^{\ast}(t+1)
\vee (\varkappa_1^{\ast}(t+1) \wedge (\varkappa_1 \U \varkappa_2)^{\ast}(t+1))\bigr), &&\text{ for every } \varkappa_1\U \varkappa_2 \in \subq,
\end{align*}
and the corresponding formulas for $\Lnext$, $\Lbox$, $\Ldiamond$, and $\Si$ in which $(t < \max)$ and $(t+1)$ are replaced by $(\min < t)$ and $(t-1)$, respectively.
Let  $\tpset$ be the set of all \emph{types} for $\q$.
The following formula is an MSO$(<)$-rewriting of $\q$:
\begin{equation*}
\rq(t_0) \ \ = \ \ \forall \pmb{\varkappa^{\ast}} \,\Bigl(\forall t\, \Bigl( \Xi(t) \land \bigvee_{\tp \in \tpset}\bigwedge_{\varkappa\in \tp} \varkappa^{\ast}(t)\Bigr) \to \varkappa_0^{\ast}(t_0) \Bigr),
\end{equation*}
where $\pmb{\varkappa^{\ast}}$ is a list of all predicate names $\varkappa^{\ast}$ for $\varkappa \in \subq$. To show this, consider a data instance $\A$ and $\ell\in\tem(\A)$.

Suppose first that $\ell\not\in \ans(\q,\A)$. Take a model $\Mmf$ of $(\TO,\A)$ with $\ell\notin \varkappa_0^\Mmf$. Interpret the predicates $\varkappa^{\ast}$ with $\varkappa\in \subq$ in $\SA$ by setting $\mathfrak{a}(\varkappa^{\ast})= \varkappa^\Mmf \cap \tem(\A)$. Then $\SA \models^{\mathfrak{a}} \forall t\, \big[ \Xi(t) \land \bigvee_{\tp\in \tpset}\bigwedge_{\varkappa\in \tp} \varkappa^{\ast}(t)\big]$ but $\SA\not\models^{\mathfrak{a}} \varkappa_{0}^*(\ell)$, and so $\SA\not\models \rq(\ell)$.

Conversely, suppose $\SA \not\models \rq(\ell)$. Let $\mathfrak{a}$ be an interpretation of the predicates $\varkappa^{\ast}$ with $\varkappa\in \subq$ in $\SA$
such that $\SA \models^{\mathfrak{a}} \forall t\, \big[ \Xi(t) \land \bigvee_{\tp\in \tpset}\bigwedge_{\varkappa\in \tp} \varkappa^{\ast}(t)\big]$ and $\SA\not\models^{\mathfrak{a}} \varkappa_{0}^*(\ell)$.
Since every $\tp\in \tpset$ is consistent with $\TO$, there are models $\Mmf_{\min}$ and $\Mmf_{\max}$ of $\TO$ that correspond to the types selected by the disjunction in $\rq(t_0)$ at $\min \A$ and $\max \A$: for all $\varkappa \in \subq$,
\begin{equation*}
\min \A\in \varkappa^{\Mmf_{\min}} \ \ \text{ iff }\ \   \min\A \in \mathfrak{a}(\varkappa^\ast) \qquad\text{ and } \qquad
\max \A\in \varkappa^{\Mmf_{\max}} \ \ \text{ iff }\ \   \max\A \in \mathfrak{a}(\varkappa^\ast).
\end{equation*}
Define an interpretation $\Mmf$ by `stitching' together $\Mmf_{\min}$, $\SA$ with $\mathfrak{a}$ and $\Mmf_{\max}$: for $k< \min\A$, we set $k \in A^{\Mmf}$ iff $k\in A^{\Mmf_{\min}}$; for~$k\in \tem(\A)$, we set $k \in A^{\Mmf}$ iff $k\in \mathfrak{a}(A^{\ast})$; and for $k> \max\A$, we set $k \in A^{\Mmf}$ iff  $k\in A^{\Mmf_{\max}}$. Then $\Mmf$ is a model of~$(\TO,\A)$ but $\ell\not\in \varkappa_0^{\Mmf}$, and so $\ell\notin \ans(\q,\A)$.
\qed
\end{proof}

The next theorem establishes a matching $\NCo$ lower bound for $\LTL_\horn\Xnext$ \OMAQ{}s as well as $\LTL_\krom\Xbox$ and $\LTL_\krom\Xnext$ \OMPIQ{}s by generalising Example~\ref{exampleBvNB:2} and using the fact that  there are $\NCo$-complete regular languages~\cite{DBLP:journals/jcss/Barrington89}.
\begin{theorem}\label{ex:NFA}
There exist \textup{(}i\textup{)} an $\LTL_\horn\Xnext$ \OMAQ{} and \textup{(}ii\textup{)} $\LTL_\krom\Xbox$ and $\LTL_\krom\Xnext$ \OMPIQ{}s, the answering problem for which is $\smash{\NCo}$-hard for data complexity.
\end{theorem}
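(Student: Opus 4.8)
The plan is to reduce an $\NCo$-complete regular word problem to the OMQ answering problem. By~\cite{DBLP:journals/jcss/Barrington89}, there is a regular language $L$ over some finite alphabet $\Sigma$ whose word problem is $\NCo$-complete; equivalently, $L$ is recognised by a finite \emph{deterministic} automaton $\mathcal{A} = (S, \Sigma, \delta, s_0, F)$ whose transition monoid is non-solvable (e.g.\ the language encoding iterated multiplication in the group $S_5$). Generalising Example~\ref{exampleBvNB:2}, I will encode the run of $\mathcal{A}$ on an input word by the forward propagation of state-tracking concepts along the discrete time line, so that membership of a given timestamp in a certain answer faithfully records whether $\mathcal{A}$ accepts.

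\textbf{Part~(i): the $\LTL_\horn\Xnext$ OMAQ.} For each state $s \in S$ introduce an atomic concept $B_s$, and for each letter $a \in \Sigma$ an atomic concept $A_a$. For every transition $\delta(s,a) = s'$ add the Horn clause
\begin{equation*}
\Lnext B_s \land A_a \to B_{s'},
\end{equation*}
which has a single concept on the right, the operator $\Lnext$ only, and at most a conjunction on the left, so the ontology $\TO$ is indeed in $\LTL_\horn\Xnext$. Given an input word $\avec{e} = (e_1,\dots,e_n) \in \Sigma^n$, take the data instance $\A_{\avec{e}} = \{\,B_{s_0}(0)\,\} \cup \{\, A_{e_i}(i) \mid 0 < i \le n\,\}$, together with a dummy fact fixing the timestamp domain as in Section~\ref{sec:KB}. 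A straightforward induction on $i$ shows that in every model of $(\TO,\A_{\avec{e}})$ the concept $B_{s_i}$ must hold at $i$, where $s_i$ is the state of $\mathcal{A}$ after reading $e_1\cdots e_i$; because $\mathcal{A}$ is deterministic, exactly one $B_s$ is forced at each point, so the encoding is unambiguous. Taking the query $A = \bigvee_{s\in F} B_s$ to be the single atomic concept for the (unique) accepting state, or, to stay strictly atomic, first collapsing the accepting states into one fresh concept via Horn clauses $B_s \to \mathit{Acc}$ for $s\in F$, we get that $n \in \ans((\TO,\mathit{Acc}),\A_{\avec{e}})$ iff $\mathcal{A}$ accepts $\avec{e}$. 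Since the map $\avec{e}\mapsto\A_{\avec{e}}$ is computable in (DLOGTIME-uniform) $\ACz$ and the word problem for $L$ is $\NCo$-hard, the OMAQ answering problem is $\NCo$-hard for data complexity.

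\textbf{Part~(ii): the Krom OMPIQs.} The difficulty here is that Krom clauses permit at most two concepts in total and forbid conjunction on the left, so the transition clause $\Lnext B_s \land A_a \to B_{s'}$ is not available; this is the main obstacle. I would overcome it by shifting the bookkeeping from the ontology into the (positive) query and using the binary clauses only to propagate a single ``state stream''. Concretely, for the $\LTL_\krom\Xnext$ case one can encode the automaton run by clauses of the form $\Rnext C_s \to C_{s'}$ (and the covering/disjointness clauses $\top \to C_s \lor \overline{C_s}$ and $C_s\land C_{s'}\to\bot$ permitted by Krom) while reading the input letters off the data through a positive temporal concept that conjoins the appropriate $A_a$-tests at successive points with the $\Rnext$/$\Ldiamond$ operators allowed in queries; the acceptance condition then becomes a positive $\Rdiamond$/$\Lbox$-formula over these concepts. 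For the $\LTL_\krom\Xbox$ case, where $\Rnext$ is unavailable in the ontology, I would instead simulate the stepwise transitions using $\Rbox$/$\Lbox$ together with auxiliary concepts marking ``the next input position'', pushing all the counting structure needed for the non-solvable monoid into the positive instance query $\varkappa$. In both subcases the key verification is again an induction showing that every model of $(\TO,\A_{\avec{e}})$ realises exactly the intended run, so that the fixed OMPIQ computes membership in the $\NCo$-complete language; the $\ACz$ reduction and hence $\NCo$-hardness follow as before. The delicate point to get right is engineering the positive query and the binary clauses so that the forced propagation is both \emph{deterministic} (no spurious models giving wrong answers) and expressible without conjunction on the left-hand side of ontology clauses.
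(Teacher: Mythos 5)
Your part (i) is correct and is essentially the paper's own proof: the same Horn clauses $\Lnext B_s \land A_a \to B_{s'}$, the same data instance $\A_{\avec{e}}$, and Barrington's theorem; the only cosmetic difference is that you merge accepting states via extra Horn clauses where the paper simply assumes a single accepting state.

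Part (ii), however, has a genuine gap. You correctly name the obstacle---Krom clauses cannot express the letter-dependent transition $\Lnext B_s \land A_a \to B_{s'}$---but your proposed fix does not overcome it. Clauses of the form $\Rnext C_s \to C_{s'}$ propagate states \emph{independently of the letter read}, so no fixed set of binary state-to-state clauses can simulate a DFA whose transition function depends on the input; for a non-solvable transition monoid the permutation applied at each step must be determined by the letter, and that three-way dependence (previous state, letter, next state) is exactly what binary clauses cannot encode. Reading the letters ``through the query'' does not repair this while the (incorrect) propagation clauses remain in the ontology, and the $\Xbox$ sketch with ``next-position markers'' is not a construction at all. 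The paper's solution is structurally different: it puts \emph{no propagation whatsoever} in the ontology. The ontology consists only of the temporal-operator-free Krom axioms $\top \to B_q \lor \overline{B}_q$ and $B_q \land \overline{B}_q \to \bot$, so a model is just an arbitrary labelling of time points by state-concepts, and the positive query
\[
\Bigl[\bigvee_{q\to_e q'} \Ldiamond^+ \bigl( \Lnext B_{q} \land A_e \land \overline{B}_{q'}\bigr)\Bigr] \ \lor \ B_{q_1},
\]
where $\Ldiamond^+ C$ abbreviates $C \lor \Ldiamond C$, expresses by contraposition that the labelling either violates some transition somewhere in the past or marks the accepting state at the answer point. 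Since certain answering quantifies over \emph{all} models (all labellings), the certain answer at $\max(\A_{\avec{e}})$ is ``yes'' iff the DFA accepts $\avec{e}$. Note also that because this ontology contains no temporal operators at all, one and the same construction serves simultaneously as an $\LTL_\krom\Xbox$ and an $\LTL_\krom\Xnext$ \OMPIQ{}, which is precisely what the theorem requires. Without this idea---shifting the entire simulation into the query, using complement concepts to make the implication positive, and exploiting the universal quantification over models---your part (ii) does not go through.
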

\begin{proof}
Let $\mathfrak A$ be a DFA with a tape alphabet $\Gamma$, a set of states $Q$, an initial state $q_0\in Q$, an accepting state $q_1\in Q$ and a transition function $\to$: we write $q \to_e q'$ if $\mathfrak{A}$ moves to a state $q'\in Q$ from a state $q\in Q$ while reading $e\in\Gamma$. (Without loss of generality we assume that $\mathfrak A$ has only one accepting state.)  We take atomic concepts $A_e$ for  tape symbols $e\in\Gamma$ and atomic concepts $B_q$ for states $q\in Q$, and consider the \OMAQ{} $ \q = (\TO,B_{q_1})$, where
\begin{equation*}
\TO \ \ = \ \ \bigl\{\,\Lnext B_{q} \land A_e \to B_{q'} \mid q \to_e q'\,\bigr\}.
\end{equation*}
For any input word $\avec{e}=(e_1\dots e_{n})\in \Gamma^*$, we set
\begin{equation*}
\A_{\avec{e}} \ \ = \ \  \bigl\{\,B_{q_0}(0)\,\bigr\} \ \ \cup \ \ \bigl\{\,A_{e_i}(i) \mid 0 < i \leq n\,\bigr\}.
\end{equation*}
It is easy to see that $\mathfrak A$ accepts $\avec{e}$ iff
$\max(\A_{\avec{e}}) \in \ans(\q,\A_{\avec{e}})$. Thus, we obtain
(\emph{i}).
For (\emph{ii}), we take $\q' = (\TO',\varkappa')$ with
\begin{align*}
 \TO' \ \ &= \ \ \bigl\{ \, \overline{B}_q \land B_q \to \bot, \ \top \to \overline{B}_q \lor B_q  \ \mid \ q \in Q \, \bigr\}, &
\varkappa' \ \ &= \ \ \Bigl[\bigvee_{q\to_e q'} \Ldiamond^+ ( \Lnext B_{q} \land A_e \land \overline{B}_{q'})\Bigr] \ \ \lor \ \ B_{q_1} ,
\end{align*}
where $\Ldiamond^{\scriptscriptstyle+} C$ is an abbreviation for $C \lor \Ldiamond C$.
(Intuitively, $\overline{B}_q$ represents the complement of $B_q$, and $\varkappa'$\nz{was $\varkappa$} is equivalent to formula $\bigl[\,\bigwedge_{q\to_e q'} \Lbox^{+} (\Lnext B_{q} \land A_e \to B_{q'})\,\bigr] \to B_{q_1}$, where $\Lbox^+C$ is an abbreviation for $C\land \Lbox C$.) It follows that $\mathfrak A$ accepts~$\avec{e}$ iff $\max(\A_{\avec{e}}) \in \ans(\q',\A_{\avec{e}})$.
\qed
\end{proof}

We now establish the $\FO(<)$- and $\FO(<,+)$-rewritability results in Table~\ref{LTL-table} for various subsets of $\LTL\Xallop_\bool$ \OMPIQ{}s. We begin with ontology-mediated \emph{atomic} queries (\OMAQ{}s) and describe two types of automata-based constructions of $\FO(<)$- and $\FOE$-rewritings.

\section{$\LTL_{\smash{\bool}}^\Box$ \OMAQ{}s: Partially Ordered Automata}
\label{sec:ltl-tomaq:stutter}

Our first rewriting technique for \OMAQ{}s is based on the NFA construction by~\citeauthor{VardiW86}~\cite{VardiW86}. Consider  an $\LTL_\bool\Xallop$ ontology $\TO$. We define an NFA~$\nfao$ that recognises data instances~$\A$ such that $(\TO,\A)$ is consistent. The data instances  are represented as words of the form
$X_{\min \A}, X_{\min \A + 1},\dots, X_{\max \A}$, where
\begin{equation*}
X_i = \bigl\{\, B \mid B(i) \in \A  \text{ and } B \text{ occurs in } \TO \,\bigr\},\qquad \text{ for } i \in\tem(\A).
\end{equation*}
Thus, the alphabet of $\nfao$ comprises all the subsets of atomic concepts $B$ that occur in $\TO$. To define its states, denote by $\subo$ the set of subconcepts of $\TO$ and their negations. Then the set $\tpset$ of states of the automaton~$\nfao$ is the set of types for $\TO$, that is, maximal subsets $\tp$ of $\subo$ consistent with $\TO$.  Note that every $\tp\in \tpset$ contains either $B$ or $\neg B$, for each atomic concept $B$ in $\TO$.  Finally, for any states $\tp,\tp' \in \tpset$ and an alphabet symbol~$X$, the NFA $\nfao$ has a transition $\tp \to_X \tp'$ just in case the following conditions hold:
\begin{align}
\label{eq:path:abox}
& X \subseteq \tp',\\
\label{eq:path:next-consistency}
& \Rnext C \in \tp  ~\text{ iff }~ C \in \tp', && \Lnext C \in \tp' ~\text{ iff }~ C \in \tp,\\
\label{eq:path:box-monotonicity}
& \Rbox C \in \tp  ~\text{ iff }~ C, \Rbox C \in \tp', && \Lbox C \in \tp' ~\text{ iff }~ C, \Lbox C \in \tp.
\end{align}
Clearly, $\tp\to_X\tp'$ implies $\tp\to_\emptyset\tp'$, for all $X$, and so we omit the $\emptyset$ subscript in $\to_{\emptyset}$ in the sequel.
Since all $\tp$ in $\tpset$ are consistent with $\TO$,  every state in $\nfao$ has a $\to$-predecessor and a $\to$-successor, and all states in $\nfao$ are initial and accepting. We say that a sequence
\begin{equation}\label{path}
\pi \ \ = \ \  \tp_0\to \tp_1\to \ldots \to \tp_{m-1} \to \tp_m
\end{equation}
of states in $\nfao$  (also called a \emph{path})
\emph{accepts} a word~$X_0,X_1,\dots,X_m$ if $X_i \subseteq \tp_i$, for all $i$ with $0 \leq i \leq m$, which means that the NFA~$\nfao$ contains a path $\tp_{-1}\to_{X_0} \tp_0\to_{X_1} \tp_1\to_{X_2} \ldots \to_{X_{m-1}} \tp_{m-1} \to_{X_m} \tp_m$, for some $\tp_{-1}\in\tpset$.

Let $\q = (\TO,A)$ be an $\LTL\Xallop_\bool$ \OMAQ{}, for an atomic concept $A$ that occurs in $\TO$. It should be clear now that, for any data instance~$\A$ with $\min\A = 0$ represented as $X_0,X_1,\dots,X_m$ and any $\ell$, $0 \leq \ell\leq m$, we have $\ell \notin \ans(\q,\A)$ iff  there exists a path $\tp_0\to\ldots\to\tp_m$ in $\nfao$ accepting $X_0,X_1,\dots,X_m$ with $A \notin \tp_\ell$ (which is just another way of saying that there is a model of~$(\TO,\A)$ where~$A$ does not hold at $\ell$). This criterion can be encoded by an \emph{infinite} FO-expression
\begin{equation*}
\Psi(t) \ \ = \ \
\neg \Bigl[\bigvee_{\begin{subarray}{c}\tp_0 \to \ldots\to \tp_m\\\text{is a path in } \nfao\end{subarray}} \ \ \Bigl(
\bigwedge_{0 \leq i \leq m} \typef_{\tp_i}(i) \ \ \land \
\bigvee_{\begin{subarray}{c}0 \leq i \leq m\\A \notin\tp_i\end{subarray}} (t = i)\Bigr)\Bigr],
\end{equation*}
where the disjunction is over all (possibly infinitely many) paths, and $\typef_\tp(t)$  is a conjunction of all
$\neg B(t)$ with~$B \notin \tp$, for atomic concepts $B$ in $\TO$: the first conjunct in $\Psi(t)$ ensures, by contraposition, that any $B$ from $X_i$ also belongs to~$\tp_i$, for all~$i$, and so the path $\tp_0\to\ldots\to\tp_m$ accepts $X_0,\dots,X_m$, while the second conjunct guarantees that $A\notin\tp_\ell$ in case~$\ell = t$.
Needless to say that, for some OMAQs $\q$, the infinitary `FO-rewriting' above cannot be made  finite.

We now show that, for every $\LTL_\bool^\Box$ ontology $\TO$, the NFA
$\nfao$ constructed above can be converted into an equivalent
\emph{partially ordered} NFA $\nfao^\circ$, that is, an NFA
whose transition relation contains no cycles other than
trivial self-loops~\cite{DBLP:conf/dlt/SchwentickTV01}, which will
allow us to \emph{finitely} represent the infinitary `rewriting'
$\Psi(t)$.
\begin{theorem}\label{thm:box}
All $\LTL_\bool^\Box$ \OMAQ{}s are $\FO(<)$-rewritable.
\end{theorem}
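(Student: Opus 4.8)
The plan is to turn the infinitary expression $\Psi(t)$ into a finite $\FO(<)$-formula by exploiting the special structure of the $\Box$-only case: here condition~\eqref{eq:path:next-consistency} is vacuous, so the transitions of $\nfao$ are governed by~\eqref{eq:path:abox} and the monotonicity conditions~\eqref{eq:path:box-monotonicity} alone. For a type $\tp$, write $R(\tp) = \{\,C \mid \Rbox C \in \tp\,\}$ and $L(\tp) = \{\,C \mid \Lbox C \in \tp\,\}$. Reading~\eqref{eq:path:box-monotonicity} off a transition $\tp \to \tp'$ shows $R(\tp) \subseteq R(\tp')$ and $L(\tp') \subseteq L(\tp)$, so along every path of $\nfao$ the set $R$ grows monotonically while $L$ shrinks. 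I would call the pair $(R(\tp),L(\tp))$ the \emph{level} of $\tp$ and order levels by $R \subseteq R'$ and $L \supseteq L'$. Every transition then stays at the same level or moves to a strictly higher one; since (after the flattening of nested operators justified in Section~\ref{sec:expressivity}) only finitely many subconcepts $\Rbox C$, $\Lbox C$ occur in $\subo$, any strictly increasing chain of levels has length bounded by a constant $N$ depending only on $\TO$.

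The key lemma I would establish is that transitions \emph{within} a fixed level $(R,L)$ are memoryless. Substituting $R(\tp')=R(\tp)=R$ and $L(\tp')=L(\tp)=L$ into~\eqref{eq:path:box-monotonicity}, one checks that $\tp \to \tp'$ holds within the level if and only if $R \subseteq \tp'$ and $L \subseteq \tp$; thus the condition constrains $\tp$ and $\tp'$ \emph{individually} rather than relating them. Call a type $\tp$ of level $(R,L)$ \emph{level-coherent} if $R \cup L \subseteq \tp$. Level-coherent types of the same level can both start and target within-level transitions, and any two of them are joined by a transition in both directions, forming a clique with self-loops. Consequently a maximal sub-path of $\nfao$ staying inside one level is simply a sequence of level-coherent types (with its two endpoints instead constrained by the adjacent cross-level transitions) subject to \emph{no} constraint between consecutive members; the only requirement on the corresponding segment of the word is the local one that each symbol $X_i$ satisfy $X_i \subseteq \tp_i$ for some level-coherent $\tp_i$, i.e.\ $\bigvee_{\tp}\typef_\tp(i)$ ranging over level-coherent $\tp$ of that level. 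Collapsing each level into a single self-looping state realising exactly this local test produces the equivalent partially ordered NFA $\nfao^\circ$.

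To assemble the rewriting, recall that any accepting path visits at most $N$ levels. Hence $\ell \notin \ans(\q,\A)$ can be expressed by existentially guessing the at most $N-1$ breakpoints $\min\A \le b_1 < \dots < b_{k-1} \le \max\A$ that split $\tem(\A)$ into segments, a strictly increasing sequence of levels for these segments, and the boundary types realising the cross-level transitions at each $b_j$ (all finitely many choices). For each segment I would assert, by a guarded $\forall$ over its positions $i$, that $i$ is compatible with some level-coherent type of the segment's level; at each boundary I would assert that the guessed boundary types are compatible with their symbols and form a valid transition; and I would require that the type covering $\ell$ may be chosen to omit $A$—legitimate because within a level each position's type is selected independently, so it suffices that some level-coherent $\tp$ with $A \notin \tp$ and $X_\ell \subseteq \tp$ exist at $\ell$'s level (the case where $\ell$ is itself a breakpoint handled separately). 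Since $k$, the level sequences and the boundary types range over finite sets, negating this finite disjunction gives the desired $\FO(<)$-rewriting $\rq(t)$ of $\q$.

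The main obstacle is the passage from $\nfao$ to the partially ordered $\nfao^\circ$: one must show that same-level transitions are memoryless and that replacing a whole within-level cycle by independent per-position choices preserves not only consistency (the language of $\nfao$) but also the ability to realise $A \notin \tp$ at the queried position $\ell$. The subtlety is that $\nfao$ may genuinely contain nontrivial same-level cycles—for instance when an atomic concept is left unconstrained by $\TO$—so the original states cannot simply be ordered; it is exactly the memorylessness lemma that justifies the collapse, and extra care is needed at the segment boundaries and at $\ell$ itself, where the chosen type must additionally participate in a cross-level transition.
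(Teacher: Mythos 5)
Your proposal is correct and takes essentially the same approach as the paper's proof: your ``levels'' $(R(\tp),L(\tp))$ and ``level-coherent'' types are precisely the paper's $\sim$-equivalence classes and loops (the paper defines $\sim$ via cycles in $\nfao$ and then derives that equivalent types share their boxed concepts and that loop members contain the witnesses $C$; you define the partial order directly from the boxed concepts and derive the clique/memorylessness property, which is the paper's observation~\eqref{path-correctness} combined with its loop characterisation). Your final encoding---existentially guessed breakpoints with a strictly increasing sequence of levels, guarded universals asserting per-position compatibility inside segments, explicitly guessed boundary types, and the separate treatment of $\ell$ as an interior or breakpoint position---matches the paper's conditions~\eqref{eq:path-conditions:1}--\eqref{eq:path-conditions:3} and its formula $\rq(t)$.
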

\begin{proof}
Let $\q = (\TO,A)$ be an $\LTL_\bool^\Box$ \OMAQ{} and $\nfao$ the NFA constructed above.
Define an equivalence relation~$\sim$ on the set $\tpset$ of states of $\nfao$ by taking $\tp \sim \tp'$ iff $\tp = \tp'$ or $\nfao$ has a cycle through both $\tp$ and $\tp'$. Denote by $[\tp]$ the~$\sim$-equivalence class of $\tp\in\tpset$. It is readily seen that, if $\tp \sim \tp'$, then $\tp$ and $\tp'$ contain the same boxed concepts:
\begin{equation*}
\Rbox C \in \tp \ \ \text{ iff }\ \  \Rbox C \in \tp' \qquad\text{ and }\qquad \Lbox C \in \tp \ \ \text{ iff } \ \ \Lbox C \in \tp'.
\end{equation*}
Moreover, if $\Rbox C\in \tp$ (or, equivalently, $\Lbox C\in \tp$) and $\nfao$ contains a cycle through elements of $[\tp]$ (equivalently, if $[\tp]$ has at least two elements), then $C\in \tp'$ for all $\tp'\sim \tp$. It follows that
\begin{equation}\label{path-correctness}
\tp_1 \to_X \tp_2 \ \ \text{ iff } \ \ \tp_1' \to_X \tp_2, \qquad \text{ for all } \tp_1 \sim \tp_1',
\end{equation}
and, as observed above, if $\nfao$ contains a cycle through $[\tp]$, then $\tp \to_X \tp$ iff $\tp' \to_X \tp$, for any $\tp' \sim \tp$.
Denote by $\nfao^\circ$ an NFA with the states~$[\tp]$, for $\tp \in \tpset$, and transitions\nz{was $\tp_1' \sim\tp_2$}
\begin{equation*}
[\tp_1] \to_X [\tp_2] \quad\text{ iff }\quad \tp_1' \to_X \tp'_2, \text{ for some } \tp_1' \sim \tp_1 \text{ and } \tp'_2 \sim \tp_2.
\end{equation*}
Again, all states in $\nfao^\circ$ are initial and accepting. Clearly, the NFA $\nfao^\circ$ is partially ordered and accepts the same language as $\nfao$: for any path $\pi$ of the form~\eqref{path} in $\nfao$ accepting $X_0,X_1,\dots,X_m$,  the sequence
\begin{equation}\label{path'}
[\pi] \ \ = \ \  [\tp_0] \to [\tp_1]  \to \ldots \to  [\tp_{m-1}] \to [\tp_{m}]
\end{equation}
of states is a path in $\nfao^\circ$ accepting the same word $X_0,X_1,\dots,X_m$; the converse is due to~\eqref{path-correctness}.

Given a path $[\pi]$ of the form~\eqref{path'}, we now select the longest possible sequence of indexes
\begin{equation*}
0 = s_0 < s_1 < \dots < s_{d-1} < s_d = m,
\end{equation*}
where each $s_i$, for $i < m$, is the smallest $j$ with $\tp_j \sim \tp_{s_i}$. The respective  sequence
$[\tp_{s_0}] \to [\tp_{s_1}] \to \ldots \to [\tp_{s_{d-1}}] \to [\tp_{s_d}]$
of states, by~\eqref{path-correctness}, is also a path in $\nfao^\circ$ (we could, equivalently, remove from $[\pi]$ all $[\tp_j]$ for which there is $j' < j$ with~$\tp_{j'} \sim \tp_j$). Observe that, for any $\Lbox C\in\subo$, if $\Lbox C, \neg C\in \tp_i$ in a path of the form~\eqref{path'}, then $\Lbox C, C\in\tp_j$ for all~$j < i$, and $\Lbox C\notin\tp_j$ for all $j > i$; and symmetrically for $\Rbox C$. On the other hand, given two states $\tp$ and $\tp'$, if either~$\Box C, C\in\tp,\tp'$ or $\Box C\notin\tp,\tp'$, for each $\Box C\in\subo$,  $\Box\in\{\Lbox, \Rbox\}$, then there is a cycle through $\tp$ and $\tp'$. Indeed, by~\eqref{eq:path:box-monotonicity}, we have $\tp\to\tp'$ and $\tp'\to\tp$, and so $\tp\sim\tp'$; we call such an equivalence class~$[\tp]$ a \emph{loop}. Therefore, $[\pi]$ comprises at most~$|\TO|$ equivalence classes, each containing a pair of the form $\Box C,\neg C$, possibly separated by repeated loops: more precisely, for all $i$ with $0 \leq i < d$, we have
\begin{equation}\label{eq:path:extension}
\tp_j \to \tp_{s_i} \text{ and }\tp_{s_i} \to \tp_j, \text{ for all } j \text{ with } s_i < j < s_{i+1}.
\end{equation}
Thus, by selecting the indexes $s_0,s_1,\dots, s_{d-1}$, we eliminate loop repetitions;  the resulting paths will be called \emph{stutter-free}.
In stutter-free paths, the number of distinct loops  does not exceed $|\TO|+1$ and the number of non-loops is at most~$|\TO|$, and so we obtain $d\leq 2|\TO| + 1$.

\begin{example}\label{ex:NFAo}\em
Suppose $\TO = \{\, C \to  \Lbox B, \, \Lbox B \to A \,\}$ and $\q = (\TO,A)$.
\begin{figure}[t]
\vspace*{-3mm}\centering%
\begin{tikzpicture}[>=latex,semithick,
 state/.style={rectangle, rounded corners=2mm, fill=gray!40,draw=black,text=black,inner sep=2pt,minimum width=30mm},
 substate/.style={rectangle, rounded corners=1mm, fill=gray!10,draw=black,text=black,inner xsep=2pt,inner ysep=4pt,minimum width=20mm}]\scriptsize
  \node[state,minimum height=23mm] (A) at (-0.25,-0.1)  {};
  \node[substate] (A1) at (0,0.4) {$ A\ B\ C\ \Lbox B$};
  \node[substate] (A2) at (0,-0.8) {$A\ B\ \Lbox B$};
  \node[state,minimum height=9mm,minimum width=24mm]  (B) at (3.7,-0.8) {};
  \node[substate]  at (3.7,-0.8) {$A\ \Lbox B$};
  \node[state,minimum height=9mm,minimum width=24mm]  (Bp) at (3.7,0.6) {};
  \node[substate]  at (3.7,0.6) {$A\ C\ \Lbox B$};
  \node[state,minimum height=23mm,minimum width=70mm]   (C) at (10,-0.1) {};
  \node[substate]  (C1) at (8.1,0.5) {$A\ B$};
  \node[substate]  (C2) at (8.1,-0.7) {$B$};
  \node[substate]  (C3) at (11.9,0.5) {$A$};
  \node[substate]  (C4) at (11.9,-0.7) {$\emptyset$};
  \draw[->] (A) to node[below] {$A$} (B);
  \draw[->] (A) to node[above] {$AC$} (Bp);
  \draw[->,out=60,in=120,looseness=2] (A) to node[below,midway] {$ABC$} (A);
  \draw[->] (B) to node[below] {$AB$} (C);
  \draw[->] (Bp) to node[above] {$AB$} (C);
  \draw[->,out=60,in=120,looseness=2] (C) to node[below,midway] {$AB$} (C);
  \begin{scope}[ultra thin]\tiny\color{black!70}
  \draw[->,bend right] (A1) to node[left] {$AB$} (A2);
  \draw[->,out=150,in=210,looseness=2] (A1.north west) to node[below,sloped] {$ABC$} (A1.south west);
  \draw[->,bend right] (A2) to node[right] {$ABC$} (A1);
  \draw[->,out=150,in=210,looseness=2] (A2.north west) to node[below,sloped] {$AB$} (A2.south west);
  \draw[->,out=150,in=210,looseness=2] (C1.north west) to node[below,sloped] {$AB$} (C1.south west);
  \draw[->,out=150,in=210,looseness=2] (C2.north west) to node[below,sloped] {$B$} (C2.south west);
  \draw[->,out=30,in=-30,looseness=2] (C3.north east) to node[above,sloped] {$A$} (C3.south east);
  \draw[->,out=30,in=-30,looseness=2] (C4.north east) to node[above,sloped] {} (C4.south east);
  \draw[->,bend right] (C1) to node[left] {$B$} (C2);
  \draw[->,bend right] (C2) to node[right] {$AB$} (C1);
  \draw[->,bend right] (C3) to node[left] {} (C4);
  \draw[->,bend right] (C4) to node[right] {$A$} (C3);
  \draw[->,bend left] (C1.east) to node[above] {$A$} (C3.west);
  \draw[->,bend left] (C3.west) to node[above] {$AB$} (C1.east);
  \draw[->,bend left] (C2.east) to node[below] {} (C4.west);
  \draw[->,bend left] (C4.west) to node[below] {$B$} (C2.east);
  \draw[->] (C2) to (C3.south west);
  \draw[->] (C3) to (C2.north east);
  \draw[->] (C1) to (C4.north west);
  \draw[->] (C4) to (C1.south east);
  \end{scope}
\end{tikzpicture}
\caption{The NFA $\nfao^\circ$ in Example~\ref{ex:NFAo}.}\label{fig:NFAo}
\end{figure}
Consider an input $\{A,C\}, \{A,B\}, \{C\}, \emptyset, \{B\}, \emptyset, \emptyset, \{A\}$ and one of the paths in the NFA $\nfao$ accepting the input:
\begin{equation*}
\pi \ \ = \ \ \{A,B,C,\Lbox B\} \to \{A,B,\Lbox B\} \to \{A,B,C,\Lbox B\} \to \{A,C,\Lbox B\} \to \{A,B\}  \to \{A\}  \to \emptyset  \to \{A,B\},
\end{equation*}
where a set $X$ represents the type comprising all $B\in X$ and all $\neg B$, for $B \in\subo\setminus X$.
The NFAs $\nfao$ and $\nfao^\circ$ are shown in Fig.~\ref{fig:NFAo}: the 8 states of $\nfao$ are smaller light-gray nodes, and the 3 states of $\nfao^\circ$ are larger dark-gray nodes. An arrow labelled by $X$ denotes a set of transitions, each of which corresponds to reading any \emph{subset} of~$X$. As we mentioned above, all the states are initial and accepting; only positive literals are shown in the labels of states, and some labels on the transitions of $\nfao$ are not shown to avoid clutter.
The respective path in $\nfao^\circ$ accepting the input is
\begin{equation*}
[\pi] \ \ = \ \ [A,B,\Lbox B] \to \underline{[A,B,\Lbox B]} \to \underline{[A,B,\Lbox B]} \to [A,C,\Lbox B] \to [\emptyset]  \to \underline{[\emptyset]} \to \underline{[\emptyset]} \to [\emptyset].
\end{equation*}
By removing repeated loops (underlined above), we obtain a stutter-free path
$[A,B,\Lbox B] \to [A,C,\Lbox B]    \to [\emptyset] \to [\emptyset]$ (note that the last loop occurs twice because $s_d$ is always $m$). In fact, every stutter-free accepting path in $\nfao^\circ$ is a subsequence of either this path or $[A,B,\Lbox B] \to [A,\Lbox B]    \to [\emptyset] \to [\emptyset]$.

Consider now an extension $\TO'$ of $\TO$ with $D \to \Rbox E$ and $\Rbox E \to F$. It should be clear that the corresponding NFA~$\mathfrak{A}_{\TO'}$ contains 64 states. The stutter-free accepting paths in $\mathfrak{A}_{\TO'}^\circ$ will now have to choose whether a state with~$\Lbox B,\neg B$ occurs (\emph{i}) before, (\emph{ii}) at the same moment or (\emph{iii}) after a state with $\Rbox E, \neg E$, which gives rise to stutter-free accepting paths of length not exceeding~6: for example,
\begin{equation*}
[A,B,\Lbox B] \to [A,C,\Lbox B]  \to [\emptyset] \to [F,\Rbox E]   \to [E,F,\Rbox E] \to [E,F,\Rbox E].
\end{equation*}
\end{example}

We return to the proof of Theorem~\ref{thm:box} now. The criterion for certain answers given above in terms of $\nfao$ can be reformulated in terms of $\nfao^\circ$ by using the observation that all accepting paths in $\nfao^\circ$ have a `bounded' representation: each accepting path in $\nfao^\circ$ is determined by a stutter-free path in $\nfao^\circ$ and a corresponding sequence of indexes $0 = s_0 < s_1 < \dots < s_{d-1} < s_d = m$ because the path elements between the chosen indexes are given by~\eqref{eq:path:extension}. It should be clear that, in fact, we can consider all paths in~$\nfao^\circ$ of length not exceeding $2|\TO|+1$ (rather than stutter-free paths only); moreover, it will be convenient to include an additional type for the time instant $\ell$. So, for any data instance~$\A$ and any~$\ell \in \tem(\A)$, we have $\ell \notin \ans(\q,\A)$ iff there are $d \leq 2|\TO| + 2$, a path $[\tp_0] \to [\tp_1] \to \ldots \to [\tp_d]$ in $\nfao^\circ$ and a sequence $\min \A = s_0 < s_1 < \dots < s_{d-1} < s_d = \max \A$  satisfying the following conditions:
\begin{align}
\label{eq:path-conditions:1}
& X_{s_i} \subseteq \tp, \text{ for some } \tp \sim \tp_i, && \text{for } 0 \le i \le d; \\
\label{eq:path-conditions:2}
& X_j \subseteq \tp, \text{ for some } \tp \text{ with } \tp\to\tp_i \text{ and } \tp_i \to \tp,   && \text{for } s_i < j < s_{i+1}  \text{ and } 0 \le i < d;\\
\label{eq:path-conditions:3}
& \ell = s_i \text{ for some } 0 \leq i \leq d  \text{ and some } \tp \sim \tp_i \text{ with } A \notin \tp.
\end{align}
Conditions~\eqref{eq:path-conditions:1} and~\eqref{eq:path-conditions:2} ensure that $X_0,\dots,X_m$ is accepted by the path obtained from $[\tp_0] \to [\tp_1] \to \ldots \to [\tp_d]$  by placing each $[\tp_i]$ at position $s_i$ and filling in the gaps according to~\eqref{eq:path:extension}; condition~\eqref{eq:path-conditions:2} guarantees, in particular,  that any $[\tp_i]$ that needs to be repeated is a loop. Finally, condition~\eqref{eq:path-conditions:3} says that the type at position $\ell$, which must be one of the~$s_i$, does not contain $A$.
We now encode the new criterion by an $\FO(<)$-formula: let
\begin{equation*}
\rq(t) ~=~  \neg  \Bigl[\bigvee_{d \leq 2|\TO| + 2} \ \ \bigvee_{\begin{subarray}{c}[\tp_0]\to\ldots\to[\tp_d]\\\text{is a path in }\nfao^\circ\end{subarray}} \exists
  t_0,\dots,t_d\, \Bigl(\pathf_{[\tp_0]\to\ldots\to[\tp_d]}(t_0, \dots, t_d)  \ \ \land \ \ \bigvee_{\begin{subarray}{c}0\leq i \leq d\\A\notin\tp \text{ for some } \tp
   \sim\tp_i\end{subarray}}\hspace*{-1.5em} (t = t_i) \Bigr)\Bigr],
\end{equation*}
where the formula $\pathf_{[\tp_0]\to\ldots\to[\tp_d]}$ encodes conditions~\eqref{eq:path-conditions:1} and \eqref{eq:path-conditions:2}:
\begin{multline*}
\pathf_{[\tp_0]\to\ldots\to[\tp_d]}(t_0, \dots,t_d)  \ \ = \ \ (\min = t_0)  \ \land \bigwedge_{0 \leq i < d} (t_i < t_{i+1}) \ \land \ (t_d = \max) \ \ \land {}\\
\bigwedge_{0 \leq i \leq d}  \ \ \bigvee_{\tp \sim \tp_i} \typef_{\tp}(t_i) \ \ \ \land \ \
\bigwedge_{0 \leq i < d} \forall t\, \bigl((t_i < t < t_{i+1}) \to \bigvee_{\tp \to \tp_i \to \tp} \typef_{\tp}(t) \bigr).
\end{multline*}
Observe the similarity of this rewriting to the infinitary `rewriting' $\Psi(t)$ constructed above.
\qed
\end{proof}

We illustrate the construction of $\rq(t)$ by the following example.

\begin{example}\em
In the context of Example~\ref{ex:NFAo}, the accepting path $[\pi]$ can be decomposed into a stutter-free path $[\pi]^\circ$ and a sequence of indexes in the following way:
\begin{align*}
[\pi] \  &=&  [A,B,\Lbox B] & \to& [A,B,\Lbox B] & \to& [A,B,\Lbox B] & \to& [A,C,\Lbox B] & \to& [\emptyset]  & \to& [\emptyset] & \to& [\emptyset] & \to& [\emptyset],\\
[\pi]^\circ \  &=& [A,B,\Lbox B] & \to&  &&  && [A,C,\Lbox B] & \to& [\emptyset]  & \to&  &&  && [\emptyset],\\
&& s_0 = 0 &  < & && && s_1 = 3 &  < & \hspace*{-0.5em}s_2 = 4 &  < & && && \hspace*{-1em}s_3 = 7.
\end{align*}
Note that $[A,C,\Lbox B]$ is not a loop, whereas $ [A,B,\Lbox B]$ and $[\emptyset]$ are loops. So, if $\ell$ is one of these $s_i$, then we can use
\begin{align*}
\pathf_{[\pi]^\circ}(t_0, t_1, t_2,t_3) \  = \ \ & (\min = t_0) \land (t_0 < t_1) \land (t_1 < t_2) \land (t_2 < t_3)\land (t_3 = \max) \land{}\\
&  \typef_{[AB\Lbox B]}(t_0) \land \typef_{[AC\Lbox B]}(t_1) \land \typef_{[\emptyset]}(t_2) \land \typef_{[\emptyset]}(t_3) \land{} \\
& \forall t\,\bigl((t_0 < t < t_1) \to \typef_{[AB\Lbox B]}(t)\bigr) \  \land \
\forall t\,\neg(t_1 < t < t_2) \ \land \
 \forall t \, \bigl((t_2 < t < t_3) \to  \typef_{[\emptyset]}(t)\bigr),
 \end{align*}
where $\typef_{[X]}$ abbreviates $\bigvee_{\tp\sim X}\typef_{\tp}$. If $\ell$ is not one of the selected $s_i$, then we decompose $[\pi]$ into a (non-stutter-free) path and a sequence of indexes with 5 elements, which effectively means that either the extra state $[A,B,\Lbox B]$ is inserted into $[\pi]^\circ$ between indexes 0 and 3, or the extra state $[\emptyset]$ is inserted into $[\pi]^\circ$  between indexes~4 and~7. Since the paths in~$\nfao^\circ$ we need to consider have at most 5 states, formula $\rq(t)$ is an $\FO(<)$-rewriting of the OMAQ.
\end{example}


\section{$\LTL_{\smash{\krom}}\Xnext$ \OMAQ{}s: Unary Automata and Arithmetic Progressions}
\label{sec:ltl-tomaq:unary}

We use unary NFAs, that is, automata with a single-letter alphabet, to construct rewritings for $\LTL_\krom\Xnext$ \OMAQ{}s.

\begin{theorem}\label{thm:krom-next}
All $\LTL_\krom\Xnext$ \OMAQ{}s are $\FOE$-rewritable.
\end{theorem}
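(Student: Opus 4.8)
The plan is to reduce certain-answer computation to a reachability question in a finite, time-periodic implication graph and then exploit the eventual periodicity of unary automata to turn distance conditions into modular-arithmetic conditions. First I would normalise $\TO$: by the model-conservative-extension remarks of Section~\ref{sec:expressivity} I may assume there are no nested temporal operators, so every Krom clause relates literals sitting at time points that differ by at most one. Writing each binary clause in $2$-\textsl{SAT} style (a clause $C_1\lor C_2$ becomes the two implications $\neg C_1\to C_2$ and $\neg C_2\to C_1$) yields a finite \emph{implication digraph} $\Gamma$ on the literals of $\TO$ (atoms and their negations), where each edge carries a time shift in $\{-1,0,+1\}$ dictated by the $\Rnext$/$\Lnext$ occurring in it. This is exactly the next-only specialisation of the automaton $\nfao$ of Section~\ref{sec:ltl-tomaq:stutter}: on the all-empty stretches between data facts, $\nfao$ reads a single blank letter per step and thus behaves as a \emph{unary} automaton.

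The second step is the certain-answer criterion. Assuming $(\TO,\A)$ is consistent, $A(\ell)$ is a certain answer iff there is no model with $A$ false at $\ell$, which for binary clauses holds iff $(\neg A,\ell)$ reaches $(A,\ell)$ in $\Gamma$ augmented with the data units $B(k)\in\A$ (each encoded as an edge $(\neg B,k)\to(B,k)$). Inconsistency of the knowledge base, in which case every $\ell\in\tem(\A)$ is an answer, is handled separately by an analogous conflict-reachability condition. The rewriting $\rq(t)$ will be (the negation of) ``there is a model with $\neg A(t)$'', Boolean-combined with the consistency test.

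The third step supplies the arithmetic progressions. For each ordered pair of literals $(L,L')$ I consider the set $S_{L,L'}\subseteq\Z$ of net time shifts realised by $\Gamma$-paths from $L$ to $L'$. Treating the forward part of $\Gamma$ as a unary finite automaton (one letter per unit step, with the $-1$-edges absorbed by reversal), Chrobak's normal form~\cite{chrobak-ufa} shows that $S_{L,L'}$ is a finite union of arithmetic progressions with modulus and threshold bounded by a function of $|\TO|$. Hence each condition ``$(\ell-k)\in S_{L,L'}$'' is expressible in $\FOE$ via the shorthands $t-t'\in a+b\N$ of Remark~\ref{rem:minmax}. I would first build an $\FOEZ$-rewriting over $\Z$ (all reasoning is over $\Z$) and then convert it to an $\FOE$-rewriting over $\tem(\A)$, observing that the data anchors and the answer point already lie in $\tem(\A)$ while the quantified intermediate points over $\Z$ can be eliminated in favour of conditions on \emph{differences} of data timestamps, which again stay inside $\tem(\A)$. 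This mirrors the role that ``stutter-free paths'' played in the proof of Theorem~\ref{thm:box}, except that genuine cycles (as in the even/odd behaviour of Example~\ref{exampleBvNB}) now survive and are precisely what forces $\FOE$ rather than $\FO(<)$.

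The main obstacle is the assembly in the presence of \emph{unboundedly many} sparse data facts. A reachability path witnessing $(\neg A,\ell)\leadsto(A,\ell)$ may thread through many data units, so a priori the condition is a transitive closure over arbitrarily many timestamps, which is not $\FO$-expressible and is exactly what pushes the Horn case into $\FO(\RPR)$. The crux is therefore to show that the \emph{binary} shape of Krom clauses keeps this interaction bounded: I would prove, by a pumping argument that reroutes any repeated literal through the eventually-periodic loops of $\Gamma$, that it suffices to consider witnessing paths using at most a constant (bounded by the number of literals) data-unit edges. Then the finitely many anchors can be bound by a fixed number of first-order variables, each consecutive pair contributing one modular constraint on their distance via step~three, and the whole criterion---together with the consistency check---collapses into a single $\FOE$-formula $\rq(t)$. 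Establishing this bound while keeping the consistency test inside the same formula is the technical heart of the argument.
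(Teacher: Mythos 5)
Your proposal follows the same architecture as the paper's proof (implication graph on timestamped literals, a 2-SAT-style reachability criterion for certain answers and for inconsistency, then Chrobak's normal form to turn distances into arithmetic progressions expressible in $\FOE$), but the two steps you leave open or gloss over are exactly where the content lies, and as stated both would fail. First, the reduction to boundedly many data anchors, which you call the ``technical heart'' and propose to prove by pumping. Pumping is the wrong tool here: cutting out or inserting a cycle changes the net time shift of a walk, so rerouting repeated literals through ``periodic loops'' does not preserve the endpoints of the path, and no constant bound on the number of data-unit edges emerges this way. What actually closes this gap is the skew-symmetry (contraposition) of the implication graph: $(L,n)\leadsto(L',n')$ iff $(\neg L',n')\leadsto(\neg L,n)$. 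Given consistency of $(\TO,\A)$, take the \emph{last} data-unit edge $(\neg B,n)\to(B,n)$ on any path witnessing $(\neg A,\ell)\leadsto(A,\ell)$; everything after it is a pure ontology path, so a \emph{single} data atom always suffices: $\ell\in\ans^{\Z}(\q,\A)$ iff $(B,n)\leadsto(A,\ell)$ for some $B(n)\in\A$, and inconsistency is witnessed by a \emph{pair} of data atoms. These are precisely properties (\emph{i}) and (\emph{ii}) in the paper's proof; without them your formula has no a priori bound on the number of quantified anchors.

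Second, your computation of the shift sets $S_{L,L'}$ is unsound. You build $\Gamma$ \emph{syntactically} from the clauses and then apply Chrobak's theorem to its ``forward part, with the $-1$-edges absorbed by reversal''\!. Contraposition turns a $-1$-edge between $L_1,L_2$ into a $+1$-edge between $\neg L_2,\neg L_1$; it does not simulate walks that mix forward and backward steps. For example, with the core ontology $\TO=\{A\to\Lnext M,\ M\to\Rnext B\}$, the zigzag $(A,n)\to(M,n-1)\to(B,n)$ gives the same-instant entailment $\TO\models A\to B$, yet no forward path in your automaton (even after adding contrapositives) realises shift $0$ from $A$ to $B$, so a rewriting built from your $S_{A,B}$ would miss certain answers. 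The paper avoids this by defining the edges of $G_\TO$ and the transitions of $\mathfrak A_{L,L'}$ \emph{semantically}, as one-step entailments $\TO\models L\to\nxt^k L'$: this closes the edge relation under bounded-shift composition, so every walk contracts to a monotone one, the automaton is then genuinely unary, and only at that point do Chrobak's arithmetic progressions apply. (Your syntactic $\Gamma$ is also not ``the next-only specialisation of $\nfao$'': the states of $\nfao$ are types, not literals.) Both gaps are repairable---the statements you need are true---but the arguments you propose for them do not work, and the contraposition lemma and the semantic closure of edges are the missing ideas.
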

\begin{proof}
Suppose $\q = (\TO,A)$ is an $\LTL_\krom\Xnext$ \OMAQ{} such that $A$
occurs in $\TO$. As observed in Section~\ref{sec:expressivity},
without loss of generality, we can assume that $\TO$ does not contain
nested $\Rnext$ and $\Lnext$.
A \emph{simple literal}, $L$, for $\TO$ is an atomic concept from $\TO$ or its negation; we set $\neg\neg L = L$. We also use $\nxt^n L$ to abbreviate $\Rnext^n L$ if $n > 0$, $L$ if $n = 0$, and $\Lnext^{-n} L$ if $n < 0$.

It is known that checking satisfiability of a 2CNF (a propositional Krom formula) boils down to finding cycles in a directed graph whose nodes are literals and edges represent clauses  (see, e.g.,~\cite[Lemma~8.3.1]{Borgeretal97}). To deal with $\LTL_\krom\Xnext$, we need an infinite graph with timestamped literals. More precisely, denote by $G_\TO$ the infinite directed graph whose vertices are pairs $(L, n)$, for a simple literal $L$ for $\TO$ and $n \in \Z$, and which contains an edge from $(L,n)$ to~$(L',n+k)$, for $k \in \{-1,0,1\}$, iff $\TO \models L \to \nxt^k L'$.
We write $(L_1,n_1)  \leadsto (L_2,n_2)$ iff $G_\TO$ has a (possibly empty) path from $(L_1,n_1)$ to $(L_2,n_2)$. In other words, $\leadsto$ is the reflexive and transitive closure of the edge relation in $G_\TO$.
It is readily seen that $(L_1,n_1)  \leadsto (L_2,n_2)$ is another way of saying that $\TO \models \nxt^{n_1} L_1 \to \nxt^{n_2} L_2$.
We require the following key properties distinguishing $\LTL_\krom\Xnext$ (cf.~\cite[Theorem~5.1]{DBLP:journals/tocl/ArtaleKRZ14}): if $\TO$ is consistent, then, for any data instance $\A$,
\begin{description}
\item[\rm (\emph{i})] if $\A$  is consistent with $\TO$, then $\ell \in \ans^\Z(\q,\A)$ iff $(B,n) \leadsto (A,\ell)$, for some $B(n) \in \A$;

\item[\rm (\emph{ii})] $\A$ is inconsistent with $\TO$ iff $(B,n) \leadsto (\neg B',n')$, for some $B(n),B'(n') \in \A$.
\end{description}
(Observe that,  if $\TO\models  \top \to A$, then $G_\TO$ contains an edge to $(A,\ell)$ from any $(B,n)$, and so the right-hand side of~(\emph{i}) is trivially satisfied because data instances $\A$ are nonempty.)
We leave the proof of these properties to the reader as an exercise (see also the proof of a more general  Lemma~\ref{lem:kromcond}).

Given literals $L$ and $L'$, let $\mathfrak A_{L,L'}$ be an NFA whose tape alphabet is $\{0\}$, states are the simple literals, with $L$ initial and $L'$ accepting, and there is a transition $L_1 \to_0 L_2$ iff $\TO\models L_1 \to \Rnext L_2$ (as we agreed at the end of Section~\ref{sec:expressivity},\nz{changed} $\TO$ does not contain nested temporal operators). It is easy to see that, for any distinct $n,n'\in\Z$,
\begin{equation*}
(L,n) \leadsto (L',n')\quad \text{ iff }\quad
\begin{cases}
\mathfrak A_{L,L'} \text{ accepts } 0^{n' - n}, &  \text{ for } n < n',\\
\mathfrak A_{\neg L', \neg L} \text{ accepts } 0^{n - n'}, &  \text{ for } n > n';
\end{cases}
\end{equation*}
the latter is the case because $(L,n) \leadsto (L',n')$ iff $(\neg L',n') \leadsto (\neg L,n)$.
We can now use the results on the normal form of unary finite automata~\cite{chrobak-ufa,to-ufa}, according to which, for every \emph{unary} NFA $\mathfrak
A_{L,L'}$, there are $N = O(|\mathfrak A_{L,L'}|^2)$ arithmetic
progressions $a_i + b_i\mathbb{N} = \bigl\{\, a_i + b_i \cdot m \mid m\geq 0
\,\bigr\}$, $1 \leq i \leq N$, such that $0 \leq a_i, b_i \leq |\mathfrak
A_{L,L'}|$ and
\begin{equation*}
\mathfrak{A}_{L,L'}
\text{ accepts $0^n\ (n>0)$}\quad \text{ iff }\quad n \in a_i + b_i\mathbb{N} \ \ \text{ for some }
1\leq i \leq N.
\end{equation*}

\begin{example}\label{ex:unaryFA}\em
Suppose $\q = (\TO,A)$ and
\begin{equation*}
\TO = \bigl\{ \, A \to \Rnext B, \  \ B \to \Rnext C, \ \ C \to \Rnext D, \ \ D \to \Rnext A,
\ \
D \to \Rnext E, \ \ E \to \Rnext D \, \bigr\}.
\end{equation*}
The NFA $\mathfrak A_{B,A}$ (the states reachable from $B$, to be more precise) is shown in Fig.~\ref{fig:unaryFA}a. (For $L \in \{A,C,D,E\}$, the NFA~$\mathfrak A_{L,A}$ is obtained from $\mathfrak A_{B,A}$ by shifting the initial state to $L$; see Fig.~\ref{fig:unaryFA}b for $L=E$.)
\begin{figure}[t]
\centering%
\begin{tikzpicture}[>=latex,xscale=2.5,yscale=1,thick,
state/.style={circle,draw,fill=gray!50,thick,minimum size=4mm,inner sep=0pt}]\small
\node at (-0.5,1) {a)};
\node[state,label=below:{$B$}] (B) at (0,0) {};
\node[state,label=below:{$A$},fill=white] (A) at (1,0) {};
\node[state,label=above:{$C$}] (C) at (0,1) {};
\node[state,label=above:{$D$}] (D) at (1,1) {};
\node[state,label=below:{$E$}] (E) at (1.75,1) {};
\node[circle,inner sep=0pt,draw,minimum size=2.5mm,thick,fill=gray!50] at (A) {};
\fill (B.west) -- ++(-0.1,0.1) -- ++(0,-0.2) -- cycle;
\draw[->] (B) -- (C);
\draw[->] (C) -- (D);
\draw[->] (D) -- (A);
\draw[->] (A) -- (B);
\draw[->,out=60,in=120] (D) to (E);
\draw[->,in=-60,out=-120] (E) to (D);
\node at (1.5,-0.25) {\normalsize $3 + 2\mathbb{N}$};
\end{tikzpicture}
\hspace*{8em}
\begin{tikzpicture}[>=latex,xscale=2.5,yscale=1,thick,
state/.style={circle,draw,fill=gray!50,thick,minimum size=4mm,inner sep=0pt}]\small
\node at (-0.5,1) {b)};
\node[state,label=below:{$B$}] (B) at (0,0) {};
\node[state,label=below:{$A$},fill=white] (A) at (1,0) {};
\node[state,label=above:{$C$}] (C) at (0,1) {};
\node[state,label=above:{$D$}] (D) at (1,1) {};
\node[state,label=below:{$E$}] (E) at (1.75,1) {};
\node[circle,inner sep=0pt,draw,minimum size=2.5mm,thick,fill=gray!50] at (A) {};
\fill (E.east) -- ++(0.1,0.1) -- ++(0,-0.2) -- cycle;
\draw[->] (B) -- (C);
\draw[->] (C) -- (D);
\draw[->] (D) -- (A);
\draw[->] (A) -- (B);
\draw[->,out=60,in=120] (D) to (E);
\draw[->,in=-60,out=-120] (E) to (D);
\node at (1.5,-0.25) {\normalsize$2 + 2\mathbb{N}$};
\end{tikzpicture}
\caption{NFAs $\mathfrak{A}_{B,A}$ and $\mathfrak{A}_{E,A}$  in Example~\ref{ex:unaryFA}.}\label{fig:unaryFA}
\end{figure}
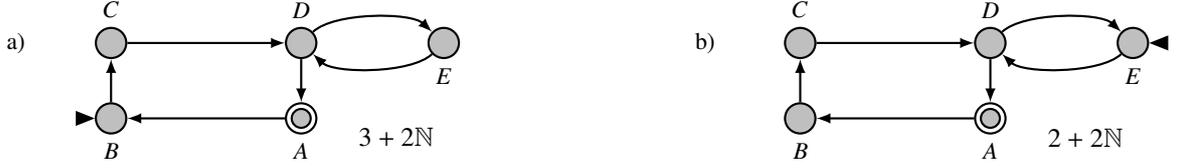
It is readily seen that $\smash{\mathfrak A_{B,A}}$ accepts~$0^n$ iff
$n \in 3 + 2 \mathbb{N}$, which means that all certain answers to $(\TO,A)$ satisfy the formula
$\exists s\, \bigl(B(s) \ \land \ \bigl(t - s \in 3 + 2\mathbb N\bigr)\bigr)$.
Similarly, for $\mathfrak A_{E, A}$, we need $\exists s\,\bigl(E(s) \land \bigl(t - s \in 2 + 2\mathbb N\bigr)\bigr)$. (In general, more than one progression is required to characterise the NFAs $\mathfrak A_{L,L'}$.)
\end{example}

Returning to the proof of Theorem~\ref{thm:krom-next}, let
\begin{equation*}
\entailsf_{L,L'}(s,t) \ \ = \ \ \entailsf^{\smash{>0}}_{L,L'}(s,t) \ \ \lor \ \ \entailsf^{\smash{0}}_{L,L'}(s,t) \ \ \lor \ \ \entailsf^{\smash{>0}}_{\neg L',\neg L}(t,s),
\end{equation*}
where $\entailsf^0_{L,L'}(s,t)$ is $(s = t)$ if $\TO\models L \to L'$ and $\bot$ otherwise, and  $\entailsf^{>0}_{L,L'}(s,t)$ denotes the disjunction of all
formulas $t -  s \in a_i + b_i\mathbb N$,
for arithmetic progressions $a_i + b_i\mathbb{N}$ associated with $\mathfrak A_{L,L'}$.
Now, conditions (\emph{i}) and~(\emph{ii}) can be encoded as an
\mbox{$\FOE$}-formula $\rq(t)$, which is $\top$ if
$\TO$ is inconsistent (this condition can be checked in \NLogSpace~\cite{Borgeretal97}, independently of the data instance), and otherwise the following formula:
\begin{align*}
\rq(t) \ \ \ \ = \ \bigvee_{B \text{ occurs in } \TO} \hspace*{-1em}\exists s\,\bigl(B(s) \land
\entailsf_{B,A}(s,t) \bigr) \ \ \ \lor\ \ \ \bigvee_{B,B' \text{ occur in } \TO} \hspace*{-1em}\exists s,s' \, \bigl(B(s) \land B'(s') \land \entailsf_{B,\neg B'}(s,s')\bigr).
\end{align*}
It follows immediately that $\rq(t)$ is an $\FOE$-rewriting of $\q$.
\qed
\end{proof}


\section{$\LTL\Xallop_{\smash{\krom}}$ \OMAQ{}s: Unary Automata with Stutter-Free Sequences of Types}
\label{sec:ltl-tomaq:kromboxnext}

We now combine the ideas of the two previous sections to obtain an $\FOE$-rewritability result for $\LTL\Xallop_\krom$ \OMAQ{}s: we separate reasoning with $\Box$-operators (as in Theorem~\ref{thm:box}) from  reasoning with $\nxt$-operators (as in Theorem~\ref{thm:krom-next}); see also a similar separation idea in the proof of~\cite[Theorem~5.1]{DBLP:journals/tocl/ArtaleKRZ14}.
\begin{theorem}\label{thm:krom-nextbox}
All $\LTL\Xallop_\krom$ \OMAQ{}s are $\FOE$-rewritable.
\end{theorem}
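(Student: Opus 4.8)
The plan is to combine the stutter-free (partially ordered automaton) technique of Theorem~\ref{thm:box} with the unary-automaton / arithmetic-progression technique of Theorem~\ref{thm:krom-next}, keeping the two kinds of temporal reasoning apart. As in the proof of Theorem~\ref{thm:krom-next}, I would first assume without loss of generality that $\TO$ contains no nested temporal operators, so that every basic temporal concept occurring in a binary clause is either a simple literal $L$ or a boxed literal $\Rbox L$, $\Lbox L$. Consistency of $\TO$ itself is independent of the data instance and can be decided once and for all, so I treat the inconsistent case by the trivial rewriting $\top$ and assume $\TO$ consistent below. The target is again a characterisation of certain answers purely in terms of reachability in a literal graph $G_\TO$ over timestamped literals --- this is the role played by the more general Lemma~\ref{lem:kromcond} --- after which the rewriting is read off from that characterisation.

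The key structural observation is that boxed concepts behave monotonically along the time line: in any model $\Mmf$, the set $(\Rbox C)^\Mmf$ is upward closed and $(\Lbox C)^\Mmf$ is downward closed, so the truth status of each of the at most $|\TO|$ boxed concepts of $\TO$ flips at most once. Consequently, just as in Theorem~\ref{thm:box}, any relevant propagation of facts passes through only a bounded number $d = O(|\TO|)$ of box-change points $s_0 < s_1 < \dots < s_d$; between two consecutive such points the status of every boxed concept is frozen, and there the only active inferences are those driven by the $\Rnext$/$\Lnext$ clauses together with the frozen box-facts treated as extra unary data. This is precisely the separation of $\Box$-reasoning from $\nxt$-reasoning promised in the statement: inside each frozen interval the implication relation $\TO \models \nxt^{n} L \to \nxt^{n'} L'$ is periodic and is captured, via the NFAs $\mathfrak A_{L,L'}$ and the normal form of unary automata, by the formulas $\entailsf_{L,L'}(s,t)$ already constructed for Theorem~\ref{thm:krom-next}.

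Building on this, I would construct the $\FOE$-rewriting $\rq(t)$ so that it existentially quantifies the bounded block $s_0 < \dots < s_d$ of box-change points (with $\min\A = s_0$ and $\max\A = s_d$), guesses for each of the $d+1$ intervals a $\TO$-consistent configuration of boxed concepts, and, within each interval, uses the arithmetic-progression formulas $\entailsf_{L,L'}$ to transport entailment from the data facts (and from the frozen box-facts inherited from the interval boundaries) to the query atom $A$ at $t$. The $\Box$-clauses are then used to pass forced literals across the boundaries $s_i$ (a clause $L \to \Rbox C$ switches on $C$ for the whole suffix, a clause $\Rbox C \to L'$ reads it off), while consistency is encoded, as in condition~(\emph{ii}) of Theorem~\ref{thm:krom-next}, by forbidding any pair of data or box-literals that reach complementary literals. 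Since $d$ is bounded by a constant depending only on $|\TO|$, all disjunctions and quantifier blocks are finite, so the resulting $\rq(t)$ is a genuine $\FOE$-formula; the active-domain restriction to $\tem(\A)$ is handled exactly as for the earlier theorems, using the $\smash{\SA^\Z}$ versions of the formulas and the definability of $\min$ and $\max$.

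The main obstacle is establishing the decomposition lemma rigorously, i.e. proving the analogue of properties~(\emph{i})--(\emph{ii}) of Theorem~\ref{thm:krom-next} in the presence of box-operators (the general Lemma~\ref{lem:kromcond}). Two points need care. First, one must show that entailment really does factor through a bounded set of box-change points and is genuinely periodic between them; the subtlety is the feedback between the two regimes, where a $\nxt$-driven consequence may trigger a box-clause far away (e.g.\ $L \to \Rbox C$ followed by $\Rbox C \to L'$), so the frozen configurations on adjacent intervals must be forced to cohere. Second, one must verify that every guessed box-configuration that survives the consistency check is actually realised by some model, so that the existential guesses in $\rq(t)$ are sound as well as complete; this is the analogue of the model-gluing step~\eqref{glue} used for Theorem~\ref{fullLTL} and of the stutter-free realisability argument of Theorem~\ref{thm:box}, and it is where the Krom restriction $k+m\le 2$ is essential, since it is what makes propagation along $G_\TO$ a binary (graph-reachability) rather than a hyper-graph phenomenon.
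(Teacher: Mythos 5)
Your proposal follows essentially the same route as the paper's proof: it separates the $\Box$-reasoning (bounded sequence of box-change points, as in the stutter-free argument of Theorem~\ref{thm:box}) from the $\nxt$-reasoning (the $\entailsf_{L,L'}$ arithmetic-progression formulas of Theorem~\ref{thm:krom-next}, applied within each frozen interval with box-facts treated as extra unary literals --- the paper makes this precise via surrogate atoms and the added tautologies~\eqref{eq:box-semantics}), and it correctly identifies both the decomposition lemma (the paper's Lemma~\ref{lem:kromcond}, with its reachability-based consistency conditions) and the model-construction argument needed to prove its realisability direction. This matches the paper's proof in structure and in all key ingredients.
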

\begin{proof}
Let $\q = (\TO,A)$ be an $\LTL\Xallop_\krom$ \OMAQ{} such that $A$ occurs in $\TO$, the ontology $\TO$ has no nested occurrences of temporal operators and contains the axioms $\nxt B \to A_{{\scriptscriptstyle\bigcirc} B}$ and $A_{{\scriptscriptstyle\bigcirc} B} \to \nxt B$, for every $\nxt B$ in $\TO$ and $\nxt \in \{\Rnext,\Lnext\}$.

As before, a type  $\tp$ is a maximal subset of $\subo$ consistent with $\TO$. We write $\tp \to^\Box \tp'$ if $\tp$ and $\tp'$ satisfy~\eqref{eq:path:box-monotonicity}; note, however, that such $\tp$ and $\tp'$ do not have to satisfy~\eqref{eq:path:next-consistency}. Using an argument similar to the observation on stutter-free paths in the proof of Theorem~\ref{thm:box}, one can show the following: given any  path $\tp_0\to \tp_1\to \ldots \to \tp_{m-1} \to \tp_m$  in the NFA~$\nfao$ corresponding to $\TO$, we can extract a subsequence
\begin{equation*}
\tp_{s_0} \to^\Box \tp_{s_1} \to^\Box \ldots \to^\Box \tp_{s_{d-1}} \to^\Box \tp_{s_d}
\end{equation*}
such that $0 = s_0 < s_1 < \dots < s_{d-1} < s_d = m$ for $d \leq 2|\TO|+1$ and, for all $i < d$,
\begin{align}
\text{either } \ \ \Box C, C\in\tp_{s_i},\tp_j \ \ \text{ or } \ \ \Box C \notin\tp_{s_i},\tp_j, && \text{for all } \Box C\in\subo, \Box \in\{\Lbox, \Rbox\},  \text{ and all }  j \text{ with } s_i < j < s_{i+1}.
\end{align}
Observe that the last condition plays the same role as~\eqref{eq:path:extension}, but, because of the $\nxt$-operators,  the NFA $\nfao$ for $\TO$ does not necessarily contains a cycle through $\tp_j$ and $\tp_{s_i}$.
We deal with $\nxt$-operators by means of unary NFAs, as in the proof of Theorem~\ref{thm:krom-next}. Let $\tilde{\TO}$ be the $\LTL\Xnext_\krom$ ontology obtained from $\TO$  by first extending it with the following axioms:
\begin{equation}\label{eq:box-semantics}
\Lbox C \to \Lnext \Lbox C \ \text{ and } \  \Lbox C \to \Lnext C, \ \  \text{ for all } \Lbox C\in\subo,
\quad
\Rbox C \to \Rnext \Rbox C \  \text{ and } \ \Rbox C \to \Rnext C, \ \  \text{ for all } \Rbox C\in\subo,
\end{equation}
which are obvious $\LTL\Xallop_\krom$ tautologies, and then replacing every $\Lbox C$ and $\Rbox C$ with its \emph{surrogate}, which is a fresh atomic concept.
Let~$G_{\tilde{\TO}}$ be the directed graph for $\tilde{\TO}$ defined in the proof of Theorem~\ref{thm:krom-next}. As before, we write $(L_1,n_1) \leadsto (L_2,n_2)$ iff $G_{\tilde{\TO}}$ has a (possibly empty) path from $(L_1,n_1)$ to $(L_2,n_2)$. Note, however, that the $L_i$ can contain surrogates, and we slightly abuse notation and write, for example, $L\in\tp$ in case $L$ is the surrogate for $\Lbox C$ and type~$\tp$ contains~$\Lbox C$.
The following lemma generalises $(i)$ and $(ii)$ in the proof of Theorem~\ref{thm:krom-next} (cf.~conditions~\eqref{eq:path-conditions:1}--\eqref{eq:path-conditions:3} in the proof of Theorem~\ref{thm:box}):

\begin{lemma}\label{lem:kromcond}
For any data instance~$\A$ and any~$\ell \in \tem(\A)$, we have $\ell \notin \ans(\q,\A)$ iff there are $d \leq 2|\TO| + 2$, a sequence $\tp_0 \to^\Box \tp_1 \to^\Box \ldots \to^\Box \tp_d$  of types for $\TO$ and a sequence $\min \A = s_0 < s_1 < \dots < s_{d-1} < s_d = \max \A$ of indexes satisfying the following conditions\textup{:}
\begin{align}
\label{eq:path-conditions:1:p}
\tag{\ref{eq:path-conditions:1}$'$}
& B\in\tp_i, \text{ for each } B(s_i)\in\A,  && \text{ for } 1 \leq i \leq d\textup{;}\\
\label{eq:path-conditions:2:c}
\tag{\ref{eq:path-conditions:2}$_1'$}
& (B,n) \not\leadsto (\neg B',n'), \text{ for } s_i < n, n' < s_{i+1} \text{ with } B(n), B'(n') \in \A, && \text{ for } 1 \leq i < d\textup{;}\\
\label{eq:path-conditions:2:e}
\tag{\ref{eq:path-conditions:2}$_2'$}
& (L,s_i) \not\leadsto (\neg B',n'), \text{ for } L \in \tp_i \text{ and } s_i < n' < s_{i+1} \text{ with } B'(n') \in \A, && \text{ for } 1 \leq i < d\textup{;}\\
\label{eq:path-conditions:2:d}
\tag{\ref{eq:path-conditions:2}$_3'$}
& (B,n) \not\leadsto (\neg L',s_{i+1}), \text{ for } s_i < n < s_{i+1} \text{ with  } B(n) \in \A \text{ and  } L' \in \tp_{i+1}, && \text{ for } 1 \leq i < d\textup{;}\\
\label{eq:path-conditions:2:f}
\tag{\ref{eq:path-conditions:2}$_4'$}
& (L,s_i) \not\leadsto (\neg L',s_{i+1}), \text{ for } L \in \tp_i \text{ and } L' \in \tp_{i+1}, && \text{ for } 1 \leq i < d\textup{;}\\
\label{eq:path-conditions:3:p}
\tag{\ref{eq:path-conditions:3}$'$}
& \ell = s_i, \text{ for some $i$ with } 0 \leq i \leq d \text{ such that } A \notin \tp_i.
\end{align}
\end{lemma}
\begin{proof}
$(\Leftarrow)$ Suppose $\ell \notin \ans(\q,\A)$.  Then there is a model $\Mmf$ of
$\TO$ and $\A$ such that $\ell \notin A^\Mmf$. We consider the sequence $\bar{\tp}_0\to \bar{\tp}_1\to \ldots \to \bar{\tp}_{m-1} \to \bar{\tp}_m$ of types for $\TO$ given by $\Mmf$, where $0 = \min \A$ and $m = \max \A$. As argued above, we can find subsequences of types and respective indexes between $\min\A$ and $\max\A$ whose length does not exceed~$2|\TO| + 1$. We add $\bar{\tp}_\ell$ to obtain the sequences satisfying conditions~\eqref{eq:path-conditions:1:p}--\eqref{eq:path-conditions:3:p}.

\smallskip

$(\Rightarrow)$ Suppose we have a sequence $\tp_0\to^\Box \tp_1 \to^\Box \cdots \to^\Box  \tp_d$ of types for $\TO$ and a sequence $\min \A = s_0 < s_1 < \dots < s_d = \max \A$ of indexes satisfying  conditions~\eqref{eq:path-conditions:1:p}--\eqref{eq:path-conditions:3:p}. We construct a model $\Mmf$ of $\TO$ and $\A$ with $\ell \notin A^\Mmf$. The model is defined as a sequence of types $\bar{\tp}_n$, for $n \in \Z$. We begin by setting  $\bar{\tp}_{s_i} = \tp_i$, for $0 \leq i \leq d$. Then, since $\bar{\tp}_{\min \A} = \tp_0$ is consistent with $\TO$, there is a model~$\Mmf_{\min}$ of~$\TO$ with type $\tp_0$ at $\min \A$, and so we take the types~$\bar{\tp}_n$ given by $\Mmf_{\min}$ for $n< \min \A$. Similarly, the types~$\bar{\tp}_n$, for $n > \max\A$, are provided by a model of $\TO$ with~$\bar{\tp}_{\max} = \tp_d$ at~$\max \A$. Now, let $1 \le i < d$. We show how to construct the~$\bar{\tp}_j$, for $s_i < j < s_{i+1}$, in a step-by-step manner.
\begin{description}
\item[\normalfont\textit{Step} 0:] for all $j$ with $s_i < j < s_{i+1}$, set $\bar{\tp}_j = \bigl\{\, B \mid B(j) \in \A\,\bigr\}$.

\item[\normalfont\textit{Step} 1:] for all $k$  with $s_i \leq k \leq s_{i+1}$ and all $j$ with $s_i < j < s_{i+1}$, if $L \in \bar{\tp}_k$ and $(L,k) \leadsto (L',j)$, then add $L'$ to $\bar{\tp}_j$.

\item[\normalfont\textit{Step} $m > 1$:]  pick $\bar{\tp}_k$, for $s_i < k < s_{i+1}$, and a literal $L$ with $L,\neg L \notin \bar{\tp}_k$, terminating the construction if
there are none. Add $L$ to $\bar{\tp}_k$, and, for all $j$ with  $s_i < j < s_{i+1}$, if
$(L,k) \leadsto (L',j)$, then add $L'$ to $\bar{\tp}_j$.
\end{description}
Note that $\bar{\tp}_k$ could also be extended with $\neg L$---either choice is consistent with the previously constructed types $\bar{\tp}_j$.

By induction on $m$,  we show that the $\bar{\tp}_k$
constructed in Step $m$ is \emph{conflict-free}
in the sense that there is no $k$, $s_i < k < s_{i+1}$, and no literal $L_0$ with
$L_0,\neg L_0 \in \bar{\tp}_k$, which is obvious for $m=0$. Suppose that  $\bar{\tp}_k$  is not conflict-free after Step 1. Then one of the following six cases has happened in Step 1 (we assume $s_i < n,n' < s_{i+1}$, if relevant, below):
\begin{itemize}
\item[--] if $L, L' \in \bar{\tp}_{s_i}$, $(L, s_i) \leadsto (L_0, k)$, $(L', s_i) \leadsto (\neg L_0, k)$, then $(L, s_i) \leadsto (\neg L', s_i)$, contrary to consistency of  $\tp_i$ with~$\TO$;
\item[--] if $L, L' \in \bar{\tp}_{s_{i+1}}$, $(L, s_{i+1}) \leadsto (L_0, k)$, $(L', s_{i+1}) \leadsto (\neg L_0, k)$, then $(L, s_{i+1}) \leadsto (\neg L', s_{i+1})$, which is also impossible;
\item[--] if $B(n), B'(n') \in \A$ with $(B, n) \leadsto (L_0, k)$, $(B', n') \leadsto (\neg L_0,k)$, then $(B, n) \leadsto (\neg B', n')$, contrary to~\eqref{eq:path-conditions:2:c};
\item[--] if $L \in \bar{\tp}_{s_i}$, $(L, s_i) \leadsto (L_0, k)$ and $B'(n') \in \A$, $(B', n') \leadsto (\neg L_0,k)$, then $(L, s_i) \leadsto (\neg B', n')$, contrary to~\eqref{eq:path-conditions:2:e};
\item[--] if $B(n) \in \A$, $(B, n) \leadsto (L_0,k)$ and $L' \in \bar{\tp}_{s_{i+1}}$, $(L', s_{i+1}) \leadsto (\neg L_0, k)$, then $(B, n) \leadsto (\neg L',s_{i+1})$, contrary to~\eqref{eq:path-conditions:2:d};
\item[--] if $L \in \bar{\tp}_{s_i}$, $(L, s_i) \leadsto (L_0, k)$ and $L' \in \bar{\tp}_{s_{i+1}}$,  $(L', s_{i+1}) \leadsto (\neg L_0, k)$, then $(L, s_i) \leadsto (\neg L', s_{i+1})$, contrary to~\eqref{eq:path-conditions:2:f}.
\end{itemize}
Thus, the $\bar{\tp}_k$ constructed in Step 1 are conflict-free.
Suppose now that all of the $\bar{\tp}_k$ are conflict-free after Step $m$, $m  \ge 1$, while some $\bar{\tp}_j$ after Step $m+1$ is not. It follows that some $\bar{\tp}_k$ is extended with $L$ in Step~$m+1$, $(L,k) \leadsto (L',j)$, but~$\bar{\tp}_j$ contained $\neg L'$ (at least) since Step $m$. Now, as $(\neg L',j) \leadsto (\neg L,k)$, the type $\bar{\tp}_k$ already contained $\neg L$ in Step $m$, and so~$L$~could not be added in Step $m+1$.

\smallskip

Let $\bar{\tp}_n$, $n \in \Z$, be the resulting sequence of types for $\TO$. Define an interpretation $\Mmf$ by taking $n \in B^\Mmf$ iff $B \in \bar{\tp}_n$, for every atomic concept $B$ in $\TO$. In view of~\eqref{eq:path-conditions:1:p} and Step 0, we have $\Mmf\models\A$ but $\ell\notin A^\Mmf$. We show~$\Mmf\models\TO$.
Since the~$\bar{\tp}_n$ are conflict-free, and thus
consistent with $\tilde{\TO}$, and since $L_1 \in \bar{\tp}_n$ implies
$L_2 \in \bar{\tp}_n$ if $\tilde{\TO}$ contains $L_1 \to
L_2$, and $L_1\in\bar{\tp}_n$ implies $L_2 \notin \bar{\tp}_n$ if $\tilde{\TO}$ contains $L_1 \land L_2 \to \bot$, it is sufficient to prove that
\begin{equation*}
A_{{\scriptscriptstyle \bigcirc_F} B} \in \bar{\tp}_n \ \  \text{ iff } \ \ B \in
\bar{\tp}_{n+1}\qquad\text{ and }\qquad \Rbox C \in \bar{\tp}_n \ \ \text{ iff } \ \ C
\in \bar{\tp}_k \text{ for all } k > n,
\end{equation*}
and the past
counterparts of these equivalences. We readily
obtain the first equivalence since $(A_{{\scriptscriptstyle \bigcirc_F}
B}, n) \leadsto (B, n+1)$ and $(B, n+1) \leadsto
(A_{{\scriptscriptstyle \bigcirc_F} B}, n)$, and similarly for $\Lnext$. It thus remains to show the second equivalence.
For all $n \geq \max \A$, the claim is immediate from the choice of the $\bar{\tp}_k$ for $k \ge s_d = \max \A$. We then proceed by induction on $i$ from $d - 1$ to~$0$ assuming that the claim holds for all $n \geq s_{i +1}$. We consider the following three options.
If $\Rbox C\in\bar{\tp}_{s_i}$, then, by~\eqref{eq:path:box-monotonicity}, we have $\Rbox C, C\in\bar{\tp}_{s_{i+1}}$, and the claim for all $n \geq s_i$ follows from the fact that  $\tilde{\TO}$ contains inclusions~\eqref{eq:box-semantics}. If $\Rbox C\notin\bar{\tp}_{s_i}$, then, by~\eqref{eq:path:box-monotonicity}, either $\Rbox C\notin\bar{\tp}_{s_{i+1}}$ or $C\notin \bar{\tp}_{s_{i+1}}$; in either case,  the claim for all $n \geq s_i$ is immediate from the induction hypothesis and the fact that  $\tilde{\TO}$ contains inclusions~\eqref{eq:box-semantics}. This finishes the inductive argument, and
the claim for $n < s_0$  then follows from the choice of the $\bar{\tp}_k$ for $k \le s_0 = \min \A$.
A symmetric argument shows that $\Lbox C \in \bar{\tp}_n$ iff~$L \in \bar{\tp}_k$ for all~$k < n$.
This completes the proof of Lemma~\ref{lem:kromcond}.
\qed
\end{proof}

Returning to the proof of Theorem~\ref{thm:krom-nextbox}, we are now in a position to define an $\FOE$-rewriting $\rq(t)$ of the OMAQ $\q = (\TO,A)$ by encoding the conditions of Lemma~\ref{lem:kromcond} as follows:
\begin{equation*}
\rq(t) ~=~ \neg \Bigl[\bigvee_{d \leq 2|\TO| + 2} \ \ \bigvee_{\tp_0 \to^\Box \ldots \to^\Box \tp_d}  \exists t_0,\dots,t_d \,
\Bigl(\pathf_{\tp_0\to^\Box\ldots\to^\Box\tp_d}(t_0,\dots,t_d) \land  \bigvee_{\begin{subarray}{c}0 \leq i \leq d\\A \notin \tp_i\end{subarray}} (t = t_i)
\Bigr)\Bigr],
\end{equation*}
where
\begin{equation*}
\pathf_{\tp_0\to^\Box\ldots\to^\Box\tp_d}(t_0, \dots,t_d)  \   = \   (t_0 = \min) \ \land  \bigwedge_{0 \leq i < d} (t_i < t_{i+1})  \ \land \ (t_d = \max)\ \ \land \
\bigwedge_{0 \leq i \leq d}\hspace*{-0.25em}\typef_{\tp_i}(t_i) \ \ \land \
\bigwedge_{0 \leq i < d}\hspace*{-0.25em}\intervalf_{\tp_i,\tp_{i+1}}(t_i, t_{i+1}),
\end{equation*}
the formulas $\typef_{\tp}$ are given at the beginning of Section~\ref{sec:LTL}, the  formulas $\entailsf_{L,L'}$ are from the proof of Theorem~\ref{thm:krom-next} (but for~$\tilde{\TO}$ rather than $\TO$ alone) and
\begin{align*}
\intervalf_{\tp_i \tp_{i+1}}(t_i, t_{i+1}) \ \ =   \bigwedge_{B,B' \text{ occur  in } \TO} & \forall t,t' \, \bigl((t_i < t< t_{i+1}) \land B(t) \land (t_i < t' < t_{i+1} )  \land B'(t') \to \neg\entailsf_{B, \neg B'}(t,t') \bigr) \ \ \land \\
 \bigwedge_{L \in\tp_i \text{ and } B' \text{ occurs in } \TO} & \forall  t' \,\bigl((t_i < t' < t_{i+1})  \land B'(t') \to \neg\entailsf_{L, \neg B'}(t_i, t')\bigr) \ \ \land\\
  \bigwedge_{B \text{ occurs in } \TO \text{ and } L'\in\tp_{i+1}} & \forall  t\,\bigl( (t_i < t < t_{i+1}) \land B(t)  \to\neg\entailsf_{B,\neg L'}(t, t_{i+1})\bigr) \ \ \land\\
  \bigwedge_{L\in\tp_i \text{ and }L' \in \tp_{i+1}} & \neg\entailsf_{L, \neg L'}(t_i, t_{i+1}).
\end{align*}
Note that, if $\TO$ is inconsistent, then the disjunctions over sequences of types $\tp_0 \to^\Box \ldots \to^\Box \tp_d$ for $\TO$ are empty,  and the rewriting $\rq(t)$
is $\neg \bot$, that is, $\top$. \qed
\end{proof}


\section{Canonical Models for \OMPIQ{}s with Horn Ontologies}
\label{sec:ltl-tomiq}

We next use Theorems~\ref{thm:box} and~\ref{thm:krom-next} to construct $\FO(<)$-rewritings for $\LTL_{\smash{\horn}}\Xbox$ \OMPIQ{}s and $\FOE$-rewritings for $\LTL_\core\Xallop$ \OMPIQ{}s, thereby completing Table~\ref{LTL-table} for OMPIQs. First, for any $\LTL_{\smash{\horn}}\Xallop$ ontology $\TO$ and any data instance~$\A$ consistent with $\TO$, we define  an interpretation~$\C_{\TO,\A}$, called the \emph{canonical model of $\TO$ and $\A$}, that gives the certain answers to all \OMPIQ{}s of the form $(\TO,\varkappa)$.

The canonical model is defined by transfinite recursion~\cite{DBLP:books/daglib/0090259,Hajnal99} (examples illustrating why infinite ordinals are required in the construction of the model are given after the definition). Let $\Lambda$ be a countable set of atoms of the form~$\bot$ and $C(n)$, where $C$ is a basic temporal concept and $n \in \Z$. Denote by $\cl_\TO(\Lambda)$ the result of applying non-recursively the following rules to~$\Lambda$:
\begin{description}
\item[\rm (mp)] if $\TO$ contains $C_1 \land \dots \land C_m \to C$ and $C_i(n) \in \Lambda$ for all $i$, $1 \le i \le m$, then we add $C(n)$ to $\Lambda$;

\item[\rm (cls)] if $\TO$ contains $C_1 \land \dots \land C_m \to \bot$ and $C_i(n) \in \Lambda$ for  all $i$, $1 \le i \le m$, then we add $\bot$;

\item[\rm ({$\Rbox^\to$})] if $\Rbox C(n) \in \Lambda$, then we  add all $C(k)$ with $k > n$;

\item[\rm ({$\Rbox^\leftarrow$})] if $C(k) \in \Lambda$ for all $k > n$, then we add $\Rbox C(n)$;

\item[\rm ({$\Rnext^\to$})] if $\Rnext C(n) \in \Lambda$,  then we  add $C(n+1)$;

\item[\rm ({$\Rnext^\leftarrow$})] if $C(n + 1) \in \Lambda$,  then we add $\Rnext C(n)$;
\end{description}
and symmetric rules $(\Lbox^\to)$, ($\Lbox^\leftarrow$), ($\Lnext^\to$) and ($\Lnext^\leftarrow$) for the corresponding past-time operators. Note that the concepts  introduced in some of the rules above do not necessarily occur in $\TO$: for example, repeated applications of $\cl_\TO$ to $\Lambda$ containing~$C(n)$  will extend it with (infinitely many) atoms of the form $\Rnext^k C(n - k)$ and $\Lnext^k C(n + k)$, for $k > 0$. Note also that the rules for the $\Box$-operators are infinitary. 

Now, starting with $\cl_\TO^0(\Lambda) = \Lambda$, we then set, for any successor ordinal $\xi +1$ and any limit ordinal $\zeta$,
\begin{equation}\label{closures}
\cl_\TO^{\smash{\xi +1}}(\Lambda) = \cl_\TO(\cl_\TO^{\smash{\xi}}(\Lambda)) \qquad \text{ and }\qquad  \cl^{\smash{\zeta}}_\TO (\Lambda) = \bigcup\nolimits_{\xi<\zeta} \cl_\TO^{\smash{\xi}}(\Lambda).
\end{equation}
Let  $\C_{\TO,\A} = \cl^{\smash{\omega_1}}_\TO(\Lambda)$, where $\omega_1$ is the first uncountable ordinal. Note that, as
$\cl_\TO^{\smash{\omega_1}}(\Lambda)$ is countable (because its cardinality does not exceed the cardinality of $\mathbb Z \times \mathbb N$), there is an ordinal $\alpha < \omega_1$ such that $\cl^{\smash{\alpha}}_\TO(\Lambda) = \cl^{\smash{\beta}}_\TO (\Lambda)$, for all $\beta \ge \alpha$. 
For example, to derive $\Rbox A(n)$ from $A(n)$ and $A \to \Rnext A$, one needs $\omega + 1$ steps of the recursion; so deriving $B(n)$ from $A(n)$ and $\TO = \{\, A \to \Rnext A, \ \Rbox A \to B\,\}$ requires $\omega + 2$ steps.

In the sequel, we regard $\can$ as both a set of atoms of the form $\bot$ and $C(n)$, and an interpretation where, for any atomic concept~$A$ (but not $\bot$), we have $A^\can  = \{\, n\in \Z \mid A(n) \in \can\,\}$.

\begin{theorem}\label{canonical}
Let $\TO$ be an $\LTL_\horn\Xallop$ ontology and $\A$ a data instance. Then the following hold\textup{:}
\begin{description}
\item[\rm (\emph{i})] for any basic temporal concept $C$ and any $n \in \Z$, we have $C^\can = \{\, n\in\Z\mid C(n) \in \can\,\}$\textup{;}

\item[\rm (\emph{ii})] for any model $\Mmf$ of $\TO$ and $\A$, any basic temporal concept $C$ and any $n \in \Z$, if $C(n) \in \can$ then $n\in C^\Mmf$\textup{;}

\item[\rm (\emph{iii})] if $\bot \in \C_{\TO,\A}$, then $\TO$ and $\A$ are inconsistent\textup{;} otherwise, $\can$ is a model of $\TO$ and $\A$\textup{;}

\item[\rm (\emph{iv})] if $\TO$ and $\A$ are consistent, then, for any \OMPIQ{} $\q = (\TO,\varkappa)$, we have
$\ans^\Z(\q,\A) =  \varkappa^\can$.
\end{description}
\end{theorem}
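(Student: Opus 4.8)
The plan is to prove the four parts in the order (i), then a combined argument for (ii) and the consistency half of (iii), then the remaining half of (iii), and finally (iv) via a monotonicity lemma. Throughout I would use the fact, recorded just before the theorem, that the closure sequence stabilises at some countable ordinal $\alpha < \omega_1$, so that $\can = \cl_\TO(\can)$ is closed under every rule defining $\cl_\TO$. For (i) I would argue by structural induction on the basic temporal concept $C$. The base case $C = A$ holds by the very definition of the interpretation $\can$. For $C = \Rbox D$, the semantic clause gives $n \in (\Rbox D)^\can$ iff $k \in D^\can$ for all $k > n$, which by the induction hypothesis is iff $D(k) \in \can$ for all $k > n$; closure under the forward and backward $\Rbox$-rules then turns this into $\Rbox D(n) \in \can$. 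The cases $\Rnext$, $\Lnext$, $\Lbox$ are analogous, each using the matching pair of closure rules. The only delicate point is the backward box rule, which is infinitary: it is precisely this rule, applied after the infinitely many atoms $D(k)$ have been generated, that forces the recursion past $\omega$ and justifies working up to $\omega_1$; but since $\can$ is a genuine fixed point, the induction goes through cleanly.

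For (ii) I would fix an arbitrary model $\Mmf$ of $(\TO,\A)$ and prove, by transfinite induction on $\xi$, the conjunction: every $C(n) \in \cl_\TO^\xi(\Lambda)$ with $C$ a basic temporal concept satisfies $n \in C^\Mmf$, and $\bot \notin \cl_\TO^\xi(\Lambda)$. The base case $\xi = 0$ holds because $\Mmf \models \A$. At successor stages I would check each rule in turn: (mp) is sound since $\Mmf \models \TO$ makes the corresponding clause true; the box and next rules are sound because $\Mmf$ respects the \LTL{} semantics; and (cls) can never fire, for otherwise the induction hypothesis would place some $n$ in $C_1^\Mmf \cap \dots \cap C_m^\Mmf$ while $\Mmf$ makes $C_1 \land \dots \land C_m \to \bot$ true, a contradiction. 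Limit stages are immediate from the union. This simultaneously establishes (ii) and the fact that $\bot \notin \can$ whenever a model of $(\TO,\A)$ exists, which is exactly the contrapositive of the first claim in (iii).

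Part (iii) then splits cleanly. Its first claim follows from the above: if $\TO$ and $\A$ were consistent, there would be a model, forcing $\bot \notin \can$. For the second claim, assuming $\bot \notin \can$, I would verify directly that $\can$ is a model. It satisfies $\A$ because $\Lambda \subseteq \can$. For a Horn clause $C_1 \land \dots \land C_m \to C$ of $\TO$, if $n \in \bigcap_i C_i^\can$ then by (i) each $C_i(n) \in \can$; closure under (mp) gives $C(n) \in \can$, hence $n \in C^\can$, again by (i). A clause with head $\bot$ can never have its body satisfied at any $n$, since otherwise (cls) would place $\bot$ in $\can$. So every clause of $\TO$ is true in $\can$, and $\can$ is a model of $(\TO,\A)$.

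Finally, for (iv) the inclusion $\ans^\Z(\q,\A) \subseteq \varkappa^\can$ is immediate, because $\can$ is itself a model of $(\TO,\A)$ by (iii). The reverse inclusion rests on a monotonicity lemma: for every \emph{positive} temporal concept $\varkappa$ and every model $\Mmf$ of $(\TO,\A)$, one has $\varkappa^\can \subseteq \varkappa^\Mmf$. I would prove this by structural induction on $\varkappa$, using (ii) for the atomic case and the semantic clauses for the Boolean and temporal operators $\Rnext$, $\Rbox$, $\Rdiamond$, $\U$, $\Lnext$, $\Lbox$, $\Ldiamond$, $\Si$ for the inductive steps; the crucial point is that positivity, i.e. the absence of $\neg$, keeps every operator monotone, so the inclusion propagates upward. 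Granting the lemma, $n \in \varkappa^\can$ yields $n \in \varkappa^\Mmf$ for every model $\Mmf$, that is, $n \in \ans^\Z(\q,\A)$. The main obstacle in the whole development is the careful handling of the transfinite and infinitary closure in parts (i) and (ii); once $\can$ is known to be a fixed point and the soundness of the rules is established, the remaining arguments are routine inductions.
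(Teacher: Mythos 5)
Your proposal is correct and follows essentially the same route as the paper's proof: (i) by structural induction using the pairs of closure rules, (ii) by transfinite induction on the closure stages using that $\Mmf$ models $\TO$ and $\A$, (iii) from (i)/(ii) plus closure under (mp) and (cls), and (iv) from the monotonicity of positive temporal concepts combined with (ii) and (iii). The only cosmetic difference is that you fold the claim $\bot\notin\can$ into the transfinite induction of (ii), whereas the paper derives it afterwards from (ii); the underlying argument is identical.
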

\begin{proof}
Claim (\emph{i}) is proved by induction on the construction of $C$. The basis is immediate from the definition of~$\can$. For $C =  \Rbox D$, suppose first that $n\in (\Rbox D)^\can$. Then $k\in D^\can$ for all $k > n$, whence, by the induction hypothesis, we have~$D(k) \in \can$, and so, by ({$\Rbox^\leftarrow$}), we obtain $\Rbox D(n) \in\can$. Conversely, if $\Rbox D(n) \in\can$ then, by ({$\Rbox^\to$}), we have $D(k) \in \can$ for all $k > n$. By the induction hypothesis, $k \in D^\can$ for all $k > n$, and so $n\in (\Rbox D)^\can$. The other temporal operators, $\Lbox$, $\Rnext$ and $\Lnext$, are treated similarly.

Claim (\emph{ii}) is proved by transfinite induction on the construction of
$\C_{\TO,\A}$. If $\xi=0$, then the claim holds because $\cl^0_\TO(\A)=\A$, and so $C(n)\in \cl^0_\TO(\A)$ implies $n\in C^\Mmf$ for every model $\Mmf$
of $\A$. Now, suppose $C(n) \in \cl_\TO^{\xi}(\A)$ implies $n \in C^{\Mmf}$, for all $n \in \Z$ and all ordinals $\xi < \zeta$. If $\zeta$ is a limit ordinal, then $\cl^{\smash{\zeta}}_\TO (\A) = \bigcup\nolimits_{\xi<\zeta} \cl_\TO^{\smash{\xi}}(\A)$, and so we are done. If $\zeta = \xi + 1$, then $\cl_\TO^{\smash{\xi +1}}(\A) = \cl_\TO(\cl_\TO^{\smash{\xi}}(\A))$, and the claim follows from the induction hypothesis  and the fact that $\Mmf$ is a model of $\TO$, for rules (mp) and (cls), and definitions \eqref{eq:semantics:box} and \eqref{eq:semantics:next} of $C^{\Mmf}$, for the remaining rules. 

To show (\emph{iii}), assume first that $\bot\in \can$. Then there exist an axiom  $C_1 \land \dots \land C_m \to \bot$ in $\TO$ and $n\in \Z$ such that $C_i(n) \in \can$ for all $i$ $(1 \le i \le m)$. If there were a model $\Mmf$ of $\TO$ and $\A$, then, by (\emph{ii}), we would have $n \in C_{\smash{i}}^\Mmf$ for all $i$, which is impossible. It follows that there is no model of $\TO$ and $\A$, that is, they are inconsistent.
If $\bot\notin\can$, then $\can$
is a model of $\TO$ by (\emph{i}) as it is closed under rules~(mp) and~(cls); $\can$ is a model of $\A$ by definition.

To show (\emph{iv}), observe first that, for any interpretations $\Mmf$ and $\Mmf'$, if $A^\Mmf\subseteq A^{\Mmf'}$
for all atomic concepts $A$, then $\varkappa^\Mmf\subseteq\varkappa^{\Mmf'}$ for all positive
temporal concepts $\varkappa$.
Now, that $n \in \ans^\Z(\q,\A)$ implies $n\in\varkappa^\can$
follows by (\emph{iii}) from the fact that $\can$ is a model of $\TO$ and $\A$, while the converse direction follows
from (\emph{ii}) and the observation above.
\qed
\end{proof}

If $\A$ is consistent with $\TO$,  then, by Theorem~\ref{canonical}~(\emph{ii}) and~(\emph{iii}), for any $C$ in $\TO$ and $\A$ and any $n \in \Z$,
\begin{equation}\label{intersection}
n\in C^\can \quad\text{ iff }\quad n\in C^\Mmf \ \ \text{ for all models }\Mmf \text{ of } \TO \text{ and }\A.
\end{equation}
Denote by $\type_{\TO,\A}(n)$ the set of all $C\in\subo$ with $n\in C^\can$. It is known that every satisfiable $\LTL$-formula is satisfied in an \emph{ultimately periodic model}~\cite{DBLP:journals/jacm/SistlaC85}. Since $\C_{\TO,\A}$ is a `minimal' model in the sense of~\eqref{intersection},  it is also ultimately periodic, which is formalised in the next two lemmas. This periodic structure will be used in defining our FO-rewritings below.
\begin{lemma}\label{period:A}
{\rm (\emph{i})} For any $\LTL_{\smash{\horn}}\Xallop$ ontology $\TO$ and any $\A$ consistent with $\TO$, there are positive integers $s_{\TO,\A} \le 2^{|\TO|}$ and $p_{\TO,\A} \le 2^{2|\TO|}$ such that
\begin{equation}
\label{eq:period:A}
\type_{\TO,\A}(n) = \type_{\TO,\A}(n - p_{\TO,\A}), \text{ for } n \le \min \A -s_{\TO,\A},  \quad\text{ and }\quad
\type_{\TO,\A}(n) = \type_{\TO,\A}(n + p_{\TO,\A}), \text{ for }n \ge \max \A + s_{\TO,\A}.
\end{equation}
{\rm (\emph{ii})} For any $\LTL_\horn\Xbox$ ontology $\TO$ and any $\A$ consistent with $\TO$, there is a positive integer $s_{\TO,\A} \le |\TO|$ such that~\eqref{eq:period:A} holds with $p_{\TO,\A} =1$.
\end{lemma}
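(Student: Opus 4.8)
The plan is to read off everything from the least model. By minimality~\eqref{intersection} and Theorem~\ref{canonical}(i), $\type_{\TO,\A}(n)$ is exactly the set of concepts from $\subo$ true at $n$ in the canonical model $\can$, so the statement is a claim about ultimate periodicity of the two-sided sequence $\bigl(\type_{\TO,\A}(n)\bigr)_{n\in\Z}$. Since $\can$ is symmetric, I would only argue periodicity of $\bigl(\type_{\TO,\A}(n)\bigr)_{n\ge\max\A}$ to the right of the data, the behaviour to the left of $\min\A$ being obtained by interchanging the future operators $\Rbox,\Rnext$ with their past counterparts $\Lbox,\Lnext$. Throughout I would assume the normal form without nested temporal operators (justified in Section~\ref{sec:expressivity}), so that every concept of $\subo$ is atomic or one of $\Rbox A$, $\Lbox A$, $\Rnext A$, $\Lnext A$ (or a negation) and the closure rules defining $\can$ act locally.

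For part (ii) (no $\nxt$), the engine is a monotonicity observation that is immediate from the semantics~\eqref{eq:semantics:box} of $\can$: for $n\le n'$ one has $\Rbox C(n)\Rightarrow\Rbox C(n')$ and $\Lbox C(n')\Rightarrow\Lbox C(n)$. Scanning rightward from $\max\A$, each future-box concept can therefore only switch on and each past-box concept only switch off, so the ``box part'' of the type changes at most $|\TO|$ times. I would then show that, on the ray $(\max\A,\infty)$, the atomic and $\Lbox$ content of each type is a deterministic function of the box part via the $(\mathrm{mp})$-closure together with the forward recurrence $\Lbox C(n+1)\Leftrightarrow\Lbox C(n)\wedge C(n)$. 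Crucially, with no $\nxt$ and no data on the ray, minimality forbids any concept from being self-sustained on $(\max\A,\infty)$: every truth there arises from a finite forcing chain rooted in the data or in the left part, and each link of such a chain can postpone a box switch by at most one position. As the chains have length at most $|\TO|$, all switches occur within $[\max\A+1,\max\A+|\TO|]$, the type is constant afterwards, and so $s_{\TO,\A}\le|\TO|$ with period $p_{\TO,\A}=1$.

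For part (i) (with $\nxt$) the obstacle is that $\type_{\TO,\A}(n)$ now depends on the future as well as the past, through $\Rnext C(n)\Leftrightarrow C(n+1)$ and the infinitary rule $(\Rbox^{\leftarrow})$, so there is no naive deterministic forward automaton to pump. I would separate the two directions. The forward-deterministic content of $\type_{\TO,\A}(n+1)$ — the values of the $\Lnext$- and $\Lbox$-concepts and of the atomic concepts forced from the left — evolves as a map whose state is captured by a subset of the atomic concepts of $\TO$, a space of size at most $2^{|\TO|}$. Pigeonhole then produces two positions within $2^{|\TO|}$ steps of $\max\A$ carrying the same forward state, which yields the threshold $s_{\TO,\A}\le 2^{|\TO|}$; the period is bounded by the number of distinct types, $\le 2^{|\subo|}\le 2^{2|\TO|}$, giving $p_{\TO,\A}\le 2^{2|\TO|}$.

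The technical heart, and the step I expect to be hardest, is upgrading ``equal states recur'' to genuine periodicity of the \emph{least} model. Concretely, given $\max\A\le n_1<n_2$ with the same forward state, I would form the interpretation $\Mmf$ that repeats the finite segment $\can|_{[n_1,n_2)}$ periodically to the right of $n_1$ and agrees with $\can$ elsewhere, and then prove two things: that $\Mmf$ is again a model of $\TO$ and $\A$, and that it introduces nothing beyond what minimality forces, so that $\Mmf=\can$ on $[n_1,\infty)$ by~\eqref{intersection}. Verifying modelhood across the repeated seams is delicate because the $\Rbox/\Lbox$ rules are infinitary: one must check that the periodic pattern of each $C$ makes $\Rbox C$ and $\Lbox C$ come out with precisely the truth values copied from the segment's types, and here I would again invoke the monotonicity of part (ii) together with the existence, for any satisfiable \LTL{} formula, of an ultimately periodic model to pin down the future-dependent values. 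Once $\Mmf=\can$ to the right of $n_1$, the identity $\type_{\TO,\A}(n)=\type_{\TO,\A}(n+p_{\TO,\A})$ for $n\ge\max\A+s_{\TO,\A}$ follows, and the symmetric construction settles the left.
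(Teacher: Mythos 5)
Your overall frame---read the types off the canonical model $\can$, use minimality~\eqref{intersection}, find a repetition by pigeonhole, then do surgery on the tail---is the same as the paper's, but both places where you yourself locate the real work contain genuine gaps. In part (i), you build \emph{one} interpretation $\Mmf$ that repeats the segment $\can|_{[n_1,n_2)}$ forever and conclude $\Mmf=\can$ on $[n_1,\infty)$ ``by minimality''. Minimality gives only one inclusion: since $\Mmf$ is at best \emph{a} model of $(\TO,\A)$, \eqref{intersection} yields $C^{\can}\subseteq C^{\Mmf}$, and nothing prevents the pumped model from satisfying strictly more than $\can$ on the tail; the claim that it ``introduces nothing beyond what minimality forces'' is essentially the lemma you are proving, and appealing to the existence of \emph{some} ultimately periodic model of a satisfiable \LTL{} formula cannot pin down the \emph{minimal} one. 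The paper avoids this circle with \emph{two} interpretations, each used only in the direction minimality supports: $\Mmf_1$, obtained from $\can$ by \emph{cutting out} the segment $[k',k)$, is shown (by induction on concepts, using that the full types at $k'$ and $k$ coincide) to be a model whose type at $n\ge k'$ is $\type_{\TO,\A}(n+p)$, so \eqref{intersection} gives $\type_{\TO,\A}(n)\subseteq\type_{\TO,\A}(n+p)$; and $\Mmf_2$, obtained by \emph{doubling} the segment, gives the converse inclusion. Neither model is ever claimed to equal $\can$. Two further problems: your pigeonhole is on ``forward states'' rather than on full types, but with only left-looking content equal at $n_1,n_2$ the seam need not be consistent, because in $\LTL_\horn\Xallop$ atoms on the tail can be forced by future-dependent premises (e.g.\ $\Rbox A\to B$ together with $A\to\Rnext A$), so the $\Rbox$/$\Rnext$ content at the two positions must agree as well; pigeonholing on full types (at most $2^{|\TO|}$ of them) gives $s^{\FF},p^{\FF}\le 2^{|\TO|}$ at once. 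Moreover, \eqref{eq:period:A} requires a \emph{single} period valid on both tails; the paper obtains it as the product $p^{\FF}\times p^{\PP}\le 2^{2|\TO|}$, whereas your left/right-symmetric treatment produces two unrelated periods, and your $2^{2|\TO|}$ bound ``from the number of types'' is not connected to your own pigeonhole.

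In part (ii), the step ``every truth on the ray comes from a forcing chain of length at most $|\TO|$, each link of which postpones a box switch by one position'' is not an argument: the closure rules $(\Rbox^\leftarrow)$ and $(\Lbox^\leftarrow)$ defining $\can$ are infinitary, the construction of $\can$ is transfinite in general, and nothing bounds the length of a derivation by $|\TO|$, since axioms can be reused; so eventual constancy of the types---not merely the bound $s_{\TO,\A}\le|\TO|$---remains unproved. What the paper proves instead, and what you would need, is the lemma: if two positions beyond $\max\A$ carry the same \emph{boxed} concepts, then \emph{all} types between them coincide. This is shown by passing to the interpretation that assigns to every position in between the \emph{intersection} of the types there, checking that it satisfies the same boxed concepts as $\can$ everywhere (hence is a model of $\TO$ and $\A$), and invoking~\eqref{intersection}. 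Combined with the monotonicity you correctly state (each $\Rbox C$ can only switch on and each $\Lbox C$ only off along the ray, so there are at most $|\TO|$ distinct box parts) and with the cut/double argument of part (i), this yields $s_{\TO,\A}\le|\TO|$ and $p_{\TO,\A}=1$.
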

\begin{proof}
(\emph{i}) Take the minimal number $k > \max \A$ such that $\type_{\TO,\A}(k) = \type_{\TO,\A}(k')$, for some $k'$ with $\max \A < k' < k$, and set $s^\FF = k' - \max \A$ and $p^\FF = k - k'$. As the number of distinct $\type_{\TO,\A}(n)$, for $n\notin\tem(\A)$, does not exceed $2^{|\TO|}$, we have $\smash{s^\FF,p^\FF \le 2^{|\TO|}}$. By using a symmetric construction, we obtain $\smash{s^\PP}$ and $\smash{p^\PP}$ and set $\smash{s_{\TO,\A} = \max(s^\FF, s^\PP)}$ and $\smash{p_{\TO,\A} = p^\FF \times p^\PP}$.

To show~\eqref{eq:period:A}, consider the interpretation $\Mmf_1$ where, for any atomic concept $A$ occurring in $\TO$ or $\A$ and any $n \in \Z$, we have $n \in A^\Mmf$ iff $A \in \type_{\TO,\A}(n)$, for $n < k'$, and $n \in A^{\Mmf_1}$ iff $A \in \type_{\TO,\A}(n + p_{\TO,\A})$, for $n \ge k'$. (Intuitively, $\Mmf_1$ is obtained by cutting out the fragment $\type_{\TO,\A}(k'), \dots, \type_{\TO,\A}(k -1)$ of length~$p_{\TO,\A}$ from $\C_{\TO,\A}$.) Denote by $\type_1(n)$ the set of basic temporal concepts $C$ from $\TO$ and $\A$ such that $n\in C^{\Mmf_1}$. By induction on the construction of $C$, it is not hard to see that $\type_1(n) = \type_{\TO,\A}(n)$, for $n < k'$, and $\type_1(n) = \type_{\TO,\A}(n + p_{\TO,\A})$, for $n \ge k'$. It follows that $\Mmf_1$ is a model of $\TO$ and~$\A$. In view of~\eqref{intersection}, we obtain $\type_{\TO,\A}(n) \subseteq \type_{\TO,\A}(n+p_{\TO,\A})$, for every $n \geq \max\A + s_{\TO,\A}$.

For the converse inclusion, consider $\Mmf_2$ obtained by replacing the fragment $\type_{\TO,\A}(k'), \dots, \type_{\TO,\A}(k -1)$ of length~$p_{\TO,\A}$ in~$\C_{\TO,\A}$  with the doubled sequence $\type_{\TO,\A}(k'), \dots, \type_{\TO,\A}(k-1)$, $\type_{\TO,\A}(k'), \dots, \type_{\TO,\A}(k-1)$. Again, $\Mmf_2$ is a model of $\TO$ and~$\A$, but now, for any $n \geq \max \A + s_{\TO,\A}$, the set of all  $C$ in $\TO$ and $\A$ with $n+p_{\TO,\A} \in C^{\Mmf_2}$ coincides with $\type_{\TO,\A}(n)$. By~\eqref{intersection}, we obtain $\type_{\TO,\A}(n+p_{\TO,\A}) \subseteq \type_{\TO,\A}(n)$.

This proves the second claim in~\eqref{eq:period:A}; a similar argument establishes the first one.

\smallskip

(\emph{ii}) Observe that $\Rbox C \in \type_{\TO,\A}(k')$ implies $\Rbox C \in \type_{\TO,\A}(k)$, and $\Lbox C \in \type_{\TO,\A}(k)$ implies $\Lbox C \in \type_{\TO,\A}(k')$, for any $k' <k$, and that if $\type_{\TO,\A}(k')$ and $\type_{\TO,\A}(k)$, for $\max \A < k' <k$ or $\min \A > k' > k$, have the same boxed concepts, then $\type_{\TO,\A}(k') = \type_{\TO,\A}(k)$. To show the latter, consider the interpretation $\Mmf$ obtained from $\C_{\TO,\A}$
by setting $n\in A^\Mmf$ if $A\in \bigcap_{k' \le n \le k}\type_{\TO,\A}(n)$, for $k' \le n \le k$. It is readily seen that $\Mmf$ satisfies the same boxed concepts as $\C_{\TO,\A}$ at every $n\in \Z$, and so it is a model of $\TO$ and $\A$. Thus, by~\eqref{intersection}, all the $\type_{\TO,\A}(n)$ must coincide, for $k' \le n \le k$. The case
$k'>k$ is symmetric. Now (\emph{ii}) follows from the observation that there are at most $|\TO|$-many $\type_{\TO,\A}(n)$ having distinct boxed concepts and the proof of (\emph{i}).
\qed
\end{proof}

\begin{figure}
\centering%
\begin{tikzpicture}[>=latex, yscale=0.75, xscale=0.875, semithick,
point/.style={draw,thick,circle,inner sep=0pt,minimum size=2.5mm,fill=white}]\footnotesize
\draw[thick,->, black!60] (-6.75,0) -- ++(15.25,0);
%
\node (p1p) at (-6,0) [point, fill=black!70,label=below:{$-4$}, label=above:{\scriptsize\shortstack{$B,C,D$}}]{}; 
\node (p1) at (-4.5,0) [point, fill=black!70,label=below:{$-3$}, label=above:{\scriptsize\shortstack{$B,C,D$}}]{}; 
\node at (-3,0) [point, fill=black!70,label=below:{$-2$}, label=above:{\scriptsize\shortstack{$B,C,D$}}]{}; 
\node at (-1.5,0) [point, fill=black!70,label=below:{$-1$}, label=above:{\scriptsize\shortstack{$B,C,D$}}]{}; 
\node (n0) at (0,0) [point, fill=black!70,label=below:{$0$}, label=above:{\scriptsize\shortstack{$B,C,D$}}]{}; 
\node (n1) at (1.5,0) [point, fill=black!70,label=below:{$1$}, label=above:{\scriptsize\shortstack{$B,C,D$}}]{}; 
\node at (3,0) [point, fill=black!45,label=below:{$2$}, label=above:{\scriptsize\shortstack{$C,D$}}]{}; 
\node at (4.5,0) [point, fill=black!20,label=below:{$3$}, label=above:{\scriptsize\shortstack{$D$}}]{}; 
\node (n2) at (6,0) [point, label=below:{$4$}, label=above:{\scriptsize}]{};
\node (n2p) at (7.5,0) [point, label=below:{$5$}, label=above:{\scriptsize}]{};
%
%
\begin{scope}[dashed,thin]
\draw ($(n0)+(0,-1.1)$) node[below] {\scriptsize$\min\A$} -- ++(0,0.5);
\draw ($(n1)+(0,-1.1)$) node[below] {\scriptsize$\max\A$} -- ++(0,0.5);
\draw ($(n2)+(0,-1.1)$) node[below] {\scriptsize$\max\A+s_{\TO,\A}$} -- ++(0,0.5);
\draw ($(p1)+(0,-1.1)$) node[below] {\scriptsize$\min\A-s_{\TO,\A}$} -- ++(0,0.5);
\draw ($(p1p)+(0,-1.1)$)  -- ++(0,0.5);
\draw ($(n2p)+(0,-1.1)$)  -- ++(0,0.5);
\end{scope}
\draw[<->,thin] ($(n1)+(0,-0.95)$) to node [above] {\scriptsize$s_{\TO,\A}$} ($(n2)+(0,-0.95)$);
\draw[<->,thin] ($(n0)+(0,-0.95)$) to ($(p1)+(0,-0.95)$); 
\draw[<->,thin] ($(p1)+(0,-0.95)$) to ($(p1p)+(0,-0.95)$);
\draw[<->,thin] ($(n2)+(0,-0.95)$) to node [above] {\scriptsize$p_{\TO,\A}$} ($(n2p)+(0,-0.95)$);
\end{tikzpicture}
\caption{The canonical model in Example~\ref{ex:period} (\emph{i}): $s_{\TO,\A}=3$ and $p_{\TO,\A}=1$.}\label{fig:period:1}
\end{figure}

\begin{figure}
\centering%
\begin{tikzpicture}[>=latex, yscale=0.75, xscale=1.3, semithick,
point/.style={draw,thick,circle,inner sep=0pt,minimum size=2.5mm,fill=white}]\footnotesize
\begin{scope}[line width=1mm,gray]
\draw (2,1.5) -- ++(0,-5.4);
\draw (8,1.5) -- ++(0,-5.4);
\draw[ultra thick,<->] (2,1.2) to node[above,black] {$p_{\TO}$} ++(6,0);
\end{scope}
\draw[thick, ->, black!60] (-1.5,0) -- ++(10.25,0);
%
\node (p1) at (-1,0) [point, label=below:{$-1$}, label=above:{\scriptsize}]{}; 
\node (n0) at (0,0) [point, fill=gray, label=below:{$0$}, label=above:{\scriptsize\shortstack{$A$}}]{}; 
\node (n1) at (1,0) [point, label=below:{$1$}, label=above:{\scriptsize\shortstack{}}]{}; 
\node (n2) at (2,0) [point, fill=gray, label=below:{\colorbox{white}{$2$}}, label=above:{\scriptsize\colorbox{white}{\shortstack{$A$}}}]{}; 
\node (p2) at (3,0) [point, label=below:{$3$}, label=above:{\scriptsize\shortstack{}}]{}; 
\node (p3) at (4,0) [point, fill=gray, label=below:{$4$}, label=above:{\scriptsize\shortstack{$A$}}]{}; 
\node (p4) at (5,0) [point, label=below:{$5$}, label=above:{\scriptsize\shortstack{}}]{}; 
\node (p5) at (6,0) [point, fill=gray, label=below:{$6$}, label=above:{\scriptsize\shortstack{$A$}}]{}; 
\node (p6) at (7,0) [point, label=below:{$7$}, label=above:{\scriptsize\shortstack{}}]{}; 
\node (p7) at (8,0) [point, fill=gray, label=below:{\colorbox{white}{$8$}}, label=above:{\scriptsize\colorbox{white}{\shortstack{$A$}}}]{}; 
\begin{scope}[dashed,thin]
\draw ($(n0)+(0,-1.1)$) node[below] {\scriptsize$\min\A_1$} -- ++(0,0.4);
\draw ($(n1)+(0,-1.1)$) node[below,fill=white] {\scriptsize$\max\A_1$} -- ++(0,0.4);
\draw ($(n2)+(0,-1.2)$) -- ++(0,0.6);
\draw ($(p3)+(0,-1.2)$) -- ++(0,0.6);
\draw ($(p5)+(0,-1.2)$) -- ++(0,0.6);
\draw ($(p7)+(0,-1.2)$) -- ++(0,0.6);
\end{scope}
\draw[<->,thin] ($(n2)+(0,-0.9)$) to node [below] {\scriptsize$p_{\TO,\A_1}$} ($(p3)+(0,-0.9)$);
\draw[<->,thin] ($(p3)+(0,-1.1)$) to node [above] {\scriptsize$p_{\TO,\A_1}$} ($(p5)+(0,-1.1)$);
\draw[<->,thin] ($(p5)+(0,-0.9)$) to node [below] {\scriptsize$p_{\TO,\A_1}$} ($(p7)+(0,-0.9)$);
\begin{scope}[yshift=-24mm]
\draw[->, thick, black!60] (-1.5,0) -- ++(10.25,0);
%
\node (p1) at (-1,0) [point, label=below:{$-1$}, label=above:{\scriptsize\shortstack{}}]{};
\node (n0) at (0,0) [point, label=below:{$0$}, label=above:{\scriptsize }]{};
\node (n1) at (1,0) [point, fill=gray, label=below:{$1$}, label=above:{\scriptsize\shortstack{$B$}}]{}; 
\node (n2) at (2,0) [point, label=below:{\colorbox{white}{$2$}}, label=above:{\scriptsize\colorbox{white}{\shortstack{}}}]{}; 
\node (p2) at (3,0) [point, label=below:{$3$}, label=above:{\scriptsize\shortstack{}}]{}; 
\node (p3) at (4,0) [point, fill=gray, label=below:{$4$}, label=above:{\scriptsize\shortstack{$B$}}]{}; 
\node (p4) at (5,0) [point, label=below:{$5$}, label=above:{\scriptsize\shortstack{}}]{}; 
\node (p5) at (6,0) [point, label=below:{$6$}, label=above:{\scriptsize\shortstack{}}]{}; 
\node (p6) at (7,0) [point, fill=gray, label=below:{$7$}, label=above:{\scriptsize\shortstack{$B$}}]{}; 
\node (p7) at (8,0) [point, label=below:{\colorbox{white}{$8$}}, label=above:{\scriptsize\colorbox{white}{\shortstack{}}}]{}; 
\begin{scope}[dashed,thin]
\draw ($(n0)+(0,-1.1)$) node[below] {\scriptsize$\min\A_2$} -- ++(0,0.4);
\draw ($(n1)+(0,-1.1)$) node[below,fill=white] {\scriptsize$\max\A_2$} -- ++(0,0.4);
\draw ($(n2)+(0,-1.2)$) -- ++(0,0.6);
\draw ($(p4)+(0,-1.2)$) -- ++(0,0.6);
\draw ($(p7)+(0,-1.2)$) -- ++(0,0.6);
\end{scope}
\draw[<->,thin] ($(n2)+(0,-0.9)$) to node [below] {\scriptsize$p_{\TO,\A_2}$} ($(p4)+(0,-0.9)$);
\draw[<->,thin] ($(p4)+(0,-1.1)$) to node [above] {\scriptsize$p_{\TO,\A_2}$} ($(p7)+(0,-1.1)$);
\end{scope}
\end{tikzpicture}%
\caption{The canonical model in Example~\ref{ex:period} (\emph{ii}): $s_{\TO}=3$ and $p_{\TO}=6$.}\label{fig:period:2}
\end{figure}
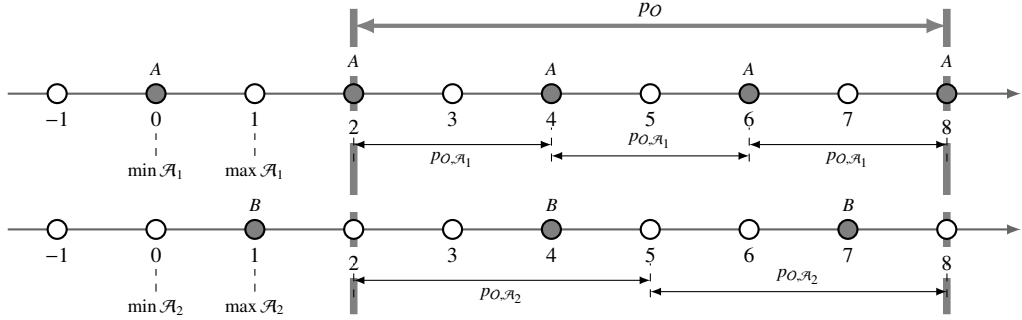

\begin{example}\label{ex:period}\em
We illustrate Lemma~\ref{period:A} by two examples. (\emph{i}) The periodic structure of the canonical model of the $\LTL_\core\Xbox$ ontology $\TO=\bigl\{\,B \to \Lbox B, \ \Lbox B \to C, \ \Lbox C\to D\,\bigr\}$ and the data instance $\A=\{\,B(1)\,\}$ is shown in Fig.~\ref{fig:period:1}.

(\emph{ii}) Consider now the $\LTL_\core\Xnext$ ontology $\TO = \bigl\{\,A \to \Rnext^2 A, \  \ B \to \Rnext^3 B\,\bigr\}$ and the data instances $\A_1 = \{\,A(0)\,\}$ and $\A_2 = \{\,B(1)\,\}$. It is readily seen that $s_{\TO,\A_1} = 1$ and $ p_{\TO,\A_1} = 2$, while $s_{\TO,\A_2} = 1$ and $p_{\TO,\A_2} = 3$; see  Fig.~\ref{fig:period:2}. On the other hand, we can also take $s_{\TO,\A_1} = s_{\TO,\A_2} =1$ and $p_{\TO,\A_1} = p_{\TO,\A_2} = 6$.
\end{example}

In fact, as we shall see in Lemma~\ref{period} below, these numbers can always  be chosen independently of the data instance: $s_{\TO}$ is the maximum of the lengths of prefixes, and $p_{\TO}$ is the least common multiple of all possible periods.
\begin{lemma}\label{period}
{\rm (\emph{i})}  For any $\LTL_{\smash{\horn}}\Xallop$ ontology $\TO$, there are positive integers $s_{\TO} \le 2^{|\TO|}$ and $p_{\TO} \le 2^{2|\TO|\cdot 2^{|\TO|}}$ such that, for any data instance $\A$ consistent with $\TO$,
\begin{equation}
\label{eq:period}
\type_{\TO,\A}(n) = \type_{\TO,\A}(n - p_{\TO}),  \text{ for } n \le \min \A -s_{\TO}, \quad\text{ and }\quad
%
\type_{\TO,\A}(n) = \type_{\TO,\A}(n + p_{\TO}),  \text{ for }n \ge \max \A + s_{\TO}.
\end{equation}
{\rm (\emph{ii})} For any $\LTL_{\smash{\horn}}\Xbox$ ontology $\TO$, there is a positive integer $s_{\TO} \le |\TO|$ such that~\eqref{eq:period} holds with $p_{\TO} =1$, for any $\A$ consistent with $\TO$.
\end{lemma}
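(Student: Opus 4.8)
\emph{Proof plan.} The plan is to manufacture the uniform constants directly from the data-dependent ones produced by Lemma~\ref{period:A}, exploiting the fact that the latter already range over a set whose size depends only on $\TO$. For part~(i) I will take $s_\TO$ to be the largest prefix length $s_{\TO,\A}$ that can occur and $p_\TO$ a common multiple of all periods that can occur, and then verify that this single pair of constants still witnesses~\eqref{eq:period} for every consistent $\A$.

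Two elementary monotonicity observations make such a uniform choice legitimate. First, if~\eqref{eq:period:A} holds for a given $\A$ with prefix $s_{\TO,\A}$, then it holds verbatim with any larger prefix, since enlarging the prefix only deletes some of the asserted equalities. Second, iterating $\type_{\TO,\A}(n) = \type_{\TO,\A}(n + p)$ shows that once a period $p$ holds in a tail it continues to hold with every positive multiple of $p$, and symmetrically on the past side. Consequently, any $s_\TO \ge s_{\TO,\A}$ together with any common multiple $p_\TO$ of the forward and backward periods of $\A$ witnesses~\eqref{eq:period} for that $\A$. So it remains only to check that the maximum prefix length and the least common multiple of periods are taken over a bounded set, and to estimate them.

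Here I use the finer information inside the proof of Lemma~\ref{period:A}(i). The forward period $p^\FF_{\TO,\A}$ is obtained there as the distance $k - k'$ to the first type that repeats an earlier type of the forward tail; by minimality of $k$ the intervening types $\type_{\TO,\A}(k'), \dots, \type_{\TO,\A}(k-1)$ are pairwise distinct, so $p^\FF_{\TO,\A}$ is at most the number of distinct types, which is at most $2^{|\TO|}$; the same bound holds for the backward period $p^\PP_{\TO,\A}$ and for the prefix $s_{\TO,\A}$. As $\A$ ranges over all data instances consistent with $\TO$, each of these quantities therefore takes values in the fixed finite set $\{1, \dots, 2^{|\TO|}\}$. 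Hence I may set $s_\TO = \max_\A s_{\TO,\A} \le 2^{|\TO|}$ and $p_\TO = \operatorname{lcm}(1, \dots, 2^{|\TO|})$. This $p_\TO$ is divisible by every forward and every backward period that can occur, so by the monotonicity observations it witnesses both halves of~\eqref{eq:period} for every $\A$; and the crude estimate $\operatorname{lcm}(1, \dots, N) \le N^N$ with $N = 2^{|\TO|}$ gives $p_\TO \le 2^{|\TO| \cdot 2^{|\TO|}} \le 2^{2|\TO| \cdot 2^{|\TO|}}$, as required. Part~(ii) is then immediate: Lemma~\ref{period:A}(ii) yields period $1$ and prefix at most $|\TO|$ for every consistent $\A$, so $p_\TO = 1$ works and $s_\TO = \max_\A s_{\TO,\A} \le |\TO|$.

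The only genuine subtlety, and the point I would take care over, is to bound the forward and backward periods \emph{separately} by the number of types rather than bounding their product $p_{\TO,\A}$: it is precisely this that keeps the least common multiple over all instances within $\operatorname{lcm}(1, \dots, 2^{|\TO|})$ and yields the stated bound on $p_\TO$; had I used only $p_{\TO,\A} \le 2^{2|\TO|}$, the resulting lcm would be far too large. Everything else---monotonicity in the prefix, stability of periodicity under passing to multiples, and finiteness of the value sets---is routine bookkeeping.
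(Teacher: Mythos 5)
Your proof is correct, and at the top level it follows the same strategy as the paper: both arguments turn the data-dependent constants of Lemma~\ref{period:A} into uniform ones by taking a maximum of prefixes and a common multiple of periods, and both rely on the two monotonicity facts you make explicit (enlarging the prefix and passing to a multiple of a period preserve~\eqref{eq:period:A}), which the paper leaves implicit. The uniformisation mechanism differs, however. The paper observes that $s_{\TO,\A}$ and $p_{\TO,\A}$ depend only on the types $\type_{\TO,\A}(\min\A)$ and $\type_{\TO,\A}(\max\A)$, so only boundedly many values can occur, and takes $p_\TO$ to be their product. You instead bound the forward and backward periods separately by $2^{|\TO|}$, reading $p^\FF, p^\PP \le 2^{|\TO|}$ off the first-repetition construction inside the proof of Lemma~\ref{period:A}(i), and take $p_\TO = \operatorname{lcm}(1,\dots,2^{|\TO|})$, which is automatically a common multiple of every forward and backward period that can arise. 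Your route avoids the dependence-on-boundary-types observation altogether (only the size of the possible periods matters, not how many distinct ones occur or what determines them) and yields the marginally sharper bound $p_\TO \le 2^{|\TO| \cdot 2^{|\TO|}}$; its cost, which you correctly flag, is that it must use proof-internal information from Lemma~\ref{period:A}(i), since that lemma's statement alone ($p_{\TO,\A} \le 2^{2|\TO|}$) would make the lcm overshoot the required bound. The paper's product construction is terser but leans on an equally proof-internal observation about what the constants depend on; both arguments are sound, and yours is, if anything, the more self-contained of the two.
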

\begin{proof}
Observe that the numbers $s_{\TO,\A}$ and $p_{\TO,\A}$ provided by Lemma~\ref{period:A}  depend only on the `types' $\type_{\TO,\A}(\min \A)$ and $\type_{\TO,\A}(\max \A)$. It follows that we can take $s_{\TO}$ to be the maximum over the $s_{\TO,\A}$ for all such types, and  $p_{\TO}$ to be the product of all, at most $2^{|\TO|}$-many,  numbers $p_{\TO,\A}$.
\qed
\end{proof}

Note that, for the ontology $\TO$ in Example~\ref{ex:period}~(\emph{ii}), a single data instance $\A_3 = \A_1 \cup \A_2$ gives rise to the period of $6$ (the canonical model for $\TO$ and $\A_3$ has no shorter period). In general, however, there may be no single data instance with the minimal period of $p_{\TO}$. Indeed, consider an extension $\TO'$ of $\TO$ with $A\land B \to \Rnext A$: it can be seen that $p_{\TO'} = 6$, but the canonical models of all individual data instances have also shorter periods.

Given a positive temporal concept $\varkappa$, we denote by $\len_\varkappa$ the number of temporal operators in $\varkappa$ (thus, $\len_A = 0$ for an atomic concept $A$).

\begin{lemma}\label{period1}
For any $\LTL_\horn\Xallop$ ontology $\TO$, any data instance $\A$ consistent with $\TO$ and any positive temporal concept~$\varkappa$, %
\begin{align*}
&n\in \varkappa^\can\quad \text{ iff  }\quad n + p_\TO  \in\varkappa^\can,\qquad\text{ for }n \ge \max \A + s_\TO + \len_\varkappa p_\TO,\\
&n \in \varkappa^\can\quad \text{ iff } \quad n - p_\TO \in \varkappa^\can,\qquad\text{ for } n \le \min \A-(s_\TO + \len_\varkappa p_\TO).
\end{align*}
\end{lemma}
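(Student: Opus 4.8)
The plan is to prove both equivalences by structural induction on the positive temporal concept $\varkappa$, establishing the future statement (the first line) and the past statement (the second) by two parallel inductions. Since each case of the future statement will reduce only to the future statement for the proper subconcepts, and dually for the past, the two inductions do not interact, and it suffices to describe the future one. Throughout I abbreviate $R = \max \A + s_\TO$ and $p = p_\TO$, so that the goal for a given $\varkappa$ reads: $n \in \varkappa^\can \Leftrightarrow n + p \in \varkappa^\can$ whenever $n \ge R + \len_\varkappa\, p$. The base case, $\varkappa = \top$ or $\varkappa = A$ atomic (and we may assume every atomic concept of $\varkappa$ occurs in $\TO$, adding a vacuous axiom $A \to A$ otherwise), is immediate: here $\len_\varkappa = 0$ and $n \in A^\can$ iff $A \in \type_{\TO,\A}(n)$, so the claim is precisely the future half of Lemma~\ref{period}.

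For the Boolean cases $\varkappa = \varkappa_1 \land \varkappa_2$ and $\varkappa = \varkappa_1 \lor \varkappa_2$ no temporal operator is added, $\len_\varkappa = \len_{\varkappa_1} + \len_{\varkappa_2} \ge \max(\len_{\varkappa_1},\len_{\varkappa_2})$, and the claim follows by applying the induction hypothesis to each argument at the common threshold $R + \max(\len_{\varkappa_1},\len_{\varkappa_2})\,p \le R + \len_\varkappa\,p$. For the unary operators $\len_\varkappa = \len_\psi + 1$, so I have a full extra period of slack, and the whole interval $[n, n+p)$ of `new' points created by the shift lies in the future-periodic range $\ge R + \len_\psi\,p$ of $\psi$ guaranteed by the induction hypothesis. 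For $\Rnext$ and $\Lnext$ the shift is by one position and is immediate; for $\Rdiamond$ one direction is trivial (a witness $k>n$ for $n$ becomes the witness $k+p>n+p$ for $n+p$ by periodicity of $\psi$) while for $\Rbox$ the substantive direction passes from $n+p \in (\Rbox\psi)^\can$ to $n \in (\Rbox\psi)^\can$ by supplying $\psi$ on the gap $(n, n+p]$, which I get from periodicity of $\psi$ since every such point exceeds $R + \len_\psi\,p$. The past operators $\Lbox,\Ldiamond$ are handled identically in the future region: although they look backwards, the windows at $n$ and $n+p$ differ only on $[n, n+p)$, which again sits in the periodic range, so the buffer is exactly what makes them agree.

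The main obstacle is $\U$ and $\Si$, precisely because their existential witness may lie arbitrarily far away, crossing the non-periodic data region, so a naive `shift the witness by $p$' fails. Consider $\varkappa = \psi_1 \Si \psi_2$ in the future region (the case $\U$ being symmetric, and both handled dually for the past statement) and set $L = \max(\len_{\psi_1}, \len_{\psi_2})$, so that by the induction hypothesis $\psi_1,\psi_2$ are $p$-periodic on $[R + L\,p, \infty)$ while $n \ge R + \len_\varkappa\,p \ge R + (L+1)\,p$. The decisive device is a case split on the location of the witness. For $n+p \in \varkappa^\can \Rightarrow n \in \varkappa^\can$: if the witness $k'$ for $n+p$ has $k' < n$ I keep it (the required $\psi_1$-interval only shrinks), and if $n \le k' < n+p$ then $k'$ and $k'-p$ both lie in the periodic range, so $k'-p$ witnesses $n$ after transporting $\psi_2(k')$ and the intervening $\psi_1$-values down by $p$. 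For $n \in \varkappa^\can \Rightarrow n+p \in \varkappa^\can$ with witness $k<n$: if $k < n-p$ (an `old' witness, possibly in the data region) I keep $k$ and fill the newly exposed interval $[n,n+p)$ using periodicity of $\psi_1$, since each such point minus $p$ falls strictly inside the already-verified interval $(k,n)$; if $n-p \le k < n$ (a `recent' witness, hence in the periodic range) I shift it up to $k+p$. In every branch the only $\psi_i$-values I need to relate sit above $R + L\,p$, which is exactly why the threshold must grow by one full period per nesting level and yields the stated bound $R + \len_\varkappa\, p$. The past statement is then proved by the mirror-image induction, with future operators in the role played by past operators above and periodicity taken on the left periodic ray $(-\infty, \min\A - (s_\TO + \len_\varkappa p_\TO)]$.
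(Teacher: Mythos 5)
Your proof is correct and follows essentially the same route as the paper's: structural induction on $\varkappa$ with the base case supplied by Lemma~\ref{period}, and, for the binary operators $\Si$/$\U$, a case split on the location of the witness --- either it lies in the periodic range and is shifted by $p_\TO$, or it is kept and the newly exposed interval of length $p_\TO$ is filled in by periodicity of the subconcept one nesting level down, which is exactly why the threshold grows by one period per level. Your split at $n-p_\TO$ (rather than the paper's $\max\A + s_\TO + (\len_\varkappa-1)p_\TO$) and your explicit treatment of both directions, the unary operators, and the separation of the future and past inductions are only presentational differences.
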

\begin{proof}
The proof is by induction on the construction of $\varkappa$. The basis of induction follows from Lemma~\ref{period}.

Suppose $\varkappa = \varkappa_1 \Si \varkappa_2$ and $ n \in \varkappa^{\smash{\can}}$  for some $n \ge \max \A + s_\TO + \len_\varkappa p_\TO$. Then there is $k < n$ such that $k \in \varkappa_2^{\smash{\can}}$ and $m \in \varkappa_1^{\smash{\can}}$ for all $m$ with $k < m < n$. Two cases are possible; see Fig.~\ref{fig:periodic-two-cases}.

\begin{figure}
\centering%
\begin{tikzpicture}[>=latex,semithick,xscale=2.3]\footnotesize
\fill[gray!40] (5.5,0.8) rectangle +(1.5,0.3);
\fill[gray!40] (3.5,0) rectangle +(1.5,0.3);
\draw[thick,->, black!60] (1,0) -- (7.5,0);
%
\draw[dashed] (2,-0.3) -- ++(0,1.6);
\node at (2,-0.6) {$\max \A + s_\TO+ (\len_\varkappa - 1)p_\TO$};
\draw[dashed] (4,-0.3) -- ++(0,1.6);
\draw[->,thin] (2,-0.2) -- (4,-0.2);
\node at (3,-0.5) {\scriptsize $p_\TO$};
\draw[ultra thick] (3.5,0) -- ++(0,0.9);
\node at (3.5,1.1) {$k$};
\node at (3.35,0.3) {$\varkappa_2$};
\draw[->,thin] (3.5,0.8) -- (5.5,0.8);
\node at (4.5,1) {\scriptsize $p_\TO$};
\draw[ultra thick] (5,-0.2) -- ++(0,0.5);
\node at (5,0.5) {$n$};
\node at (5,-0.4) {$\varkappa_1 \Si \varkappa_2$};
\node[circle,inner sep=0pt,fill=white] at (4.25,0.15) {$\varkappa_1$};
\draw[ultra thick] (7,-0.2) -- ++(0,1.3);
\node at (7,1.3) {$n+p_\TO$};
\draw[ultra thick] (5.5,0) -- ++(0,1.2);
\node at (5.35,0.5) {$\varkappa_2$};
\node at (5.5,1.4) {$k+p_\TO$};
\node[circle,inner sep=0pt,fill=white] at (6.25,0.95) {$\varkappa_1$};
\draw[->,thin] (5,0.2) -- (7,0.2);
\node at (6,0.4) {\scriptsize $p_\TO$};
\node at (7,-0.4) {$\varkappa_1 \Si \varkappa_2$};
%
%
\begin{scope}[yshift=-28mm]
\fill[gray!40] (4,0.8) rectangle +(2,0.3);
\fill[gray!40] (6,1.1) rectangle +(1.5,0.3);
\fill[gray!40] (1.5,0) rectangle +(3.5,0.3);
\draw[thick,->,black!60] (1,0) -- (7.5,0);
%
\draw[dashed] (2,-0.3) -- ++(0,1.8);
\node at (2,-0.6) {$\max \A + s_\TO+ (\len_\varkappa - 1)p_\TO$};
\draw[dashed] (4,-0.3) -- ++(0,1.8);
\draw[dashed] (6,-0.3) -- ++(0,1.8);
\draw[ultra thick] (1.5,0) -- ++(0,0.5);
\node at (1.5,0.7) {$k$};
\node at (1.35,0.3) {$\varkappa_2$};
\draw[->,thin] (2,0.8) -- (4,0.8);
\node at (3,1) {\scriptsize $p_\TO$};
\draw[ultra thick] (5,-0.2) -- ++(0,0.5);
\node at (5,0.5) {$n$};
\node at (5,-0.4) {$\varkappa_1 \Si \varkappa_2$};
\node[circle,inner sep=0pt,fill=white] at (3.25,0.15) {$\varkappa_1$};
\draw[ultra thick] (7,-0.2) -- ++(0,1.6);
\node at (7,1.6) {$n+p_\TO$};
\node[circle,inner sep=0pt,fill=white] at (5,0.95) {$\varkappa_1$};
\draw[->,thin] (4,1.1) -- (6,1.1);
\node at (5,1.3) {\scriptsize $p_\TO$};
\node[circle,inner sep=0pt,fill=white] at (6.5,1.25) {$\varkappa_1$};
\draw[-,thin] (6,1.4) -- (7.5,1.4);
\draw[->,thin] (5,0.2) -- (7,0.2);
\node at (6.5,0.4) {\scriptsize $p_\TO$};
\node at (7,-0.4) {$\varkappa_1 \Si \varkappa_2$};
\end{scope}
\end{tikzpicture}
\caption{The two cases in the proof of Lemma~\ref{period1}.}\label{fig:periodic-two-cases}
\end{figure}
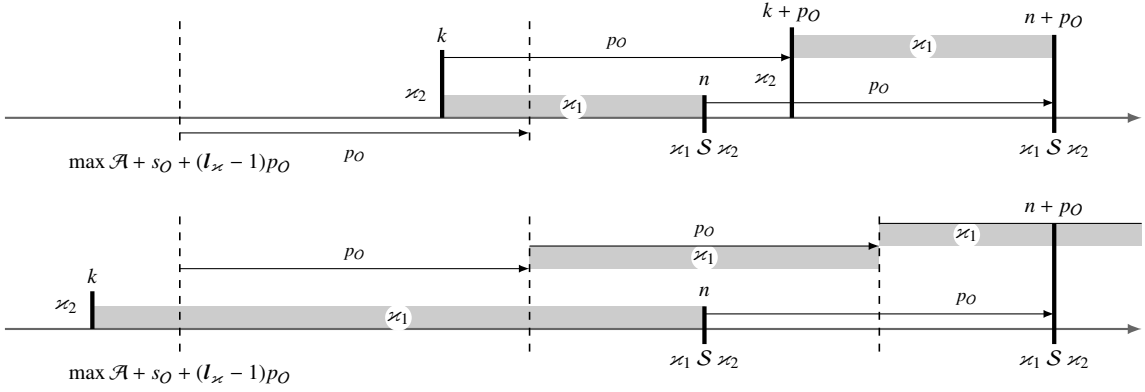

If $k \ge \max \A + s_\TO + (\len_\varkappa - 1)p_\TO$, then, since $\len_{\varkappa_1},\len_{\varkappa_2} \le \len_\varkappa -1$, we have, by the induction hypothesis, $k + p_\TO \in \varkappa_2^{\smash{\can}}$ and $m \in \varkappa_1^{\smash{\can}}$ for all $m$ with $k + p_\TO < m < n + p_\TO$. It follows that $n + p_\TO \in \varkappa^\can$, as required.

Otherwise, $k < \max \A + s_\TO + (\len_\varkappa - 1)p_\TO$, and we have $m'\in \varkappa_1^{\smash{\can}}$ for all $m'$ such that $\max \A + s_\TO + (\len_\varkappa - 1)p_\TO < m' \leq  \max \A + s_\TO + \len_\varkappa p_\TO$. Since $\len_{\varkappa_1} \le \len_\varkappa -1$, by the repeated application of the induction hypothesis, we obtain $m' \in \varkappa_1^{\smash{\can}}$ for all \mbox{$m' > \max \A + s_\TO + (\len_\varkappa - 1)p_\TO$}. Therefore, $n+p_\TO \in \varkappa^{\smash{\can}}$ as well.

The converse direction and other temporal operators are treated analogously. The claim for $n \le \min \A-s_\TO - \len_\varkappa p_\TO$ is proved by a symmetric argument.
\qed
\end{proof}

\section{From \OMAQ{}s to \OMPIQ{}s with Horn Ontologies: Phantoms and Rewritings}
\label{sec:ltl-tomiq:2}

It remains to put together the results on the periodic structure of canonical models $\C_{\TO,\A}$ with the FO-rewritings for $\LTL_{\smash{\horn}}\Xbox$ and  $\LTL_\core\Xallop$  \OMAQ{}s constructed in Theorems~\ref{thm:box} and~\ref{thm:krom-nextbox}.  As follows from Lemma~\ref{period1}, when constructing rewritings for \OMPIQ{}s, for example, of the form $(\TO,\Rdiamond A)$, we have to consider time instants beyond the active temporal domain $\tem(\A)$. For this purpose, we require the following definition.
\begin{definition}\label{def:phantom}\em
Let $\lang$ be one of $\FO(<)$, $\FOE$ or $\FO(\RPR)$.
Given an \OMAQ{} $\q = (\TO,A)$,
\emph{$\lang$-phantoms of $\q$}  are
$\lang$-sentences $\Phi^\kpar$, for $k \ne 0$, such that, for every data instance
$\A$ consistent with $\TO$,
\begin{equation*}
\SA \models \Phi^\kpar \ \ \text{ iff } \ \  \sigma_\A(\kpar)\in \ans^\Z(\q,\A), \qquad\text{ where } \sigma_\A(k) = \begin{cases}\max \A + k, & \text{if } k > 0,\\
\min \A+ k, & \text{if } k < 0.
\end{cases}
\end{equation*}
\end{definition}

$\FO(<)$-phantoms $\Phi^\kpar$ of an $\LTL_\horn\Xbox$ \OMAQ{} $\q=(\TO,A)$ can be constructed as follows.
Suppose $\kpar > 0$. For a state $[\tp]$ from the proof of Theorem~\ref{thm:box}, we write $[\tp] \to^\kpar [\tp']$ to say that $[\tp']$ can be reached from $[\tp]$ in~$\nfao^\circ$ by~$\kpar$-many transitions~$\to$. It is readily seen that
\begin{equation*}
\Phi^\kpar \ \ = \ \
\neg    \Bigl[\bigvee_{d \leq 2|\TO| + 1} \ \ \bigvee_{\begin{subarray}{c}[\tp_0]\to\ldots\to[\tp_d]\\\text{is a path in }\nfao^\circ \end{subarray}} \ \ \bigvee_{[\tp_d] \to^\kpar [\tp] \text{ with } A\notin\tp }
\exists t_0,\dots,t_d \,
\pathf_{[\tp_0]\to\ldots\to[\tp_d]}(t_0,\dots,t_d)\Bigr]
\end{equation*}
is as required (note that in this case there is no need to include an extra type to account for the answer $\ell$). The case of~$\kpar < 0$ is symmetric and left to the reader.

$\FOE$-phantoms $\Phi^\kpar$
of an $\LTL\Xallop_\core$ \OMAQ{} $\q=(\TO,A)$ can be constructed as follows.
Suppose $\kpar > 0$. Given a  type $\tp$ for
$\TO$, we write $\tp \to^\kpar \tp'$ to say that $\tp'$ can be reached from $\tp$ in~$\nfao$ by~$\kpar$-many transitions~$\to$ (in other words,
if there is a model $\Mmf$ of $\TO$ such that $\tp$ and $\tp'$ comprise the literals that are true  in $\Mmf$ at $0$ and $\kpar$, respectively).
It is not hard to see that
\begin{equation*}
\Phi^\kpar ~=~
\neg \Bigl[\bigvee_{d \leq 2|\TO| + 1} \ \ \bigvee_{\tp_0 \to^\Box \ldots \to^\Box \tp_d}  \ \ \bigvee_{\tp_d \to^k\tp \text{ with } A\notin\tp} \exists t_0,\dots,t_d \,
\pathf_{\tp_0\to^\Box\ldots\to^\Box\tp_d}(t_0,\dots,t_d)
\Bigr]
\end{equation*}
is as required. The case of $\kpar < 0$ is symmetric.

Equipped with these constructions, we are now in a position to prove the following:
\begin{theorem}\label{thm:LTLrewritability}
All $\LTL\Xallop_\core$ \OMPIQ{}s are
$\FOE$-rewritable, and all
$\LTL_{\smash{\horn}}\Xbox$ \OMPIQ{}s are
\mbox{$\FO(<)$}-rewritable.
\end{theorem}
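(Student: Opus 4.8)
The plan is to reduce an OMPIQ $\q=(\TO,\varkappa)$ to the atomic queries $(\TO,A)$, for the atomic concepts $A$ occurring in $\varkappa$, using the canonical model as a bridge. By Theorem~\ref{canonical}(\emph{iv}), whenever $(\TO,\A)$ is consistent we have $\ans^\Z(\q,\A)=\varkappa^\can$, and hence $\ans(\q,\A)=\varkappa^\can\cap\tem(\A)$; an inconsistent $\A$ makes every $\ell\in\tem(\A)$ a certain answer and is captured by a separate disjunct (inconsistency is already flagged by the OMAQ rewritings, which return $\top$). It therefore suffices to compute $\varkappa^\can$, and the whole task becomes evaluating a positive temporal concept in one fixed, eventually periodic interpretation.

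The device that keeps the rewriting finite is the eventual periodicity of $\can$. Fix the uniform window $W$ with left end $\min\A-B$ and right end $\max\A+B$, where $B=s_{\TO}+\len_\varkappa\,p_{\TO}$ depends only on $\TO$ and $\varkappa$. For every subconcept $\psi$ of $\varkappa$ I would maintain two pieces of data: (a) a formula $\varphi_\psi(t)$ deciding $\ell\in\psi^\can$ for $\ell\in\tem(\A)$, and (b) sentences recording membership in $\psi^\can$ at each of the $2B$ points of $W\setminus\tem(\A)$. By Lemma~\ref{period1} the set $\psi^\can$ is $p_{\TO}$-periodic to the right of $\max\A+s_{\TO}+\len_\psi\,p_{\TO}$ and to the left of $\min\A-(s_{\TO}+\len_\psi\,p_{\TO})$, both of which lie inside $W$; hence the data (a),(b) determine $\psi^\can$ on all of $\Z$, the two tails simply repeating the outermost full periods present in $W$. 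For the base case $\psi=A$, item (a) is the OMAQ rewriting of Theorem~\ref{thm:box} (for $\LTL_\horn\Xbox$, yielding $\FO(<)$) or of Theorem~\ref{thm:krom-nextbox} (since $\LTL_\core\Xallop\subseteq\LTL\Xallop_\krom$, yielding $\FOE$), and item (b) is supplied by the phantoms $\Phi^k$ of Definition~\ref{def:phantom}, one for each offset $k$ with $1\le|k|\le B$.

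For the inductive step I would unfold the semantics of the operators. The cases $\wedge,\vee$ are immediate and $\Rnext,\Lnext$ merely shift $t$, so the work is in $\Rbox,\Rdiamond,\U$ and their past mirrors. For these I split the lookahead into the part of the tail inside $W$, which $\varphi_{\psi'}$ and the sentences (b) of $\psi'$ decide correctly, and the infinite remainder, which by periodicity repeats the last full period $(\max\A+B-p_{\TO},\,\max\A+B]$ contained in $W$. This single in-window period faithfully represents the tail because the periodic regime of every proper subconcept $\psi'$ starts strictly earlier, as $\len_{\psi'}\le\len_\varkappa-1$. Thus, for $\ell\in\tem(\A)$, the formula for $\Rbox\psi'$ asserts $\varphi_{\psi'}$ at every $s\in\tem(\A)$ above $\ell$ together with a sentence built from (b) of $\psi'$ saying that $\psi'^\can$ holds at all window points above $\max\A$ and on the right tail; $\Rdiamond\psi'$, $\U$ and $\Si$ are handled by the same window-plus-period bookkeeping, and symmetrically to the left. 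The split between target languages now falls out: in the core case $p_{\TO}$ may exceed $1$, so locating a point in the correct residue of a periodic tail needs the congruences of $\FOE$, matching the $\FOE$ base rewritings; in the Horn-$\Box$ case Lemma~\ref{period}(\emph{ii}) gives $p_{\TO}=1$, so every tail is constant, no arithmetic is needed, and the whole rewriting stays in $\FO(<)$.

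I expect the main obstacle to be the faithful finite encoding, over the finite structure $\SA$, of the unbounded-lookahead operators: quantification is available only inside $\tem(\A)$, so the buffer points and the periodic tails must be reached indirectly through the phantom sentences and through the single in-window period that stands in for each tail. The delicate accounting is verifying that this in-window period genuinely lies in the periodic regime of every subconcept—so that correctness on $W$ propagates through $\Rbox,\Rdiamond,\U,\Si$—and that all auxiliary sentences stay in the intended fragment ($\FO(<)$ or $\FOE$). The Boolean and $\nxt$ cases, by contrast, are routine.
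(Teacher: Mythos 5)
Your proposal is correct and follows essentially the same route as the paper's proof: induction on the structure of $\varkappa$, using Theorem~\ref{canonical}(\emph{iv}) to reduce certain answers to evaluation in $\can$, the OMAQ rewritings of Theorems~\ref{thm:box} and~\ref{thm:krom-nextbox} with their phantoms as the base case, and Lemma~\ref{period1} to confine the out-of-domain witnesses to a bounded window whose points are handled by auxiliary sentences. Your per-subconcept ``window sentences'' are precisely the paper's phantoms, which it likewise defines inductively for each OMPIQ subquery (not just for OMAQs), so the bookkeeping coincides up to the choice of a uniform window bound $s_\TO+\len_\varkappa\,p_\TO$ versus the paper's subconcept-dependent bounds.
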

\begin{proof}
Suppose $\q = (\TO,\varkappa)$ is an $\LTL_\core\Xallop$ \OMPIQ. By induction on the construction of $\varkappa$, we define an $\FOE$-rewriting $\rq(t)$ of $\q$ and $\FOE$-phantoms $\Phi^\kpar$ of $\q$ for $\kpar \ne 0$.  For an $\LTL_\horn\Xbox$ \OMPIQ, we define an $\FO(<)$-rewriting and $\FO(<)$-phantoms. As the only difference between these two cases is in the basis of induction, we simply refer to a rewriting and phantoms.
The basis (for rewritings of \OMAQ{}s) was established in
Sections~\ref{sec:ltl-tomaq:kromboxnext}
and~\ref{sec:ltl-tomaq:stutter}, respectively.

Suppose now that $\varkappa = \varkappa_1 \Si \varkappa_2$ and, for both $\q_i=(\TO, \varkappa_i)$, $i = 1,2$, we have the required rewriting $\rq_i(t)$ and phantoms $\Phi^\kpar_i$ for $\kpar \ne 0$. We claim that
\begin{equation*}
\rq(t)  \ = \ \exists s \, \bigl[ (s < t) \land
\rq_{2}(s) \land \forall r \, \bigl((s < r < t) \to
\rq_1(r)\bigr)\bigr]\  \lor \
 \bigl[ \forall s \,
  \bigl((s < t) \to \rq_1(s)\bigr)\ \  \land \hspace*{-1em}\bigvee_{-s_\TO - (\len_{\varkappa_2}+1) p_\TO  < k < 0} \hspace*{-1.2em}\bigl(
  \Phi^k_2  \ \land \bigwedge_{ k < i < 0}\hspace*{-0.5em}
  \Phi^i_1\bigr)\bigr]
\end{equation*}
is a rewriting of $\q$: for any data instance $\A$ and any $n \in \tem(\A)$, we have $n \in \ans(\q,\A)$ iff $\SA \models \rq(n)$. Indeed, suppose first that $n \in \ans(\q,\A)$. If $\A$ is inconsistent with $\TO$ then, by the induction hypothesis, $\SA \models \rq_i(n)$ for all $n \in \tem(\A)$, and $\SA \models \Phi^\kpar_i$ for all $\kpar \ne 0$, whence $\SA \models \rq(n)$ for all $n \in \tem(\A)$.
Otherwise, $\A$ is consistent with~$\TO$ and then, by Theorem~\ref{canonical}~(\emph{iv}), $n \in \varkappa^{\smash{\can}}$, which means that there is some $k < n$ such that $k \models \varkappa_2^{\smash{\can}}$ and $m \models \varkappa_1^{\smash{\can}}$ for $k < m < n$.
If $k \geq \min\A$, then, by the induction hypothesis, the first disjunct of $\rq$ holds at~$n$. If $k <  \min\A$, then, by Lemma~\ref{period1}, we can always  find such a~$k$ with $-s_\TO - (\len_{\varkappa_2} +1) p_\TO < k < 0$, and so, by the induction hypothesis, the second disjunct of $\rq$ holds at~$n$.
Conversely, suppose $\SA \models \rq(n)$. If the first disjunct of $\rq$ holds at~$n$, then there is $k \in \tem(\A)$ with $k < n$ such that $\SA \models \rq_2(k)$ and $\SA \models \rq_1(m)$ for $k < m < n$. By the induction hypothesis, $k \in \ans(\q_2,\A)$ and $m \in \ans(\q_1,\A)$ for all $k < m < n$, whence, by Theorem~\ref{canonical}~(\emph{iv}), $k \in \varkappa_2^{\smash{\can}}$ and $m \in \varkappa_1^{\smash{\can}}$  for all $k < m < n$, and therefore $n \in \varkappa^{\smash{\can}}$ and $n \in \ans(\q,\A)$. The case when the second disjunct of~$\rq$ holds at $n$ is considered analogously.

By Lemma~\ref{period1}, Theorem~\ref{canonical}~(\emph{iv}) and the induction hypothesis, we can define the phantoms $\Phi^\kpar$, for $k > 0$, by taking
\begin{multline*}
\Phi^\kpar \ \ \ = \ \ \  \bigvee_{0 < i < \kpar } \bigl(
  \Phi^i_2 \ \land \bigwedge_{ i < j < \kpar}\hspace*{-0.1em}
  \Phi^j_1\bigr) \ \ \ \lor{} \ \ \  \exists s \, \bigl[\,
\rq_{2}(s) \land \forall r \, \bigl((s < r) \to
\rq_1(r)\bigr) \land \bigwedge_{0 < i < \kpar}\hspace*{-0.1em} \Phi^i_1 \ \bigr] ~ \lor{}\hspace*{-10em} \\
 \Bigl[
  \bigvee_{-s_\TO - (\len_{\varkappa_2}+1) p_\TO  < i < 0}\hspace*{-0.7em} \bigl(
  \Phi^i_2 \ \land \bigwedge_{ i < j < 0}\hspace*{-0.1em}
  \Phi^j_1\bigr) \ \ \land \ \ \forall s \,\rq_1(s) \ \ \land \bigwedge_{0 < i < k} \hspace*{-0.1em}\Phi^i_1\ \Bigr].
\end{multline*}
Observe that, by Lemma~\ref{period1}, the sentence $\Phi^{\kpar + p_\TO}$ is equivalent to $\Phi^\kpar$, for all $\kpar \geq s_\TO + \len_{\varkappa} p_\TO$, and so there are only finitely many non-equivalent phantoms of $\q$. For $\kpar < 0$, the phantoms are constructed in a symmetric way.

The cases of other temporal operators are left to the reader. For $\varkappa = \varkappa_1 \land \varkappa_2$, the rewriting of $\varkappa$ is the conjunction of the rewritings for $\varkappa_1$ and $\varkappa_2$, and each phantom for $\varkappa$ is the conjunction of the corresponding phantoms for $\varkappa_1$ and~$\varkappa_2$. It is analogous for $\varkappa = \varkappa_1 \lor \varkappa_2$.
\qed
\end{proof}


\section{OMQs with $\MFO(<)$-Queries}\label{FO-OMQs}

So far we have considered temporal queries given entirely in the language of \LTL{} and having one implicit `answer variable'\!.
It follows that, in this language, one cannot formulate Boolean queries (queries without any answer variables) nor queries with multiple answer variables. For example, in the article submission scenario from the introduction, we might want to ask whether the article is accepted at all rather than when it is accepted, or we might want to retrieve all pairs of time points consisting of the last revision date of the article and its acceptance date. To overcome this deficiency, we now extend our query language to monadic $\FO(<)$-formulas with arbitrarily many answer variables and show that all of our rewritability results can be generalised to the extended language in a natural way using variants of Kamp's Theorem.

Denote by $\MFO(<)$ the set of first-order formulas that are built from atoms of the form $A(t)$, $(t < t')$ and abbreviations $(t=t')$ and~$(t=t'+1)$.
By an $\LTL^{\op}_\frag$ \emph{ontology-mediated query} (\OMQ{}, for short) we mean a pair $\q=(\TO, \psi(\avec{t}))$, where $\TO$ is an $\LTL^{\op}_\frag$ ontology and $\psi(\avec{t})$ an $\MFO(<)$-formula with
free variables $\avec{t}=(t_1,\dots,t_{m})$. The free variables $\avec{t}$ in $\psi(\avec{t})$ are called the \emph{answer variables} of~$\q$. If $\avec{t}$ is empty ($m=0$), then $\q$ is called a \emph{Boolean} \OMQ.

\begin{example}\em\label{ex:mfo-query}
We return to the article submission scenario from the introduction and formulate the informal
queries from above:  $\exists t\, \nm{Accept}(t)$ asks whether the article in question is accepted
and
\begin{equation*}
\psi(t,t') ~=~ \nm{Revise}(t) \wedge \nm{Accept}(t') \wedge \forall s \, \big( (t < s < t') \rightarrow \neg \nm{Revise}(s)\big)
\end{equation*}
retrieves all pairs $(t,t')$ such that $t$ is the last revision date before the acceptance date $t'$.
%
%
%
\end{example}

The $\MFO(<)$-formula $\psi(\avec{t})$ in an \OMQ{} $\q=(\TO, \psi(\avec{t}))$ is interpreted in models $\Mmf$ of $\TO$ and any data instance $\A$ as usual in first-order logic.
Now, assuming that $\avec{t}= (t_1,\dots, t_m)$ is nonempty, we call a tuple $\avec{\ell} = (\ell_1,\dots,\ell_m)$ of timestamps $\ell_i$ from $\tem(\A)$ a \emph{certain answer} to $\q$ over $\A$ if $\Mmf\models \psi(\avec{\ell})$, for every model $\Mmf$ of $(\TO,\A)$. If $\q$ is a Boolean \OMQ{}, then
we say that $\q$ is \emph{entailed over $\A$} if $\Mmf\models \psi$,
for every model $\Mmf$ of $(\TO,\A)$.

Let $\lang$ be one of the three classes of FO-formulas introduced above: $\FO(<)$, $\FOE$ or $\FO(\RPR)$. An \OMQ{} $\q=(\TO, \psi(\avec{t}))$ is $\lang$-\emph{rewritable} if there is an $\lang$-formula $\rq(\avec{t})$ such that, for any data instance $\A$, an $m$-tuple $\avec{\ell}$ in $\tem (\A)$ is a certain answer to $\q$ over $\A$ iff~$\SA \models \rq(\avec{\ell})$. In the case of Boolean $\q$, we need an $\lang$-sentence $\rq$ such that $\q$ is entailed over any given data instance $\A$  iff~$\SA \models \rq$.


To extend our rewritability results to the newly introduced \OMQ{}s, we employ~\cite[Proposition 4.3]{DBLP:journals/corr/Rabinovich14} from Rabinovich's `simple proof' of the celebrated Kamp's Theorem. For the reader's convenience, we formulate it below:

%
\begin{lemma}[\cite{DBLP:journals/corr/Rabinovich14}]\label{Rabino}
%
Any $\MFO(<)$-formula $\psi(\avec{t})$ with free variables $\avec{t} = (t_1,\dots,t_m)$, $m\geq 1$, is equivalent over $(\Z,<)$ to a disjunction $\varphi(\avec{t}) = \bigvee_{l=1}^k \varphi_l(\avec{t})$, in which the disjuncts $\varphi_l(\avec{t})$ take the form
\begin{multline}\label{kamp-gen}
\varphi_l(\avec{t}) ~=~ \exists x_1,\dots,x_n\, \Big[ \bigwedge_{i=1}^m (t_i = x_{j_i}) \ \ \land \ \ \bigwedge_{i = 1}^{n-1} (x_i < x_{i+1}) \ \ \land \ \ \bigwedge_{i=1}^n \alpha_i(x_i) \ \ \land{}\\
\forall y\, \big ( (y < x_1) \to \beta_0(y) \big ) \ \ \land \ \ \bigwedge_{i=1}^{n-1} \forall y\, \big ( (x_i < y < x_{i+1}) \to \beta_i(y) \big ) \ \ \land \ \
\forall y\, \big ( (x_n < y) \to \beta_n(y) \big )  \Big],
\end{multline}
for some sets of variables $\{x_1,\dots,x_n\} \supseteq \{x_{j_1}, \dots, x_{j_m}\}$ and Boolean combinations $\alpha_i(x_i)$ and $\beta_i(y)$ of unary atoms with one free variable $x_i$ and $y$, respectively \textup{(}all of which depend on $l$\textup{)}.
\end{lemma}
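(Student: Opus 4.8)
The plan is to prove the lemma by induction on the structure of the formula $\psi(\avec{t})$, showing that the class $\mathcal N$ of those $\MFO(<)$-formulas equivalent over $(\Z,<)$ to a finite disjunction of formulas of the shape~\eqref{kamp-gen} contains every atom and is closed under $\neg$, $\wedge$ and $\exists$; closure under $\vee$ is free, since~\eqref{kamp-gen} is itself presented as a disjunction. It is convenient to read one disjunct $\varphi_l$ as a description of a pointed word over $(\Z,<)$: one guesses an increasing chain $x_1 < \dots < x_n$ of positions, pins the answer variables to some of them via $t_i = x_{j_i}$, prescribes a Boolean condition $\alpha_i$ on the unary predicates at each $x_i$, and demands that every point strictly between two consecutive chain points, as well as every point below $x_1$ and above $x_n$, satisfy a fixed Boolean condition $\beta_i$. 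Crucially I allow $\beta_i = \top$ (the interval is unconstrained, absorbing arbitrarily long or infinite ``don't care'' regions) and $\beta_i = \bot$ (the interval is forced empty, i.e.\ $x_{i+1} = x_i + 1$); these degenerate cases are what let the rigid ``uniform on each interval'' format capture genuine $\MFO(<)$-properties.

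For the base case and the easy closures: every atom $A(t_i)$, $(t_i < t_j)$, $(t_i = t_j)$ and $(t_i = t_j+1)$ is already of the form~\eqref{kamp-gen}, taking the chain to consist of the answer variables themselves (identifying those forced equal, making one interval empty in the successor case) and setting all interval conditions to $\top$. For $\wedge$ I distribute over the two disjunctions and then \emph{shuffle} the chains of a pair of disjuncts: I run over the finitely many order-consistent interleavings of $\{x_1<\dots<x_n\}$ and $\{x'_1<\dots<x'_{n'}\}$, merge positions declared equal, and on the common refinement conjoin the relevant $\alpha$'s at points and $\beta$'s on intervals; each interleaving gives one disjunct of~\eqref{kamp-gen}. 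Closure under $\exists$ is where the shape pays off: by the induction hypothesis I may treat the quantified variable $y$ as an additional answer variable, so $\psi'$ becomes a disjunction of forms~\eqref{kamp-gen} in which $y$ is pinned to some chain point $x_j$ via $y = x_j$. Since \emph{all} chain points are already existentially quantified inside~\eqref{kamp-gen}, $\exists y$ simply deletes the pinning equation $y = x_j$, leaving a formula of exactly the form~\eqref{kamp-gen}.

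Closure under negation is the main obstacle and requires passing to \emph{canonical} disjuncts. Let $\mathsf{At}$ be the finite set of atoms of the Boolean algebra of unary types over the predicates occurring in $\psi$, i.e.\ the complete unary types. I first \emph{saturate} each disjunct: refine every $\alpha_i$ into a single atom of $\mathsf{At}$, and replace every interval condition $\beta_i$ by inserting a chain point for each atom asserted to occur in that interval while letting the surviving interval condition be the disjunction of the atoms still permitted; repeatedly splitting intervals yields a normal form in which, along every chain, the sequence of point atoms and permitted interval atoms is completely specified. The finitely many saturated disjuncts so obtained are pairwise mutually exclusive and jointly exhaustive over all pointed words, hence form the atoms of a finite Boolean algebra of definable sets, so the negation of any disjunction of saturated forms is the disjunction of the complementary set of saturated forms, again in $\mathcal N$. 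The delicate points, and where the argument must be carried out with care, are to make saturation terminate and be well defined at the two \emph{infinite} end-intervals $y < x_1$ and $x_n < y$ (only finitely many atoms can be forced to occur there, the residual infinite tail being carried by a single surviving interval condition, exactly as an $\omega$- or $\omega^*$-word is determined up to $\MFO(<)$ by a finite amount of data), and to verify exhaustiveness and mutual exclusivity of the saturated family.

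Finally I would remark that the whole scheme is underwritten by the Feferman--Vaught-style composition theorem for ordered sums: the quantifier-rank-$q$ type of a pointed $\Z$-word is a function of the rank-$q$ types of the singletons marked by $t_1,\dots,t_m$ and of the intervals between and beyond them, so $\psi$ of rank $q$ defines a union of finitely many type classes, each a Hintikka formula. Putting a Hintikka formula into the shape~\eqref{kamp-gen} is the same interval normal form used in the saturation step, so the two routes meet precisely at the claim that every $\MFO(<)$-type of a finite or one-side-infinite interval is a finite disjunction of ``points carrying unary types separated by uniform intervals''; this interval normal form, and its behaviour at the infinite ends, is the crux of the proof.
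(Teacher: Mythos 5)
Your easy cases are sound: atoms, $\vee$, $\wedge$ (by interleaving the two chains on the finitely many order-consistent shuffles) and $\exists$ (absorbing the quantified variable into the existential chain and deleting its pinning equation) all work, and the class of finite disjunctions of formulas~\eqref{kamp-gen} with quantifier-free kernels is indeed closed under these operations. The proof collapses at the negation step, and not because of the end-interval subtleties you flag. Your ``saturated'' disjuncts record, for each gap between consecutive chain points, only a \emph{set} of permitted/occurring complete unary types; such data never constrains the \emph{order or multiplicity} in which types occur inside a gap. Consequently the saturated forms are not mutually exclusive (a constant-$a$ word satisfies the one-point saturated form and every longer all-$a$ chain form), and, worse, an arbitrary $\MFO(<)$-formula --- even the negation of a single disjunct --- need not be a union of saturation classes: ``some point satisfying $A$ precedes some point satisfying $B$ inside the gap'' separates pointed words with identical saturation data. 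So ``negation $=$ disjunction of the complementary saturated forms'' never gets off the ground.

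No repair is possible, because with $\alpha_i,\beta_i$ restricted to Boolean combinations of unary atoms the statement is actually \emph{false}, and negation is exactly where closure fails. Take one unary predicate $A$ and $\psi(t,t')$ saying that $A$ alternates on each pair of consecutive points strictly between $t$ and $t'$; its negation is easily a disjunction of forms~\eqref{kamp-gen} (guess the offending consecutive pair, force the gap between them empty with $\beta=\bot$), but $\psi$ itself is not: if it were, take an alternating word between $t$ and $t'$ much longer than all chain lengths in the disjunction; some gap between consecutive chain points of the satisfied disjunct contains two consecutive integers $u,u+1$, these differ on $A$, so that gap's quantifier-free $\beta$ must be $\top$; flipping $A$ at $u$ then preserves the disjunct (same witnesses) but destroys alternation --- a contradiction. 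The lemma as printed is a too-strong transcription of Proposition 4.3 of~\cite{DBLP:journals/corr/Rabinovich14}: in the correct version, which is also what this paper actually uses later (in Theorem~\ref{fullLTL-OMQ} and Corollary~\ref{cor-mono} the $\alpha_i$, $\beta_i$ are arbitrary $\FO(<)$-formulas with one free variable), the kernels are one-variable formulas of lower quantifier rank, and the negation step is handled by induction on quantifier rank via composition/Ehrenfeucht--Fra\"iss\'e arguments for linear orders --- the route your closing Feferman--Vaught remark gestures at. Note also that the paper offers no proof of this lemma (it imports Rabinovich's result), so there is nothing in-paper to compare against; but the final claim of your remark --- that every rank-$q$ type of an interval is a disjunction of ``points separated by uniform letter-set intervals'' --- is precisely the false step.
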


We begin by extending Theorem~\ref{fullLTL} to \OMQ{}s.

\begin{theorem}\label{fullLTL-OMQ}
All $\LTL\Xallop_\bool$ \OMQ{}s are $\FO(\RPR)$-, and so $\MSO(<)$-rewritable.
\end{theorem}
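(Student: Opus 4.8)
The plan is to lift the $\MSO(<)$-rewriting of Theorem~\ref{fullLTL} from single-variable \LTL-queries to arbitrary $\MFO(<)$-queries, using Lemma~\ref{Rabino} to tame the first-order quantifier structure of $\psi$. Recall that the rewriting in Theorem~\ref{fullLTL} works by universally quantifying over a tuple $\pmb{\varkappa^*}$ of monadic second-order variables — one for each subconcept in $\subq$ — and demanding, via $\Xi(t)$ together with the type-consistency constraint $\bigvee_{\tp\in\tpset}\bigwedge_{\varkappa\in\tp}\varkappa^*(t)$, that every interpretation of $\pmb{\varkappa^*}$ satisfying these conditions be precisely the restriction to $\tem(\A)$ of some model of $(\TO,\A)$, and conversely. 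The boundary types at $\min\A$ and $\max\A$ carried by such a labelling already encode, in an ultimately periodic way, the infinite continuation of the model to $\pm\infty$. My rewriting of $\q=(\TO,\psi(\avec t))$ will keep the same outer shape,
$$\rq(\avec t) \ = \ \forall \pmb{\varkappa^*}\,\Bigl[\forall t\,\bigl(\Xi(t)\land\textstyle\bigvee_{\tp\in\tpset}\bigwedge_{\varkappa\in\tp}\varkappa^*(t)\bigr)\ \to\ \psi^*(\avec t)\Bigr],$$
where $\psi^*$ is a faithful translation of $\psi$ evaluated on the second-order labelling; keeping $\psi^*$ under the single universal quantifier over $\pmb{\varkappa^*}$ is exactly what makes certain-answer semantics (which, because of the negations and disjunctions in $\psi$, does not commute with taking models) come out correctly.

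The obstacle is that the first-order quantifiers of $\psi$ range over all of $\Z$, whereas $\SA$ and the labelling $\pmb{\varkappa^*}$ live only on the active domain $\tem(\A)$. Here Lemma~\ref{Rabino} is decisive: I first rewrite $\psi(\avec t)$ into the equivalent disjunction $\bigvee_l\varphi_l(\avec t)$ of formulas of the special shape~\eqref{kamp-gen}, in which every quantified point is an ordered anchor $x_1<\dots<x_n$ (with the answer variables among them) and every remaining constraint is a uniform point-condition $\alpha_i$ or an interval-condition $\beta_i$ built from unary atoms. The key geometric observation is that $\tem(\A)$ is \emph{convex}: hence every anchor $x_i$ and every universally bound $y$ lying strictly between the least and the greatest answer variable automatically belongs to $\tem(\A)$. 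Thus the entire ``interior'' of each $\varphi_l$ translates verbatim into an $\FO(<)$-condition over the labelling $\pmb{\varkappa^*}$, the atoms $\alpha_i,\beta_i$ being replaced by the corresponding Boolean combinations of the $\varkappa^*$.

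It remains to treat the prefix condition $\forall y\,(y<x_1\to\beta_0(y))$, the suffix condition $\forall y\,(x_n<y\to\beta_n(y))$, and the anchors falling below the least or above the greatest answer variable, since these may leave $\tem(\A)$. I absorb prefix and suffix into \LTL: $\forall y\,(y<x_1\to\beta_0(y))$ is just $\Lbox\beta_0$ holding at $x_1$, and the suffix is $\Rbox\beta_n$ at $x_n$; adding $\beta_0,\beta_n$ and their box-closures to the tracked subconcepts makes these conditions readable off the labelling at the corresponding anchor — in particular off the boundary types at $\min\A$ and $\max\A$, which by the ultimate periodicity recorded in Theorem~\ref{fullLTL} determine the infinite tails. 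The anchors strictly below $\min\A$ or strictly above $\max\A$ are handled exactly like phantoms in the sense of Definition~\ref{def:phantom}: one guesses finitely many types in the periodic tail and checks the intervening box-conditions, all again expressible through the type machinery of Theorem~\ref{fullLTL}. The Boolean case $m=0$, not covered by Lemma~\ref{Rabino}, is subsumed either by padding $\psi$ with a dummy free variable or by observing that a Boolean $\MFO(<)$-sentence is itself an $\MSO(<)$-property of the $\Z$-labelling, reducible to $\tem(\A)$ plus boundary types in the same manner.

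Assembling these pieces yields the $\MSO(<)$-rewriting $\rq(\avec t)$, and the passage to an $\FO(\RPR)$-rewriting is then identical to Theorem~\ref{fullLTL}, via~\cite[Proposition~4.3]{DBLP:journals/iandc/ComptonL90}. I expect the main difficulty to be the bookkeeping showing that the translated prefix, suffix, and out-of-domain anchors agree, in \emph{every} model, with the genuine $\Z$-evaluation of $\psi$ — that is, that the finite boundary data really captures the infinite tails — together with the clean verification that convexity of $\tem(\A)$ leaves only these tail phenomena to be dealt with separately.
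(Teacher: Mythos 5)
Your overall route is genuinely different from the paper's, and its core idea is viable. The paper does not use second-order quantification in the multi-variable case: from Lemma~\ref{Rabino} it derives a further normal form~\eqref{kamp-nn}, organised by the possible total orders of the answer variables, in which every existentially quantified anchor lies between two consecutive answer variables (see~\eqref{eq:fullLTL-OMQ-delta}) and all remaining structure --- including everything happening outside the active domain --- is absorbed by Kamp's theorem into one-variable formulas $\alpha^j_i$ evaluated at the answer variables themselves; non-certain answers are then characterised by the existence of a \emph{single} countermodel, described by its types at the answer positions and the sets $S_i$ of interval formulas $\delta^j$ it satisfies between consecutive answer positions, and the realisability of such a description is encoded directly in $\FO(\RPR)$ via the formulas $\pi_{\tp S \tp'}$, built from an Until-encoding of the $\delta^j$ with a fresh marker concept and an NFA-style recursion. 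Your plan instead keeps the universally quantified labelling $\pmb{\varkappa^{\ast}}$ of Theorem~\ref{fullLTL} and evaluates a translation $\psi^*$ of the query on the labelling; this can be made to work, and you are right that keeping $\psi^*$ under the single second-order quantifier is exactly what makes the certain-answer semantics (which does not commute with the disjunctions in $\psi$) come out correctly.

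The genuine gap is your mechanism for anchors and guards outside $\tem(\A)$. You appeal to ``the ultimate periodicity recorded in Theorem~\ref{fullLTL}'' and to ``phantoms in the sense of Definition~\ref{def:phantom}'', but neither is available for $\LTL\Xallop_\bool$: Theorem~\ref{fullLTL} establishes no periodicity; the periodicity results (Lemmas~\ref{period:A}--\ref{period1}) are proved only for Horn ontologies via canonical models, which Boolean ontologies lack; and the paper constructs phantoms only for $\LTL_\horn\Xbox$ and $\LTL\Xallop_\core$ \OMAQ{}s. Indeed, for a \bool{} ontology the boundary type does \emph{not} determine the tail of a model --- many non-periodic continuations realise the same type --- so ``the finite boundary data captures the infinite tails'' is false in the form you rely on. The repair stays within your own framework and needs no periodicity: for each disjunct~\eqref{kamp-gen} and each split of its anchors into those below $\min\A$, inside $\tem(\A)$, and above $\max\A$, the whole below-$\min\A$ block (the anchors $x_1<\dots<x_j$, their $\alpha_i$'s, the guards between them, and the part of the guard $\beta_j$ lying below $\min\A$) is a single $\MFO(<)$-formula with one free variable placed at $\min\A$, hence by Kamp's theorem an \LTL{} formula; add it and its future-side counterpart to the tracked set $\subq$ and let $\psi^*$ read their truth values off the labels at $\min\A$ and $\max\A$, exactly as you already do for $\Lbox\beta_0$ and $\Rbox\beta_n$ --- correctness then follows from type-consistency and the stitching argument of Theorem~\ref{fullLTL}. (This is precisely what the paper's normal form~\eqref{kamp-nn} achieves by folding all out-of-domain structure into the $\alpha^j_i$.) A smaller inaccuracy: the conversion you need at the end is $\MSO(<)\to\FO(\RPR)$, which is the \emph{reverse} of the direction used in Theorem~\ref{fullLTL}; it does hold (by B\"uchi--Elgot--Trakhtenbrot together with an $\FO(\RPR)$ simulation of automaton runs on words with marked positions), but it must be argued or cited separately, and it matters because the $\FO(\RPR)$ form is what yields the \NCo{} data-complexity bound.
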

\begin{proof}
Let $\q = (\TO,\psi(\avec{t}))$ be an $\LTL\Xallop_\bool$ \OMQ{}. If $\psi$ has one answer variable, then we use Kamp's Theorem, according to which $\psi$ is equivalent to an \LTL-formula, and apply Theorem~\ref{fullLTL} to the latter.
If $\psi$ is a sentence (that is, $\q$ is Boolean), then we set $\psi'(t)= \psi\wedge (t=t)$ and take any rewriting $\rq'(t)$ of the \OMQ{} $\q' = (\TO,\psi'(t))$ with one answer variable; it should be clear that $\rq= \forall t \, \rq'(t)$ is a rewriting of $\q$.
	
Suppose now that there are at least two variables in $\avec{t} = (t_1,\dots,t_m)$. It is not hard to see using Lemma~\ref{Rabino}  that $\psi(\avec{t})$ is equivalent over $(\Z,<)$  to a disjunction of formulas of the form
\begin{equation}\label{kamp-nn}
\varphi(\avec{t}) ~=~  \textit{ord}(\avec{t}) \land \bigvee_{j = 1}^l \Big( \bigwedge_{i=1}^m \alpha_i^j(t_i) \land \bigwedge_{t \prec t'} \delta^j(t,t') \Big) ,
\end{equation}
where
\begin{itemize}
\item[--] $\textit{ord}(\avec{t})$ is a conjunction of atoms of the form $(t=t')$ and $(t<t')$, for $t,t' \in \avec{t}$, defining a total order on $\avec{t}$;

\item[--] the total orders given by distinct disjuncts $\varphi(\avec{t})$, if any, are  inconsistent with each other;

\item[--] $t \prec t'$ means that $t$ is the immediate predecessor of $t'$ in the order $\textit{ord}(\avec{t})$, and $\delta^j(t,t')$ takes the form
\begin{equation}\label{eq:fullLTL-OMQ-delta}
\delta^j(t,t') ~=~ \exists x_1,\dots,x_n\, \Big[ (t = x_1 < \dots < x_n = t') \land \bigwedge_{i=2}^{n-1} \gamma_i(x_i) \land \bigwedge_{i=1}^{n-1} \forall y\, \big ( (x_{i} < y < x_{i+1}) \to \beta_i(y) \big) \Big];
\end{equation}

\item[--] the $\alpha_i^j$, $\gamma_i$ and $\beta_i$ are $\FO(<)$-formulas with one free variable ($\gamma_1$\nz{added} subsumes $\alpha_1$ and the conjunct with $\beta_0$ from \eqref{kamp-gen}).
\end{itemize}
In fact, by adding, if necessary, disjuncts of the form $\varphi(\avec{t}) = \textit{ord}(\avec{t}) \land \bot$ to $\psi(\avec{t})$, we can make sure that every total order $\textit{ord}(\avec{t})$ on $\avec{t}$ corresponds to exactly one disjunct $\varphi(\avec{t})$ of $\psi(\avec{t})$.
By Kamp's Theorem, each of the $\alpha_i^j$, $\beta_i$ and $\gamma_i$ is equivalent to an \LTL-formula, so we think of them as such when convenient.
Let $\subq$ comprise all subformulas of the \LTL-formulas occurring in $\TO$ and  all of the $\smash{\alpha_i^j}$, $\beta_i$ and $\gamma_i$, together with their negations. By a \emph{type} for $\q$ we mean any maximal consistent subset $\tau \subseteq \subq$. For any temporal interpretation $\Mmf$ and $\ell \in \Z$, we denote by $\tau_\Mmf(\ell)$ the type for $\q$ defined by $\Mmf$ at $\ell$.

Let $\A$ be a data instance and $\avec{\ell} = (\ell_1,\dots,\ell_m)$ an $m$-tuple from $\tem(\A)$. We say that $\textit{ord}(\avec{t})$ \emph{respects} $\avec{\ell}$ if $t_i \mapsto \ell_i$ is an isomorphism from the total order defined by $\textit{ord}(\avec{t})$ onto the natural total order on $\avec{\ell}$. Then $\avec{\ell}$ is \emph{not} a certain answer to $\q$ over $\A$ iff either $\textit{ord}(\avec{t})$ respects $\avec{\ell}$ and $\textit{ord}(\avec{t}) \land \bot$ is in $\psi(\avec{t})$ or there is a model $\Mmf$ of $\TO$ and $\A$ such that, for the disjunct $\varphi(\avec{t})$ whose $\textit{ord}(\avec{t})$ respects $\avec{\ell}$, we have the following, for every $j$, $1 \leq j\leq l$:
\begin{itemize}
\item[--] either $\alpha_i^j \not \in \tau_\Mmf(\ell_i)$, for some $i$, $1\leq i \leq m$,

\item[--] or $\Mmf \not \models \delta^j(\ell,\ell')$, for some $\ell \prec \ell'$ in the given total order of $\avec{\ell}$.
\end{itemize}
Let $\varphi(\avec{t})$ respect $\avec{\ell}$.
To avoid clutter and without loss of generality, we assume that $\textit{ord}(\avec{t})$ is $t_1 < t_{2} < \dots < t_{m-1} < t_m$. Let $T$ be the set of all distinct (up to variable renaming) formulas $\delta^j(t,t')$ occurring in $\varphi(\avec{t})$ and let $S \subseteq T$. Suppose that, for any such $S$ and any types $\tau$ and $\tau'$, we have an $\FO(\RPR)$-formula $\pi_{\tp S \tp'}(t,t')$ with the following property:
\begin{itemize}
\item[($\star$)] $\SA \models \pi_{\tp S \tp'}(\ell,\ell')$ just in case there is a model $\Mmf$ of $\TO$ and the intersection $\A|_{\ge\ell} \cap \A|_{\le\ell'}$ such that $\tp_\Mmf(\ell) = \tp$, $\tp_\Mmf(\ell') = \tp'$, and $\Mmf \models \delta^j(\ell,\ell')$ iff $\delta^j(t,t') \in S$.
\end{itemize}
Consider the $\FO(\RPR)$-formula
\begin{equation*}
\varphi'(\avec{t}) ~=~ \bigvee_{\tau_1 S_1 \tau_2 \dots \tau_{m-1} S_{m-1} \tau_m} \Big(\bigwedge_{i=1}^{m-1} \pi_{\tau_i S_i \tau_{i+1}}(t_i, t_{i+1}) \land \bigwedge_{j=1}^ l \big( \bigvee_{i=1}^m \alpha_i^j \notin \tau_i \lor  \bigvee_{i=1}^{m-1} \delta^j \notin S_i \big) \Big),
\end{equation*}
where $\alpha_i^j \notin \tau_i$ is $\top$ if the type $\tau_i$ does \emph{not} contain $\alpha_i^j$ and $\bot$ otherwise, and similarly for $\delta^j \notin S_i$. If $\varphi(\avec{t}) = \textit{ord}(\avec{t}) \land \bot$, we set $\varphi'(\avec{t}) = \top$. Then it is not hard to verify (as we did it in the proof of Theorem~\ref{fullLTL}) that $\SA \models \varphi'(\avec{\ell})$ iff there exists a model $\Mmf$ of $\TO$ and $\A$ with $\Mmf \not \models \varphi(\avec{\ell})$. It follows that a conjunction of $\textit{ord}(\avec{t}) \to \neg \varphi'(\avec{t})$, for all orders $\textit{ord}(\avec{t})$, is an $\FO(\RPR)$-rewriting of $\q$.

In the remainder of the proof, we define the required $\pi_{\tp S \tp'}(t,t')$ in $\FO(\RPR)$. First, we represent every formula $\delta \in T$ of the form~\eqref{eq:fullLTL-OMQ-delta}, by the \LTL-formula
\begin{equation*}
\tilde \delta ~=~ \beta_1 \U (\gamma_2 \land (\beta_{2}  \dots \beta_{n-2} \U (\gamma_{n-1} \land \beta_{n-1} \U M) \dots )),
\end{equation*}
for a fresh atomic concept $M$ for `marker' (which never occurs in data instances). Intuitively, for any $i < j \in \Z$ and any model $\Mmf$ of $\TO$ with $M^{\Mmf} = \emptyset$, we have $\Mmf \models \delta(i,j)$ iff $i \in \tilde \delta^{\mathfrak N}$, where $\mathfrak N$ is $\Mmf$ extended with $M^{\mathfrak N} = \{j\}$.
 Let $\textit{un}(\tilde \delta)$ be the set of subformulas of the form $\beta \U \beta'$ in $\tilde \delta$. An \emph{extended type}, $\sigma$, is any maximal consistent subset of the set comprising $\subq$, all subformulas of $\tilde \delta$, for $\delta \in T$, and negations thereof.
%
%
Let $\sigma_1, \dots, \sigma_k$ be the set of all extended types.

To explain the intuition behind our $\FO(\RPR)$-formulas, we assume, as in Section~\ref{sec:ltl-tomaq:stutter}, that a data instance~$\A$ is given as a sequence $\A_0,\dots,\A_n$ with $\A_i = \{\, A \mid A(i) \in \A\,\}$.
Consider an NFA $\mathfrak A$ (similar to the one in Section~\ref{sec:ltl-tomaq:stutter}) with states $\sigma_i$, $1 \le i \le k$, and the transition relation  $\sigma \to_{\A_i} \sigma'$ iff $\suc(\sigma, \sigma')$ and $\A_i \subseteq \sigma'$ (see the proof of Theorem~\ref{fullLTL}). Then there is a run $\sigma_{j_{-1}}, \sigma_{j_0}, \dots, \sigma_{j_n}$ of $\mathfrak A$ on $\A$ with $M \notin \sigma_{j_i}$, for all $i$, $0 \leq i < n$, and $\{M\} \cup \{ \neg \varkappa \mid \varkappa \in \textit{un}(\tilde \delta),\  \delta \in T\} \subseteq \sigma_{j_n}$ iff there exists a model $\Mmf$ of $\TO$ and $\A$ such that $\Mmf \models \delta(0,n)$ for all $\tilde \delta \in \sigma_{j_0}$ and $\Mmf \not \models \delta(0,n)$ for all $\tilde \delta \notin \sigma_{j_0}$. The formula $\pi_{\tp S \tp'}(t_0, t_1)$ below encodes the existence of such a run of $\mathfrak A$ on $\A$ similarly to the proof of Theorem~\ref{fullLTL}:
\begin{align*}
& \pi_{\tp S \tp'}(t_0, t_1) = \bigvee_{\sigma \ \supseteq \ \tp \ \cup \ \{\, \tilde \delta \,\mid\, \delta \in S\,\}\  \cup\  \{\, \neg \tilde \delta \,\mid\, \delta \in T \setminus S\,\}} \hspace*{-2em}\psi_{\sigma, \tp'} (t_0, t_1),\\
& \psi_{\sigma, \tp'}(t_0, t_1) \ \ = \ \ \left[ \begin{array}{l}
Q_{\sigma_1}(t_0,t) \equiv \eta^{\sigma}_{\sigma_1}\\
\dots\\
Q_{\sigma_k}(t_0,t) \equiv \eta^{\sigma}_{\sigma_k}
\end{array}\right] \ \ \bigvee_{\substack{\suc(\sigma_i, \sigma'),\\  \sigma' \ \supseteq\  \{M\}  \ \cup \ \tau' \ \cup \ \{\, \neg \varkappa \,\mid\, \varkappa \in \textit{un}(\tilde \delta),\,  \delta \in T\, \}}} \hspace*{-4em}\big(Q_{\sigma_i}(t_0,t_1 - 1)\land \typef_{\sigma'}(t_1)\big),
\end{align*}
where
\begin{equation*}
\eta^{\sigma}_{\sigma_i}(t_0, t, Q_{\sigma_1}(t_0,t-1),\dots,Q_{\sigma_k}(t_0,t-1)) ~=~
\begin{cases}
\bot, & \text{ if } M \in \sigma_i \cup \sigma;\\
\displaystyle
\typef_{\sigma_i}(t) \land \bigl((t=t_0) \lor \bigvee_{\suc(\sigma', \sigma_i)} Q_{\sigma'}(t_0,t-1)\bigr), & \text{ if } \sigma_i = \sigma;\\ \displaystyle
\typef_{\sigma_i}(t) \land  \bigvee_{\suc(\sigma', \sigma_i)} Q_{\sigma'}(t_0,t-1), & \text{ if } \sigma_i \ne \sigma.
\end{cases}
\end{equation*}
It is not hard to show that $\pi_{\tp S \tp'}(t,t')$ satisfies ($\star$).
\qed
\end{proof}

Our next aim is to extend our $\FO(<)$- and $\FOE$-rewritability results for OMPIQs with Horn ontologies to $\MFO(<)$-queries with multiple answer variables. To this end, we have to generalise the notion of positive temporal concept to first-order logic. As an obvious natural candidate, one could take the set of $\MFO(<)$-formulas constructed from the usual atoms with the help of the connectives $\land$, $\lor$ and quantifiers $\forall$, $\exists$. This, however, would not cover all positive temporal concepts as the binary operators $\Si$ and $\U$ are not expressible in this language. To include them in a natural way, we define our second query language.

A \emph{quasi-positive} $\MFO(<)$-\emph{formula} is built (inductively) from atoms of the form $A(t)$, $(t < t')$, $(t=t')$ and $(t = t'+1)$  using $\land$, $\lor$, $\forall$, $\exists$ as well as the guarded universal quantification of the form
\begin{equation*}
\forall y \, \big( (x < y < z) \to \varphi \big),\qquad \forall y \, \big( (x < y) \to \varphi \big), \qquad \forall y \, \big( (y < z) \to \varphi \big),
\end{equation*}
where $\varphi$ is a quasi-positive $\MFO(<)$-formula.
A \emph{quasi-positive $\LTL^{\op}_\frag$ \OMQ{}} is then a pair $\q = (\TO,\psi(\avec{t}))$ where $\TO$ is an $\LTL^{\op}_\frag$ ontology and $\psi(\avec{t})$ a quasi-positive $\MFO(<)$-formula.


We now prove a transparent semantic characterisation of quasi-positive $\MFO(<)$-formulas. Given temporal interpretations $\Mmf_1$ and $\Mmf_2$, we write $\Mmf_1 \preceq \Mmf_2$ if $A^{\Mmf_1} \subseteq A^{\Mmf_2}$, for every atomic concept $A$. An $\MFO(<)$-formula $\psi(\avec{t})$ is called \emph{monotone} if $\Mmf_1 \models \psi(\avec{n})$ and $\Mmf_1 \preceq \Mmf_2$ imply $\Mmf_2 \models \psi(\avec{n})$, for any tuple  $\avec{n}$ in $\Z$.

\begin{theorem}\label{positive Kamp}
An $\MFO(<)$-formula is monotone iff it is equivalent over $(\Z,<)$ to a quasi-positive $\MFO(<)$-formula.
\end{theorem}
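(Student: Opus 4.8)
The plan is to prove the two directions separately: the forward direction (quasi-positive $\Rightarrow$ monotone) by a routine structural induction, and the converse by exploiting the Kamp-style normal form of Lemma~\ref{Rabino}.

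For the easy direction, I would argue by induction on the construction of a quasi-positive formula $\varphi$ that $\Mmf_1\models\varphi(\avec n)$ and $\Mmf_1\preceq\Mmf_2$ imply $\Mmf_2\models\varphi(\avec n)$. The atomic cases $A(t)$, $(t<t')$, $(t=t')$ and $(t=t'+1)$ are immediate: the order atoms do not mention predicates and are unaffected by $\preceq$, while $A(t)$ is preserved because $A^{\Mmf_1}\subseteq A^{\Mmf_2}$. The connectives $\land,\lor$ and the quantifiers $\exists y$, $\forall y$ all preserve monotonicity in the obvious way. The only slightly special case is the guarded universal quantifier $\forall y\,((x<y<z)\to\psi)$ (and its one-sided variants): for a point $b$ satisfying the guard the inductive hypothesis applies to $\psi$, and for a point $b$ failing the guard the implication holds vacuously in \emph{both} models since the guard is predicate-free. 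Hence every quasi-positive formula is monotone.

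For the converse, let $\psi(\avec t)$ be monotone with $\avec t=(t_1,\dots,t_m)$ and $m\ge 1$ (the case $m=0$ reduces to $m=1$: replace a sentence $\psi$ with the monotone formula $\psi\land(t=t)$, take its quasi-positive equivalent $\chi(t)$, and observe that $\exists t\,\chi(t)$ is a quasi-positive sentence equivalent to $\psi$ over the nonempty order $(\Z,<)$). By Lemma~\ref{Rabino}, $\psi$ is equivalent over $(\Z,<)$ to a disjunction $\bigvee_l\varphi_l(\avec t)$ of formulas of the shape~\eqref{kamp-gen}, whose only non--quasi-positive ingredients are the Boolean combinations $\alpha_i(x_i)$ and $\beta_i(y)$ of unary atoms, which may contain negative literals; everything else---the existential block $\exists x_1,\dots,x_n$, the equalities $t_i=x_{j_i}$, the order atoms $x_i<x_{i+1}$, and the guarded universals over the intervals $(-\infty,x_1)$, $(x_i,x_{i+1})$, $(x_n,+\infty)$---is already of quasi-positive form. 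For a Boolean combination $\alpha$ of atoms $A_1,\dots,A_r$, written in DNF, let $\alpha^{\uparrow}$ be obtained by discarding every clause containing a complementary pair and then deleting all negative literals from the surviving clauses. Then $\alpha^{\uparrow}$ is a positive (hence quasi-positive) formula with $\alpha\models\alpha^{\uparrow}$, and with the key property that $\alpha^{\uparrow}(X)$ implies $\alpha(Y)$ for some $Y\subseteq X$ (take $Y$ to be the positive part of a satisfied clause). Replacing each $\alpha_i,\beta_i$ by $\alpha_i^{\uparrow},\beta_i^{\uparrow}$ throughout yields a quasi-positive formula $\varphi^{+}$, and I claim $\psi\equiv\varphi^{+}$.

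The implication $\psi\models\varphi^{+}$ is immediate, since the $\alpha_i,\beta_i$ occur positively in~\eqref{kamp-gen} and $\alpha_i\models\alpha_i^{\uparrow}$, $\beta_i\models\beta_i^{\uparrow}$. The converse $\varphi^{+}\models\psi$ is where monotonicity is used, and it is the main obstacle. Suppose $\Mmf\models\varphi^{+}(\avec n)$, witnessed by a disjunct with points $a_1<\dots<a_n$ (so $n_i=a_{j_i}$) at which $\alpha_i^{\uparrow}(a_i)$ holds and over whose intervals $\beta_i^{\uparrow}$ holds. The points $a_1,\dots,a_n$ together with the open intervals between and around them partition $\Z$, so I can define a submodel $\Mmf'\preceq\Mmf$ pointwise: at $a_i$ keep only a set $Y_i\subseteq X_i$ witnessing $\alpha_i(Y_i)$, where $X_i$ is the set of atoms true at $a_i$ in $\Mmf$; and at each point $b$ of the $i$-th interval keep only a set $Z_b$ witnessing $\beta_i(Z_b)$. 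Each such $Y_i,Z_b$ exists, and is a subset of the corresponding $\Mmf$-extension, by the key property of $\uparrow$, so indeed $\Mmf'\preceq\Mmf$. Then $\Mmf'$ satisfies the original disjunct $\varphi_l(\avec n)$ with the actual $\alpha_i,\beta_i$ (the order atoms are unaffected by passing to $\Mmf'$), whence $\Mmf'\models\psi(\avec n)$; monotonicity of $\psi$ together with $\Mmf'\preceq\Mmf$ then gives $\Mmf\models\psi(\avec n)$. This establishes $\psi\equiv\varphi^{+}$, and since $\varphi^{+}$ is quasi-positive, the theorem follows. The delicate points to get right are that the witnessing subsets $Y_i,Z_b$ may depend on the point, that they are genuinely subsets of the $\Mmf$-extensions (so that $\Mmf'\preceq\Mmf$), and that the partition really exhausts $\Z$ so that every point receives a well-defined, downward-adjusted valuation.
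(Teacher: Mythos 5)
Your proof is correct and takes essentially the same route as the paper's: both pass through the normal form of Lemma~\ref{Rabino}, replace the Boolean combinations $\alpha_i,\beta_i$ by their positive DNF (monotone-closure) approximations, and then use monotonicity of $\psi$ on a pointwise-shrunken submodel $\Mmf'\preceq\Mmf$ in which the original $\alpha_i,\beta_i$ hold. The only cosmetic differences are that you prove $\varphi^{+}\models\psi$ directly rather than by contradiction, and you obtain the positive closure by stripping negative literals from an arbitrary DNF instead of the paper's construction via characteristic valuations---these yield the same (equivalent) formulas.
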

\begin{proof}
$(\Rightarrow)$ Suppose a monotone $\MFO(<)$-formula $\psi(\avec{t})$ with
$\avec{t} = (t_1,\dots,t_m)$ and $m>0$ is given. By Lemma~\ref{Rabino}, we may assume that $\psi(\avec{t})$ is in fact  a disjunction $\varphi(\avec{t}) =  \bigvee_{l=1}^k \varphi_l(\avec{t})$, where each $\varphi_l(\avec{t})$ takes the form~\eqref{kamp-gen} and the $\alpha_i(x_i)$ and $\beta_i(y)$ are Boolean combinations of unary atoms with one free variable $x_i$ and $y$, respectively.
Let $\gamma(z)$ be any of these Boolean combinations all of whose atoms are among $A_1(z),\dots,A_s(z)$. Denote by $\gamma'(z)$ a formula in DNF that contains a disjunct $A_{i_1}(z) \land \dots \land A_{i_p}(z)$ iff $\gamma(z)$ is true as a propositional  Boolean formula under the valuation $\mathfrak v \colon \{A_1(z),\dots,A_s(z)\} \to \{0,1\}$ such that $\mathfrak v(A_i(z)) = 1$ for $i \in \{i_1,\dots,i_p\}$ and $\mathfrak v(A_i(z)) = 0$ for the remaining $i$. (We remind the reader that an empty disjunction is $\bot$, which is a quasi-positive $\MFO(<)$-formula.) Clearly, $\gamma(z) \to \gamma'(z)$ is a tautology.

Denote by $\varphi'(\avec{t})$ and $\varphi'_l(\avec{t})$ the results of replacing the $\alpha_i(x_i)$ and $\beta_i(y)$ in $\varphi(\avec{t})$ and, respectively, $\varphi_l(\avec{t})$ by their primed versions $\alpha'_i(x_i)$ and $\beta'_i(y)$. We claim that $\varphi(\avec{t})$ and $\varphi'(\avec{t})$ are equivalent. Clearly, we have $\Mmf \models^{\mathfrak a} \varphi(\avec{t}) \to \varphi'(\avec{t})$, for any temporal interpretation $\Mmf$ and assignment $\mathfrak a$.
Suppose $\Mmf \not\models^{\mathfrak a} \varphi'(\avec{t}) \to \varphi(\avec{t})$. Denote by $\hat{\varphi}_l$ and $\hat{\varphi}'_l$ the matrices (that is, the quantifier-free parts)  of $\varphi_l$ and, respectively, $\varphi'_l$. Then there is an assignment $\mathfrak b \colon \{x_1,\dots,x_n\} \to \Z$ such that $\Mmf \models^{\mathfrak b} \varphi'_r$, for some $r \in \{1,\dots,k\}$, and $\Mmf \not\models^{\mathfrak b} \varphi_l$, for all $l \in \{1,\dots,k\}$. Consider any $z \in \Z$. Let $\gamma(z)$ be $\beta_0(z)$ if $z < \mathfrak b(x_1)$, $\alpha_1(z)$ if $z = \mathfrak b(x_1)$, $\beta_1(z)$ if $\mathfrak b(x_1) < z < \mathfrak b(x_2)$, $\alpha_2(z)$ if $z = \mathfrak b(x_2)$, etc. Then we have $\Mmf \models^{\mathfrak b} \gamma'(z)$ for all $z \in \Z$. If $\Mmf \not\models^{\mathfrak b} \gamma(z)$, we do the following. The interpretation $\Mmf$ gives a valuation $\mathfrak v$ for the atoms  $A_1(z),\dots,A_s(z)$ in $\gamma$ such that $\mathfrak v(\gamma(z)) = 0$ while $\mathfrak v(\gamma'(z)) = 1$. Suppose a disjunct $A_{i_1}(z) \land \dots \land A_{i_p}(z)$ in $\gamma'(z)$ is true under $\mathfrak v$. Then we remove from $\Mmf$ all atoms in the set $\{A_1(z),\dots,A_s(z)\} \setminus \{A_{i_1}(z), \dots, A_{i_p}(z)\}$. We do the same for all $z \in \Z$ with $\Mmf \not\models^{\mathfrak b} \gamma(z)$ and denote the resulting interpretation by $\Mmf'$. Obviously, $\Mmf' \preceq \Mmf$. Since $\psi =\varphi$ is monotone, we must have $\Mmf' \not\models^{\mathfrak a} \varphi(\avec{t})$. On the other hand, by the definition of $\gamma'$, we must have $\Mmf \models^{\mathfrak b} \varphi_r$, which is impossible.

Thus, $\varphi$ is equivalent over $(\Z,<)$ to $\varphi'$, which is, by construction, a quasi-positive $\MFO(<)$-formula.
The case $m=0$ is treated in the same way as in the proof of Theorem~\ref{fullLTL-OMQ}.

The implication $(\Leftarrow)$ is readily shown by induction on the construction of quasi-positive formulas.
\qed
\end{proof}

The following is an immediate consequence of the proof of Theorem~\ref{positive Kamp}:

\begin{corollary}\label{cor-mono}
Every OMQ $\q = (\TO,\psi(\avec{t}))$ with a monotone $\MFO(<)$-query $\psi(\avec{t})$ is equivalent to a quasi-positive OMQ $\q' = (\TO,\psi'(\avec{t}))$, where $\psi'(\avec{t})$ is a disjunction of formulas of the form
\begin{equation}\label{kamp-mono}
\varphi_l(\avec{t}) ~=~ \exists x_1,\dots,x_n\, \Big[ \bigwedge_{i=1}^m (t_i = x_{j_i}) \land \bigwedge_{i = 1}^{n-1} (x_i < x_{i+1}) \land \bigwedge_{i=1}^n \alpha_i(x_i) \land \bigwedge_{i=1}^{n-1} \forall y\, \big ( (x_i < y < x_{i+1}) \to \beta_i(y) \big ) \Big],
\end{equation}
where the first conjunction contains $(t_i = x_1)$ and $(t_j = x_n)$ with free variables  $t_i,t_j \in \avec{t}$, and the $\alpha_i$ and $\beta_i$ are $\MFO(<)$-formulas with one free variable.
\end{corollary}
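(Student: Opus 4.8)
The plan is to read the claim off the $(\Rightarrow)$ direction of the proof of Theorem~\ref{positive Kamp}, adding only a syntactic regrouping step. That proof takes the monotone query $\psi(\avec{t})$, applies Lemma~\ref{Rabino} to write it as a disjunction $\bigvee_l \varphi_l$ of formulas of the shape~\eqref{kamp-gen}, and then replaces every Boolean combination $\alpha_i,\beta_i$ by its positive DNF surrogate $\alpha_i',\beta_i'$, obtaining an equivalent (over $(\Z,<)$) quasi-positive formula $\bigvee_l\varphi_l'$. So I would start from $\bigvee_l\varphi_l'$, where each $\varphi_l'$ has the form~\eqref{kamp-gen} with quasi-positive one-variable $\alpha_i',\beta_i'$, including the two unbounded guards $\forall y\,((y<x_1)\to\beta_0'(y))$ and $\forall y\,((x_n<y)\to\beta_n'(y))$. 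The only gap between~\eqref{kamp-gen} and the target form~\eqref{kamp-mono} is that~\eqref{kamp-mono} has no unbounded guards and pins its extreme existential variables $x_1,x_n$ to answer variables.

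I would close this gap by regrouping each disjunct $\varphi_l'$ around the range spanned by its answer variables. Assuming $m\ge 1$ (the Boolean case $m=0$ reduces to one free variable exactly as in the proof of Theorem~\ref{fullLTL-OMQ}), let $p=\min_i j_i$ and $q=\max_i j_i$ be the positions of the leftmost and rightmost existential variables carrying an answer variable. Then $x_1,\dots,x_{p-1}$ lie strictly to the left of every answer variable and $x_{q+1},\dots,x_n$ strictly to the right, so the quantifiers $\exists x_1\cdots x_{p-1}$ interact with the rest of $\varphi_l'$ only through $x_p$ (via the chain atom $x_{p-1}<x_p$ and the guard over $(x_{p-1},x_p)$), and symmetrically on the right. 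I would pull $\exists x_1\cdots x_{p-1}$ inward together with the left-hand point conditions, the left bounded guards and the unbounded guard $\beta_0'$, collecting them into a single formula $\lambda(x_p)$ with unique free variable $x_p$; likewise the right block collapses into $\rho(x_q)$. Each of $\lambda,\rho$ is quasi-positive, since it is built from $\exists$, $\land$ and guards of exactly the three permitted shapes $\forall y\,((y<z)\to\cdot)$, $\forall y\,((x<y<z)\to\cdot)$ and $\forall y\,((x<y)\to\cdot)$, with $\beta_0'$ landing in the first shape and $\beta_n'$ in the third. Replacing the point condition at $x_p$ by $\alpha_p'\wedge\lambda$ and that at $x_q$ by $\alpha_q'\wedge\rho$, and renaming $x_p,\dots,x_q$ to $x_1,\dots,x_{n'}$, turns $\varphi_l'$ into a formula of the shape~\eqref{kamp-mono} in which $x_1$ and $x_{n'}$ are answer variables, no unbounded guards remain, and the surviving $\alpha_i,\beta_i$ are quasi-positive $\MFO(<)$-formulas with one free variable. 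These regroupings are logical equivalences over $(\Z,<)$, so taking the disjunction over $l$ yields $\psi'(\avec{t})\equiv\psi(\avec{t})$ over $(\Z,<)$, whence $\q'=(\TO,\psi'(\avec{t}))$ has the same certain answers as $\q$.

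I expect the work to be bookkeeping rather than a genuine obstacle, but the point requiring care will be verifying that the absorbed left and right blocks depend only on $x_p$ (resp.\ $x_q$) and on symbols below (resp.\ above) it, so that they legitimately become one-variable point conditions, and that the three admissible guard shapes cover every universal quantifier produced — in particular that the two unbounded guards of~\eqref{kamp-gen} fall under the unbounded guarded quantifiers of the quasi-positive grammar. The conceptual step that makes everything fit is already encoded in the statement of~\eqref{kamp-mono}: the $\alpha_i$ and $\beta_i$ there are allowed to be \emph{arbitrary} one-variable $\MFO(<)$-formulas, not merely the Boolean combinations of atoms produced by Lemma~\ref{Rabino}, and this is exactly the room needed to hide the outer existential blocks and the unbounded guards inside the endpoint conditions.
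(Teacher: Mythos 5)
Your proposal is correct and takes essentially the same route as the paper: the paper presents Corollary~\ref{cor-mono} as an immediate consequence of the proof of Theorem~\ref{positive Kamp}, i.e., it takes the quasi-positive disjunction of formulas of the form~\eqref{kamp-gen} produced there and relies on exactly the regrouping you describe --- absorbing the outer existential blocks together with the two unbounded guards into one-variable conditions at the extreme answer variables, which is licensed precisely because the $\alpha_i$ and $\beta_i$ in~\eqref{kamp-mono} may be arbitrary one-variable $\MFO(<)$-formulas. The only difference is that the paper leaves this bookkeeping implicit, whereas you spell it out (correctly, including the quasi-positivity of the absorbed blocks and the shared-endpoint justification for pushing the quantifiers inward).
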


\begin{corollary}[Monotone Kamp Theorem]
Every monotone $\MFO(<)$-formula with one free variable is equivalent over $(\Z,<)$ either to a positive temporal concept or to $\bot$.
\end{corollary}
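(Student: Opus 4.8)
The plan is to reuse the normal-form construction already carried out for Theorem~\ref{positive Kamp} and then translate its one-variable instance into \LTL{} while keeping track of positivity. Since the given formula $\psi(t)$ has a single free variable, there is no Boolean (sentence) case to treat, so it suffices to turn a monotone one-variable $\MFO(<)$-formula into a positive temporal concept (or into $\bot$).

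First I would rerun the argument in the proof of Theorem~\ref{positive Kamp} on $\psi(t)$. By Lemma~\ref{Rabino} (with $m=1$) we may assume $\psi(t) = \bigvee_l \varphi_l(t)$ with each $\varphi_l$ of the shape~\eqref{kamp-gen}, where now the single equality reads $t = x_{j}$ for one index $j$. Replacing every Boolean combination $\alpha_i,\beta_i$ occurring in~\eqref{kamp-gen}, including the two outer guards $\beta_0$ (on $y<x_1$) and $\beta_n$ (on $x_n<y$), by its positive DNF $\alpha_i',\beta_i'$ exactly as in that proof, monotonicity of $\psi$ yields an equivalence over $(\Z,<)$ to $\bigvee_l \varphi_l'(t)$, in which each $\varphi_l'$ again has the form~\eqref{kamp-gen} but with all $\alpha_i',\beta_i'$ built from atoms $A(z)$ using only $\land$ and $\lor$. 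If the disjunction is empty, or every disjunct contains some $\alpha_i' = \bot$, then $\psi$ is equivalent to $\bot$, which is the second alternative of the statement; otherwise I discard the $\bot$-disjuncts and proceed.

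Next I would translate each surviving $\varphi_l'(t)$ into a positive temporal concept by reading its witnesses $x_1 < \dots < x_j = t < \dots < x_n$ as a two-sided nested until/since chain anchored at $t$. Rightwards, set $\rho_n = \alpha_n' \land \Rbox \beta_n'$ and $\rho_i = \alpha_i' \land (\beta_i' \U \rho_{i+1})$ for $j \le i < n$; leftwards, set $\lambda_1 = \alpha_1' \land \Lbox \beta_0'$ and $\lambda_i = \alpha_i' \land (\beta_{i-1}' \Si \lambda_{i-1})$ for $1 < i \le j$; and put $\chi_l = \alpha_{j}' \land L \land R$, where $L = \Lbox\beta_0'$ if $j=1$ and $L = \beta_{j-1}' \Si \lambda_{j-1}$ otherwise, and $R = \Rbox\beta_n'$ if $j=n$ and $R = \beta_j' \U \rho_{j+1}$ otherwise. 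Using the strict semantics of $\U,\Si,\Rbox,\Lbox$ recorded in~\eqref{eq:semantics:box} and the clauses for $\U,\Si$, a direct induction along the chain gives $\chi_l^{\Mmf} = \{\,\ell \mid \Mmf \models \varphi_l'(\ell)\,\}$ for every interpretation $\Mmf$: the subformula $\beta_i' \U \rho_{i+1}$ realises ``there is a next witness $x_{i+1}>x_i$ satisfying $\rho_{i+1}$, with $\beta_i'$ throughout the open interval'', while $\Rbox\beta_n'$ and $\Lbox\beta_0'$ realise the two unbounded guards. Since the $\alpha_i',\beta_i'$ are negation-free and $\land,\lor,\U,\Si,\Rbox,\Lbox$ are all positive operators, each $\chi_l$ is a positive temporal concept, hence so is $\bigvee_l \chi_l \equiv \psi(t)$.

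I expect the main obstacle to be making the induction establishing $\chi_l \equiv \varphi_l'$ precise, in particular matching the strict (current-point-excluding) semantics of $\U$ and $\Si$ to the strict witness order $x_i < x_{i+1}$ and to the open-interval guards $(x_i < y < x_{i+1})$, and checking that no spurious constraint is imposed at the witness points themselves. The degenerate boundary cases $j=1$ and $j=n$, where one of the two chains collapses to a single $\Box$-guard, also need separate (but routine) verification; the remaining index bookkeeping is straightforward.
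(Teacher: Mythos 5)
Your overall route coincides with the paper's: rerun the normal-form argument of Theorem~\ref{positive Kamp} on the one-variable formula, prune inconsistent disjuncts, and translate the surviving disjuncts of shape~\eqref{kamp-gen} into \LTL{}. The only difference is that you spell out the nested until/since chain explicitly where the paper simply invokes the translation of \cite[Proposition~3.5]{DBLP:journals/corr/Rabinovich14}; your chain is the right one for the strict semantics, and the induction you defer is indeed routine.

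The genuine gap is in your treatment of $\bot$. You discard a disjunct only when some $\alpha_i' = \bot$, but the positive-DNF step also yields $\beta_i' = \bot$ whenever the Boolean combination $\beta_i$ is unsatisfiable, and nothing in Lemma~\ref{Rabino} excludes this: for instance, the standard normal form of the monotone formula $A(t+1)$ uses the empty-interval guard $\forall y\,\bigl((x_1 < y < x_2) \to \bot\bigr)$ to force $x_2 = x_1 + 1$. On such disjuncts your final claim ``each $\chi_l$ is a positive temporal concept'' fails, because $\bot$ is not a positive temporal concept --- the negation-free grammar contains $\top$ but not $\bot$, and indeed every positive temporal concept is satisfiable (interpret every atom as $\Z$), which is precisely why the corollary needs its second alternative. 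Two cases must be added. If $\beta_0' = \bot$ or $\beta_n' = \bot$, the disjunct is equivalent to $\bot$ over $(\Z,<)$, since $\{y \mid y < x_1\}$ and $\{y \mid y > x_n\}$ are never empty, so it should be discarded together with the $\alpha_i' = \bot$ ones; as written, your $\chi_l$ would contain $\Lbox\bot$ or $\Rbox\bot$. If an intermediate $\beta_i' = \bot$ (for some $1 \le i \le n-1$), the disjunct is \emph{not} equivalent to $\bot$ (the guard merely forces $x_{i+1} = x_i + 1$), so it must be kept, and you need the extra observation that, under the strict semantics over $\Z$, $\bot \U \rho \equiv \Rnext \rho$ and $\bot \Si \lambda \equiv \Lnext \lambda$, so the offending subformulas can be replaced by genuinely positive ones. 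With these amendments your argument is complete. (The paper itself is cursory on this point: it asserts that any $\bot$ among the $\alpha_i', \beta_i'$ makes $\varphi_l'$ equivalent to $\bot$, which is accurate only for the $\alpha_i'$ and the outer guards $\beta_0', \beta_n'$; the intermediate case requires the same $\Rnext$/$\Lnext$ replacement.)
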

\begin{proof}
In this case, the quasi-positive formula $\varphi'_l$ of the form~\eqref{kamp-gen}  defined in the proof of Theorem~\ref{positive Kamp} has one free variable. If one of the $\alpha'_i(x_i)$ and $\beta'_i(y)$ is $\bot$, then $\varphi'_l$ is equivalent to $\bot$. Otherwise, since the $\alpha'_i(x_i)$ and $\beta'_i(y)$ in $\varphi'_l$ are positive, we can transform $\varphi'_l$ into an equivalent positive \LTL{} concept in exactly the same way as in the proof of~\cite[Proposition 3.5]{DBLP:journals/corr/Rabinovich14}.
\qed
\end{proof}

We are now in a position to prove a generalisation of Theorem~\ref{thm:LTLrewritability} to quasi-positive OMQs with multiple answer variables.

\begin{theorem}\label{thm:rewritability-monFO}
All quasi-positive $\LTL\Xallop_\core$ \OMQ{}s are $\FOE$-rewritable, and all quasi-positive $\LTL_{\smash{\horn}}\Xbox$ \OMQ{}s are \mbox{$\FO(<)$}-rewritable.
\end{theorem}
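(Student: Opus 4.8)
The plan is to reduce certain-answer evaluation of a quasi-positive \OMQ{} to truth in a single canonical model, and then to express this truth by wrapping the OMPIQ rewritings of Theorem~\ref{thm:LTLrewritability} in a bounded block of first-order quantifiers. First I would fix a quasi-positive $\LTL\Xallop_\core$ (resp.\ $\LTL_{\smash{\horn}}\Xbox$) \OMQ{} $\q=(\TO,\psi(\avec t))$ with $\avec t=(t_1,\dots,t_m)$. By Theorem~\ref{positive Kamp}, $\psi$ is monotone, and Theorem~\ref{canonical}~(\emph{ii}) (applied to atomic concepts) gives $\can\preceq\Mmf$ for every model $\Mmf$ of $(\TO,\A)$ whenever $\A$ is consistent with $\TO$. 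The key observation is then: for consistent $\A$, a tuple $\avec\ell$ from $\tem(\A)$ is a certain answer to $\q$ iff $\can\models\psi(\avec\ell)$. Indeed, if $\can\models\psi(\avec\ell)$ then monotonicity and $\can\preceq\Mmf$ propagate this to every model, while the converse is immediate since $\can$ is itself a model by Theorem~\ref{canonical}~(\emph{iii}). The case of inconsistent $\A$ (where every tuple is vacuously a certain answer) I would treat by a separate disjunct $\rq_\bot$: an $\FO(<)$- (resp.\ $\FOE$-) sentence asserting the nonexistence of any accepting type-sequence for $(\TO,\A)$, extracted from the base constructions of Theorems~\ref{thm:box} and~\ref{thm:krom-nextbox} by dropping the clauses that pin down the answer position.

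It now remains to rewrite $\can\models\psi(\avec\ell)$. By Corollary~\ref{cor-mono} I may assume $\psi=\bigvee_l\varphi_l$ with each $\varphi_l$ of the form~\eqref{kamp-mono}, so that its extreme chain points $x_1,x_n$ are answer variables and its only subformulas are point conditions $\alpha_i(x_i)$ and guarded interval conditions $\forall y\,((x_i<y<x_{i+1})\to\beta_i(y))$, where the $\alpha_i,\beta_i$ are monotone $\MFO(<)$-formulas with one free variable. By the Monotone Kamp Theorem each $\alpha_i$ (resp.\ $\beta_i$) is equivalent over $(\Z,<)$ to a positive temporal concept $\widehat\alpha_i$ (resp.\ $\widehat\beta_i$) or to $\bot$. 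Theorem~\ref{thm:LTLrewritability} then supplies an $\FO(<)$- (resp.\ $\FOE$-) rewriting $\rq_{\widehat\alpha_i}$ of the OMPIQ $(\TO,\widehat\alpha_i)$, and likewise $\rq_{\widehat\beta_i}$; recall that by Theorem~\ref{canonical}~(\emph{iv}) these satisfy $\SA\models\rq_{\widehat\alpha_i}(k)$ iff $k\in\widehat\alpha_i^{\can}$, for $k\in\tem(\A)$. I would then set
$$
\rq_{\varphi_l}(\avec t)\ =\ \exists x_1,\dots,x_n\,\Bigl[\bigwedge_{i=1}^m(t_i=x_{j_i})\ \land\ \bigwedge_{i=1}^{n-1}(x_i<x_{i+1})\ \land\ \bigwedge_{i=1}^n\rq_{\widehat\alpha_i}(x_i)\ \land\ \bigwedge_{i=1}^{n-1}\forall y\,\bigl((x_i<y<x_{i+1})\to\rq_{\widehat\beta_i}(y)\bigr)\Bigr],
$$
with $\rq_{\widehat\alpha_i}$ replaced by $\bot$ when $\alpha_i\equiv\bot$ (discarding the disjunct) and $\rq_{\widehat\beta_i}$ by $\bot$ when $\beta_i\equiv\bot$ (so that the guard forces $x_{i+1}=x_i+1$), and finally take $\rq(\avec t)=\rq_\bot\ \lor\ \bigvee_l\rq_{\varphi_l}(\avec t)$.

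The correctness argument I would give is that, since $x_1$ and $x_n$ are answer variables lying in the convex set $\tem(\A)$ and $x_1<\dots<x_n$, every $x_i$ and every witness $y$ ranges inside $\tem(\A)$; hence evaluating the bounded quantifiers over $\SA$ agrees with evaluating them in $\can$ over $\Z$, and the sub-rewritings already internalise the behaviour of $\can$ beyond the active domain (this is precisely why no outer phantoms are needed here, in contrast with Theorem~\ref{thm:LTLrewritability}). Thus $\SA\models\rq_{\varphi_l}(\avec\ell)$ iff $\can\models\varphi_l(\avec\ell)$, and summing over $l$ gives $\SA\models\bigvee_l\rq_{\varphi_l}(\avec\ell)$ iff $\can\models\psi(\avec\ell)$, which by the key observation is certain-answerhood for consistent $\A$; the disjunct $\rq_\bot$ covers the inconsistent case. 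The cases $m\le1$ I would dispatch directly: for $m=1$ the Monotone Kamp Theorem turns $\psi$ into a single positive temporal concept and Theorem~\ref{thm:LTLrewritability} applies verbatim, while the Boolean case $m=0$ is handled exactly as in the proof of Theorem~\ref{fullLTL-OMQ} by rewriting $\psi\land(t=t)$ and prefixing $\forall t$. Since all the glue uses only $<$, $=$, $\land$, $\lor$, $\exists$ and $\forall$, the rewriting stays in $\FO(<)$ (resp.\ $\FOE$), as required.

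The main obstacle I anticipate is the reduction to the canonical model: the certain-answer semantics quantifies universally over \emph{all} models, so a disjunctive query could a priori be witnessed by different disjuncts in different models, which would block any modular treatment. Monotonicity together with the minimality property $\can\preceq\Mmf$ of the canonical model is exactly what collapses this universal quantifier to a single model and thereby makes the disjunction, and the whole assembly from the OMPIQ rewritings, legitimate. The secondary technical point is the bookkeeping ensuring that Corollary~\ref{cor-mono}'s normal form keeps all quantified points within $\tem(\A)$, which is what allows the OMPIQ rewritings to be used as black boxes without re-deriving phantoms.
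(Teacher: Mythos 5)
Your proposal is correct and follows essentially the same route as the paper's proof: reduction of certain answers to truth in the canonical model via monotonicity (the paper's equation~\eqref{eq:can-monFO}), the normal form of Corollary~\ref{cor-mono} together with the Monotone Kamp Theorem, and substitution of the OMPIQ rewritings of Theorem~\ref{thm:LTLrewritability} for the $\alpha_i$ and $\beta_i$, with a separate disjunct covering inconsistent data. The only cosmetic difference is your $\rq_\bot$, which you extract from the automata constructions of Theorems~\ref{thm:box} and~\ref{thm:krom-nextbox}, whereas the paper obtains it as the Theorem~\ref{thm:LTLrewritability} rewriting of the OMPIQ $(\TO,\varkappa_\bot)$ with $\varkappa_\bot$ the disjunction of all $\Ldiamond\Rdiamond(C_1\land\dots\land C_n)$ for the axioms $C_1\land\dots\land C_n\to\bot$ in $\TO$; both are equally valid.
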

\begin{proof}
Let $\q=(\TO, \psi(\avec{t}))$ be a quasi-positive \OMQ{} in $\LTL\Xallop_\core$ or $\LTL_\horn\Xbox$ with $\avec{t} = (t_1,\dots,t_m)$. If $m=0$, then we set $\psi'(t)= \psi\wedge (t=t)$ and take any rewriting $\rq'(t)$ of $\q' = (\TO,\psi'(t))$, which gives us the rewriting $\rq(t)= \forall t \, \rq'(t)$ of $\q$.

Suppose $m \geq 1$ and $\psi(\avec{t})$ takes the form specified in  Corollary~\ref{cor-mono}.
Since $\psi(\avec{t})$ is monotone, for every $\A$ consistent with $\TO$ and every tuple $\avec{\ell} = (\ell_1,\dots,\ell_m)$ in $\tem(\A)$, we have:
\begin{equation}\label{eq:can-monFO}
\avec{\ell}\in\ans(\q,\A) \quad  \text{ iff } \quad\C_{\TO,\A} \models \psi(\avec{\ell}).
\end{equation}
The implication $(\Rightarrow)$ follows from Theorem~\ref{canonical} (\emph{iii}). To prove $(\Leftarrow)$, observe that, by Theorem~\ref{canonical} (\emph{ii}), $\C_{\TO,\A} \preceq \Mmf$ for every model $\Mmf$ of $(\TO,\A)$, and so, by monotonicity, $\C_{\TO,\A} \models \psi(\avec{\ell})$ implies $\Mmf \models \psi(\avec{\ell})$. Next, observe that, by Theorem~\ref{canonical}~{\rm (\emph{iii})}, there exists an FO-formula $\rq_\bot(t)$ such that $\SA \models \exists t \, \rq_\bot(t)$ iff $(\TO,\A)$ is inconsistent. Indeed, it is sufficient to take as $\rq_\bot(t)$ the FO-rewriting provided by Theorem~\ref{thm:LTLrewritability}  for the \OMPIQ{} $\q_\bot ~=~ \big(\TO, \varkappa_\bot)$, where $\varkappa_\bot$ is a disjunction of all
\begin{equation*}
\Ldiamond \Rdiamond(C_1 \land \dots \land C_n), \qquad \text{ for } C_1 \land \dots \land C_n \to \bot \text{ in } \TO.
\end{equation*}

Recall that $\psi(\avec{t})$ is equivalent to $\bigvee_{l=1}^k \varphi_l(\avec{t})$, where each $\varphi_l(\avec{t})$ takes the form~\eqref{kamp-mono}.
Let $\varkappa_i$ and $\lambda_i$ be positive temporal concepts equivalent to $\alpha_i$ and, respectively, $\beta_i$ (see~\eqref{kamp-mono}) that are different from $\bot$. Let $\rq_i'(t)$ and $\rq_i''(t)$ be the FO-rewritings of the \OMPIQ{}s $(\TO, \varkappa_i)$ and, respectively, $(\TO, \lambda_i)$ provided by Theorem~\ref{thm:LTLrewritability}; if $\alpha_i$ or $\beta_i$ is $\bot$, then we set $\rq_i'(t) = \bot$ or, respectively, $\rq_i''(t)= \bot$. In each $\varphi_l$, we substitute every $\alpha_i$ by $\rq_i'$ and every $\beta_i$ by $\rq_i''$. Let $\varphi_l'$ be the result of the substitution and let $\rq(\avec{t})= \bigvee_{l=1}^k \varphi_l'(\avec{t})$. By~\eqref{eq:can-monFO}, $\rq(\avec{t}) \lor \exists t\, \rq_\bot(t)$ is an FO-rewriting of $\q$. Moreover, by Theorem~\ref{thm:LTLrewritability}, this rewriting is an $\FOE$-formula if $\q$ is an $\LTL\Xallop_\core$ \OMQ{}, and an \mbox{$\FO(<)$}-formula if $\q$ is an $\LTL_\horn\Xbox$ \OMQ{}.
\qed
\end{proof}


\section{OMQs with $\FO(<)$-Queries under the Epistemic Semantics}\label{epistecic}

Instead of generalising \LTL{}-queries in \OMQ{}s to $\MFO(<)$-queries using the standard open-world semantics for ontology-mediated query answering, one can increase the expressive power of \OMQ{}s by using \LTL{}-queries as building blocks for first-order queries under a closed-world semantics. We sketch an implementation of this idea following the approach of~\citeauthor{CalvaneseGLLR07}~\cite{CalvaneseGLLR07} and inspired by the SPARQL~1.1 entailment regimes~\cite{GlimmOgbuji13}. An \emph{epistemic temporal} \emph{ontology-mediated query} (EOMQ, for short) is a pair $\q=(\TO, \psi(\avec{t}))$, in which $\TO$ is an ontology and $\psi(\avec{t})$ a first-order formula built from atoms of the form $\varkappa(t)$, $(t<t')$ and abbreviations $(t=t')$ and $(t=t'+1)$, where $\varkappa$ is a positive temporal concept and the tuple $\avec{t}$ of  the free variables in $\psi(\avec{t})$ comprises the \emph{answer variables} of~$\q$.
Given a data instance $\A$, the FO-component of the EOMQ~$\q$ is evaluated over the structure $\mathfrak{G}_{\TO,\A}$ with domain $\tem(\A)$.
Let $\mathfrak{a}$ be an assignment that
maps variables to elements of $\tem(\A)$.
The relation $\GOA\models^\mathfrak{a} \psi$ (`$\psi$ is true in
$\GOA$ under $\mathfrak{a}$') is formally defined as follows:
\begin{align*}
\GOA\models^\mathfrak{a}\varkappa(t)\quad & \text{ iff }\quad \mathfrak{a}(t)\in \ans(\TO,\varkappa,\A),\\
\GOA\models^\mathfrak{a} t < t' \quad & \text{ iff } \quad \mathfrak{a}(t) < \mathfrak{a}(t'),
\end{align*}
and the standard clauses for the Boolean connectives and first-order quantifiers over $\tem(\A)$.
Let $\avec{t} = (t_1,\dots,t_m)$ be the free variables of
$\psi$. We say that
$\avec{\ell} = (\ell_1,\dots,\ell_m)$ is an \emph{answer} to the \OMQ{} $\q=(\TO,\psi(\avec{t}))$ over $\A$ if $\GOA\models \psi(\avec{\ell})$.

Thus, similarly to the SPARQL~1.1 entailment regimes, we interpret the
(implicit quantifiers in the) \LTL{} formulas of EOMQs in arbitrary temporal models over $\Z$, while the temporal variables of EOMQs range over the active
domain only. The first-order constructs in $\psi$ are interpreted under the  epistemic semantics~\cite{CalvaneseGLLR07}.

\begin{example}\em
Consider the \OMQ{} $\q = (\TO,\psi(t))$, where $\TO$ is the ontology
defined in the introduction and
\begin{equation*}
\psi(t)= \nm{Submission}(t) \wedge \neg \Rdiamond \bigl(\nm{Accept} \vee \nm{Reject}\bigr)(t).
\end{equation*}
Then $\q$ retrieves the submission date of the article if no accept or reject decision has yet been made according to the database and for which no publication
date is in the database.
\end{example}

Let $\lang$ be one of the three classes $\FO(<)$, $\FOE$ or $\FO(\RPR)$. An EOMQ $\q=(\TO, \psi(\avec{t}))$ is called $\lang$-\emph{rewritable} if there is an $\lang$-formula $\rq(\avec{t})$ such that, for any data instance $\A$ and any tuple  $\avec{\ell}$ in $\tem (\A)$, $\avec{\ell}$ is an answer to $\q$ over $\A$ iff~$\SA \models \rq(\avec{\ell})$.
It is readily seen that to construct an FO-rewriting of~$\q$ one can replace all occurrences of OMPIQs in~$\psi$ with their FO-rewritings (if any). Thus we obtain the following variant of a result from~\cite{CalvaneseGLLR07}:
\begin{theorem}\label{th:constitute}
Let $\lang$ be one of $\FO(<)$, $\FOE$ or $\FO(\RPR)$.
If all constituent OMPIQs of an EOMQ $\q$ are $\lang$-rewritable,
then so is $\q$.
\end{theorem}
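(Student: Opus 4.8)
The plan is to obtain a rewriting of $\q=(\TO,\psi(\avec t))$ by literally substituting, into $\psi$, the rewritings of its constituent OMPIQs. Concretely, let $\varkappa_1,\dots,\varkappa_r$ be the finitely many positive temporal concepts occurring as atoms $\varkappa_j(t)$ in $\psi$, and for each $j$ let $\rq_{\varkappa_j}(t)$ be an $\lang$-rewriting of the OMPIQ $(\TO,\varkappa_j)$, which exists by hypothesis. I would define $\rq(\avec t)$ to be the formula obtained from $\psi(\avec t)$ by replacing every atomic subformula $\varkappa_j(t)$ with $\rq_{\varkappa_j}(t)$, leaving the order atoms $(t<t')$, $(t=t')$, $(t=t'+1)$, the Boolean connectives and the quantifiers untouched. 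Since $\rq_{\varkappa_j}$ has the single free variable $t$, this substitution preserves the free variables $\avec t$ of $\psi$; bound variables inside the $\rq_{\varkappa_j}$ are renamed apart from those of $\psi$ and of one another to avoid capture.

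The correctness proof is a straightforward structural induction on $\psi$, for which the crucial point is that $\GOA$ and $\SA$ share the \emph{same} domain $\tem(\A)$ and the same interpretation of $<$. I would fix a data instance $\A$ and an assignment $\mathfrak a$ into $\tem(\A)$ and show $\GOA\models^{\mathfrak a}\psi$ iff $\SA\models^{\mathfrak a}\rq$. For an order atom the two sides agree by the definition of $\GOA$. For an atom $\varkappa_j(t)$ the definition of the epistemic semantics gives $\GOA\models^{\mathfrak a}\varkappa_j(t)$ iff $\mathfrak a(t)\in\ans(\TO,\varkappa_j,\A)$, and the rewriting property of $\rq_{\varkappa_j}$ turns the latter into $\SA\models^{\mathfrak a}\rq_{\varkappa_j}(t)$; this is the only place where the hypothesis is used. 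The Boolean and quantifier cases are immediate precisely because the quantifiers range over the same set $\tem(\A)$ in the two structures. Taking $\mathfrak a$ to send $\avec t$ to a tuple $\avec\ell$ over $\tem(\A)$ then yields that $\avec\ell$ is an answer to $\q$ over $\A$ iff $\SA\models\rq(\avec\ell)$, as required.

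It remains to check that $\rq$ genuinely lies in $\lang$, and this is the one step I would treat with care rather than as entirely routine. For $\lang=\FO(<)$ and $\lang=\FOE$ closure is clear, as each class is closed under Boolean combinations and first-order quantification and contains all the order atoms used as glue. For $\lang=\FO(\RPR)$ the only subtlety is the standing convention that each relation variable of a recursion block $[\dots]$ occurs in exactly one such block: substituting several $\FO(\RPR)$-rewritings into $\psi$ could violate this if two rewritings reuse the same recursion variable, so I would first rename the recursion variables in the $\rq_{\varkappa_j}$ to be pairwise disjoint, after which the nested recursion blocks are independent and the resulting $\rq$ is a well-formed $\FO(\RPR)$-formula. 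This gives the variant of the result of~\cite{CalvaneseGLLR07} referred to in the statement, and no new reasoning about the ontology or about certain answers is needed beyond the OMPIQ rewritings already supplied.
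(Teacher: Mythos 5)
Your proposal is correct and follows exactly the approach the paper takes: the paper's proof consists of the single observation that one can replace all occurrences of the constituent OMPIQs in $\psi$ with their $\lang$-rewritings, which is precisely your substitution construction. Your additional details—the structural induction exploiting that $\GOA$ and $\SA$ share the domain $\tem(\A)$ and the order, and the renaming of recursion variables to respect the $\FO(\RPR)$ convention—are sound elaborations of what the paper leaves as ``readily seen''\!.
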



%

\section{Conclusions and Future Work}

The main contributions of this article are as follows. Aiming to extend the well-developed theory of ontology-based data access (OBDA) to temporal data, we introduced, motivated and systematically investigated the problem of FO-rewritability for temporal ontology-mediated queries based on linear temporal logic \LTL{}.
We classified the OMQs by the shape of their ontology axioms (core, horn, krom or bool)
and also by the temporal operators that can occur in the ontology axioms. We then identified the classes whose OMQs are $\FO(<)$-, $\FOE$- or $\FO(\RPR)$-rewritable by establishing a connection of OMQ answering to various types of finite automata and investigating the structure of temporal models. 

While we believe that the results obtained in this article are of interest in
themselves, our second main aim was to lay the foundations for studying ontology-mediated query answering over not only propositional but also relational timestamped data using two-dimensional OMQs, in which the temporal component is captured by \LTL{} and the domain component by a description logic from the \DL{} family. In fact, in a subsequent article we are going to present a novel technique that allows us to lift the rewritability results obtained above to two-dimensional (\DL{} + \LTL) OMQs with Horn role inclusions. In particular, we show $\FO(<)$-rewritability results for suitable combinations of $\DL_{\core}$ with $\LTL_{\core}^{\Box}$ and $\FOE$-rewritability results for suitable combinations of $\DL_{\core}$ with  $\LTL_\core\Xallop$, in both cases by using results presented in this article.

Many interesting and challenging open research questions, both theoretical and practical, are arising in the context of the results obtained in this article. Below, we briefly mention some of them.
The classical OBDA theory has recently investigated the fine-grained combined and parameterised complexities of OMQ answering and the succinctness problem for FO-rewritings~\cite{DBLP:journals/jacm/BienvenuKKPZ18,DBLP:conf/pods/BienvenuKKPRZ17,DBLP:conf/dlog/BienvenuKKRZ17}. These problems are of great importance for the temporal case, too (in particular, because the presented rewritings are far from optimal). Another development in the classical OBDA theory is the classification of single ontologies and even OMQs according to their data complexity and rewritability~\cite{DBLP:journals/lmcs/LutzW17,DBLP:conf/ijcai/LutzS17,DBLP:conf/pods/HernichLPW17}. Extending this approach to temporal OMQs will most probably require totally different methods because of the linearly ordered temporal domain. 

On the practical side, more real-world use cases are needed to understand which temporal constructs are required to specify relevant temporal events and evaluate the performance of OMQ rewritings; for some activities in this direction, we refer the reader to~\cite{DBLP:conf/aaai/BeckDEF15,DBLP:journals/ws/KharlamovMSXKR19,brandt2017ontology,TIME19}.

\section*{Acknowledgements}

This research was supported by the EPSRC U.K.\ grants EP/S032207 and EP/S032282 for the joint project  `\textsl{quant$^\textsl{MD}$}: Ontology-Based Management for Many-Dimensional Quantitative Data'\!. The authors are grateful to the anonymous referees for their careful reading, valuable comments and constructive suggestions.

%


\bibliographystyle{elsarticle-num-names}
\bibliography{LTL,time17-bib}

\end{document}